\def\RR{{\mathbb R}}
\def\CC{{\mathbb C}}
\def\SS{{\mathbb S}}
\def\B{{\mathcal B}}
\def\C{{\mathcal C}}
\def\H{{\mathcal H}}
\def\K{{\mathcal K}}
\def\M{{\mathcal M}}
\def\P{{\mathcal P}}
\def\R{{\mathcal R}}
\def\U{{\mathcal U}}
\def\V{{\mathcal V}}
\def\a{\alpha}
\def\b{\beta}
\def\e{\varepsilon}
\def\k{\kappa}
\def\l{\lambda}
\def\L{{\mathrm L}}
\def\R{{\mathrm R}}
\def\x{\xi}
\def\uxi{\underline \xi}
\def\uxi{\underline \xi}
\def\ueta{\underline \eta}
\def\Ad{\operatorname{Ad}}
\def\id{{\rm id}}
\def\1{{\mathbbm 1}}
\def\Dom{{\mathscr D}}
\def\res{\operatorname{Res}}
\def\supp{\operatorname{supp}}
\def\<{\langle}
\def\>{\rangle}
\def\hardytwol{{H^2\big(\SS_{-\frac{\pi}3, 0}\big)}}
\def\hardyinfl{{H^\infty\big(\SS_{-\frac{\pi}3, 0}\big)}}
\def\hardytwou{{H^2\big(\SS_{0,\frac{\pi}3}\big)}}
\def\hardytwolf{{H^2(\SS_{-\pi, 0})}}
\def\hardyinflf{{H^\infty(\SS_{-\pi, 0})}}
\def\hardytwouf{{H^2(\SS_{0,\pi})}}
\def\hardyinfuf{{H^\infty(\SS_{0,\pi})}}
\def\striplf{\SS_{-\pi,0}}
\def\poincare{{\P^\uparrow_+}}
\def\halfshift{{\Delta^\frac1{12}}}
\def\shiftone{{\Delta_1^\frac16}}
\def\shiftione{{\Delta_1^{-\frac16}}}
\def\halfshiftone{{\Delta_1^\frac1{12}}}
\def\halfshiftione{{\Delta_1^{-\frac1{12}}}}
\def\fct{{\widetilde \phi}}
\newcommand{\bl}{\mathrm{Bl}}
\newcommand{\tout}{\mathrm{out}}
\newcommand{\tin}{\mathrm{in}}
\newcommand{\dom}{\operatorname{Dom}}
\newcommand{\im}{\operatorname{Im}}
\newcommand{\re}{\operatorname{Re}}
\numberwithin{equation}{section}
\newtheorem{Theorem}{Theorem}[section]
\newtheorem{Corollary}[Theorem]{Corollary}
\newtheorem{Lemma}[Theorem]{Lemma}
\newtheorem{Proposition}[Theorem]{Proposition}
\begin{document}


\newcommand{\arXivNumber}{1602.04696}

\renewcommand{\PaperNumber}{100}

\FirstPageHeading

\ShortArticleName{Bound State Operators and Wedge-Locality in Integrable Quantum Field Theories}

\ArticleName{Bound State Operators and Wedge-Locality\\ in Integrable Quantum Field Theories}

\Author{Yoh TANIMOTO~$^{\dag\ddag}$}

\AuthorNameForHeading{Y.~Tanimoto}

\Address{$^\dag$~Graduate School of Mathematical Sciences, The University of Tokyo,\\
\hphantom{$^\dag$}~3-8-1 Komaba Meguro-ku Tokyo 153-8914, Japan}
\EmailD{\href{mailto:hoyt@ms.u-tokyo.ac.jp}{hoyt@ms.u-tokyo.ac.jp}}
\URLaddressD{\url{http://www.mat.uniroma2.it/~tanimoto/}}

\Address{$^\ddag$~Institut f\"ur Theoretische Physik, G\"ottingen University,\\
\hphantom{$^\ddag$}~Friedrich-Hund-Platz~1, 37077 G\"ottingen, Germany}

\ArticleDates{Received February 19, 2016, in f\/inal form October 10, 2016; Published online October 19, 2016}

\Abstract{We consider scalar two-dimensional quantum f\/ield theories with a factorizing $S$-matrix which has poles in the physical strip. In our previous work, we introduced the bound state operators and constructed candidate operators for observables in wedges. Under some additional assumptions on the $S$-matrix, we show that, in order to obtain their strong commutativity, it is enough to prove the essential self-adjointness of the sum of the left and right bound state operators. This essential self-adjointness is shown up to the two-particle component.}

\Keywords{Haag--Kastler net; integrable models; wedge; von Neumann algebras; Hardy space; self-adjointness}

\Classification{81T05; 81T40; 81U40}

\usetikzlibrary{arrows}

\section{Introduction}
The goal of this series of papers \cite{CT15-1, CT15-2, CT-sine, Tanimoto15-1} is to construct more interacting two-dimensional integrable quantum f\/ield theories in the Haag--Kastler framework. The novelty of the recent progresses is that one f\/irst constructs observables localized in inf\/initely-extended wedge regions, then obtains compactly localized observables in an indirect way. In the preceding paper~\cite{CT15-1}, we constructed candidates for wedge-local observables, however, they have subtle domain properties and their strong commutativity remained open. In this paper, we provide a possible way to settle this question, which is a crucial step towards construction of Haag--Kastler nets.

The main strategy was explained in \cite{CT15-1} and we only brief\/ly summarize it. See also \cite{Lechner15} for an overview of the program. The operator-algebraic framework for quantum f\/ield theory (QFT), or a Haag--Kastler net \cite{Haag96}, concerns the collection of algebras of local observables. As in any mathematical framework for relativistic QFT, it is a very dif\/f\/icult problem to construct interacting examples of Haag--Kastler net. For certain integrable models with prescribed $S$-matrix, Schroer proposed \cite{Schroer99} to construct certain wedge-local observables, then obtain compactly localized observables as the intersection of two wedge-algebras. Some constructions of wedge-local algebras have been obtained for a class of $S$-matrices \cite{Lechner03, LS14, Tanimoto14-1}, as well as operator-algebraic constructions or deformations \cite{Alazzawi14,BT13,BLS11, DT11, Lechner12,Tanimoto12-2, Tanimoto14-1}. The f\/inal step to show the existence of local observables has been completed in certain cases, by showing modular nuclearity \cite{BL04, Lechner08}, or by split property~\cite{Tanimoto14-1}.

In~\cite{Lechner03}, Lechner constructed wedge-local f\/ields for scalar factorizing $S$-matrices without poles in the physical strip. We attempted to extend this construction to the cases with poles in~\cite{CT15-1}, however, certain spectral properties remained unsolved. More precisely, our candidates for wedge-local f\/ields were a slight modif\/ication of the f\/ields of \cite{Lechner03}, but contained a very singular term, the bound state operator, which has been introduced exactly in order to treat the poles. The strong commutativity of these candidates remained unclear. We studied the one-particle component of these operators in detail in \cite{Tanimoto15-1} and found convenient self-adjoint extensions. In the present work, we propose a strategy to prove the strong commutativity between the candidate operators.

Self-adjointness of unbounded operators is in general a hard problem. There are several criteria (e.g., see~\cite{RSII}), but we saw in \cite{CT15-1, Tanimoto15-1} that most of them do not apply to our situation: the reason is that the domain of self-adjointness depends very much on the observables and one cannot f\/ind either a single domain or a single dominating operator. We try to solve this problem by a front attack, namely, we should f\/ind an appropriate common domain for any pair of wedge-observables and show their strong commutativity. Thanks to the Driessler--Fr\"ohlich theorem, the problem of strong commutativity is reduced to the (essential) self-adjointness of the sum of the left and right bound state operators. We show this self-adjointness up to the two-particle component. It is worth mentioning that, because of all these technicalities, the candidate observable $\fct$ is no longer an operator-valued distribution.

This paper is organized as follows. In Section~\ref{preliminaries} we recall our approach to integrable QFT with bound states. Especially, we extend the domain of the bound state operators introduced in \cite{CT15-1} using the convenient extension found in~\cite{Tanimoto15-1}. Then in Section~\ref{towards} it is shown that the strong commutativity of wedge-observable candidates follows from the essential self-adjointness of the sum bound state operator. We do this by combining the Driessler--Fr\"ohlich theorem. With this choice of domain, the covariance of the candidate operators also follows, which allows one to proceed to Borchers triples. In Section~\ref{sa}, we show the essential self-adjointness of the one and two particle components of the sum bound state operator. We also present some observations on $n$-particle components which demonstrates how the poles and zeros in the $S$-matrix af\/fect the analytic properties of the observables.

\section{Preliminaries}\label{preliminaries}

\subsection{Wedge-local nets and Borchers triples}
Let us summarize the construction of \cite{CT15-1}. Our goal is to construct interacting Haag--Kastler nets in two dimensions. As an intermediate step, we need to construct nets of observables localized in wedges, or wedge-local nets (of von Neumann algebras) for short. This notion is equivalent to that of Borchers triples, which we will use in this work. We refer to \cite[Section~2.1]{CT15-1} for an overview of the construction program through Borchers triples.

A (two-dimensional) \textit{Borchers triple} $(\M,U,\Omega)$ consists of a von Neumann algebra $\M$, a~unitary representation of the proper orthochronous Poincar\'e group $\poincare$ and a vector $\Omega$ which satisfy the following conditions (see \cite[Section~2.1]{CT15-1}):
\begin{itemize}\itemsep=0pt
 \item $\Omega$ is cyclic and separating for $\M$;
 \item the restriction of $U$ to the translation subgroup $\RR^2$ has the joint spectrum included in $\overline{V_+} = \{(a_0,a_1)\colon a_0 \ge |a_1|\}$;
 \item $\M$ is covariant with respect to $U$, namely, $\Ad U(g)\M \subset \M$ for $g\in \poincare$ which preserves the standard right-wedge $W_\R$.
\end{itemize}
In \cite{CT15-1}, we constructed for test functions $f$, $g$ supported in $W_\L$, $W_\R$ respectively, unbounded operators $\fct(f)$, $\fct'(g)$ which commute weakly on a dense domain. We also showed that, if $\fct(f)$ and $\fct'(g)$ commute strongly, then $\Omega$ is cyclic and separating for $\M := \big\{e^{i\fct'(g)}\colon \supp g\big\}''$ (strong commutativity is important for the separating property). Furthermore, if we can show the covariance of $\fct$ with respect to $U$, then the covariance of the Borchers triple follows and the construction is complete. We will investigate strict locality, namely whether this Borchers triple def\/ines a Haag--Kastler net, in a separate paper~\cite{CT15-3} (actually, for this purpose the def\/inition of~$\fct$ must be slightly extended as we will do in Section~\ref{one}). Therefore, our most urgent problem is the strong commutativity.

We have $\fct(f) = \phi(f) + \chi(f)$, where $\phi(f)$ is very well under control, while $\chi(f)$ is a new sort of operator and we investigated its one-particle component $\chi_1(f)$ in detail. We review the construction of these operators below.

\subsection[Factorizing $S$-matrix with poles]{Factorizing $\boldsymbol{S}$-matrix with poles}\label{s-matrix}
As an input, we f\/ix a {\it two-particle scattering function} $S$,
which is at f\/irst a meromorphic function on $\RR + i(0,\pi)$, the region we call the {\it physical strip},
and later extended to $\CC$ which satisf\/ies the following properties:
\begin{enumerate}[{(S}1{)}]\itemsep=0pt
 \item\label{ax:unitarity} {\it Unitarity.} $S(\theta)^{-1} = \overline{S(\theta)}$ for $\theta \in \RR$.
 \item\label{ax:hermitian} {\it Hermitian analyticity.} $S(-\theta) = S(\theta)^{-1}$ for $\theta \in \RR$.
 \item\label{ax:crossing} {\it Crossing symmetry.} $S$ has a meromorphic extension in $\CC$ and satisf\/ies $S(\zeta) = S(\pi i - \zeta)$.
 \item\label{ax:bootstrap} {\it Bootstrap equation.} $S(\zeta) = S\big(\zeta + \frac{\pi i}3\big)S\big(\zeta - \frac{\pi i}3\big)$.
 \item\label{ax:residue} {\it Positive residue.} $S$ has a simple pole at $\frac{2\pi i}3$ and $R := {\underset{\zeta = \frac{2\pi i}3} \res S(\zeta) \in i\RR_+}$. Except this and the pole at $\frac{\pi i}3$, there is no pole in the physical strip $\RR + i(0,\pi)$ and $S(\zeta)$ is bounded in the complement of the union of neighborhoods of these poles in the physical strip.
 \item\label{ax:atzero} {\it Value at zero.} $S(0) = -1$.
 \item\label{ax:re} {\it A bound on the real part.} $0 \le \re S\big(\theta + \frac{2\pi i}3\big)$ $\big({=} \re S\big(\theta + \frac{\pi i}3\big)\big)$. 
 \item \label{ax:bound} \textit{A bound on the modulus.} $\big|S\big(\theta + \frac{\pi i} 6\big)\big| \le 1$.
 \item \label{ax:regularity} \textit{Regularity.} $S$ has only f\/initely many zeros in the physical strip and
 there is $\k > 0$ such that $\|S\|_\k := \sup\{|S(\zeta)|\colon \zeta \in \RR + i(-\k,\k)\} < \infty$.
\end{enumerate}
We showed in~\cite{CT15-1} that~(S\ref{ax:atzero}) follows from (S\ref{ax:unitarity})--(S\ref{ax:residue}). It turns out crucial in the consideration of domains of our operators. We also classif\/ied all functions which satisfy these conditions \cite[Appendix~A]{CT15-1}. We show that (S\ref{ax:bound}) follows from (S\ref{ax:unitarity})--(S\ref{ax:residue}) in Appendix~\ref{bound}.

(S\ref{ax:re}) and (S\ref{ax:regularity}) are new requirements. The property (S\ref{ax:re}) had not appeared either in the literature on form factor program or in our previous works \cite{CT15-1, CT15-2}. Yet, it is necessary in order to apply our methods, for some properties of the sum bound state operators. The last assump\-tion~(S\ref{ax:regularity}) will be necessary also in the proof of modular nuclearity (cf.~\cite{Lechner08}), and we use it here already for wedge-observables, in Lemma~\ref{lm:symmetric-cont}. With the f\/initeness of zeros in the physical strip, the f\/initeness of $\|S\|_\k$ is equivalent to the absence of the singular part in~$S$ (cf.~\cite[Appendix~A]{LW11}). From the condition~(S\ref{ax:atzero}), there must be at least one zero on the interval $i(0,\frac\pi 3)$, hence $\k < \frac\pi 3$, as we saw in~\cite{CT15-1}.

We check in Appendix~\ref{existenceS} that there are concrete examples of $S$ which satisfy all these requirements. In the following, we mark explicitly the points where~(S\ref{ax:re}) is needed.

\subsection{The wedge-local observables}\label{fields}
Some materials here are parallel to those of \cite{Lechner03}. Assume that our model has only a single species of particle with mass $m > 0$. We f\/ix a two-particle scattering function $S$, which satisf\/ies the properties explained in Section~\ref{s-matrix}.

Our Hilbert space is as follows. The one-particle space $\H_1$ is simply $L^2(\RR)$ with the Lebesgue measure. We consider $n$-particle space $\H_1^{\otimes n}$. An element of $\H_1^{\otimes n}$ can be considered as a function in $L^2(\RR^n)$. On each of these spaces, there is the representation of the symmetric group $\mathfrak{S}_n$:
\begin{gather*}
 (D_n(\tau_k)\Psi_n)(\theta_1,\dots,\theta_n) := S(\theta_{k+1}-\theta_k)\Psi_n(\theta_1,\dots, \theta_{k+1}, \theta_k,\dots, \theta_n),
\end{gather*}
where $\tau_k$ is the transposition of $k$ and $k+1$. This indeed extends to a unitary representation of $\mathfrak{S}_n$. We say that $\Psi_n \in \H_1^{\otimes n}$ is {\it $S$-symmetric} if $D_n(\sigma)\Psi_n = \Psi_n$ for $\sigma \in \mathfrak{S}_n$. This notion is equivalent to that for each~$k$
\begin{gather*}
 \Psi_n(\theta_1,\dots, \theta_n) = S(\theta_{k+1} - \theta_k)\Psi_n(\theta_1, \dots, \theta_{k+1},\theta_k,\dots, \theta_n).
\end{gather*}
We denote the orthogonal projection onto the $S$-symmetric functions by $P_n$ and the space of $S$-symmetric functions by $\H_n = P_n\H_1^{\otimes n}$. The physical Fock space is the direct sum
\begin{gather*}
 \H = \bigoplus_n \H_n,
\end{gather*}
where $\H_0 = \CC$. Let us also introduce an auxiliary Hilbert space, the unsymmetrized Fock space, which is $\H^\Sigma = \bigoplus_n \H_1^{\otimes n}$. By putting $P_S = \bigoplus_n P_n$, we have $\H = P_S\H^\Sigma$.

For $\psi \in \H_1$, the unsymmetrized creation and annihilation operators are def\/ined on the algebraic direct sum of $\H_1^{\otimes n}$ by
\begin{gather*}
 \big(a^\dagger(\psi)\Psi\big)_n(\theta_1,\dots, \theta_n) = \sqrt{n} \psi(\theta_1)\Psi_{n-1}(\theta_2,\dots, \theta_n),\qquad
 a(\psi) = \big(a^\dagger(\psi)\big)^*.
\end{gather*}
By our convention $a^\dagger(\psi)$ is linear in $\psi$, while $a(\psi)$ is antilinear. The $S$-symmetrized creation and annihilation operators are $z^\dagger(\psi) = P_Sa^\dagger(\psi)P_S$, $z(\psi) = P_Sa(\psi)P_S$.

The one-particle CPT operator $J_1$ is given by $(J_1\psi)(\theta) = \overline{\psi(\theta)}$, and the $n$-particle compo\-nent~$J_n$ is
\begin{gather*}
 (J_n\Psi_n)(\theta_1,\dots, \theta_n) = \overline{\Psi_n(\theta_n,\dots, \theta_1)}.
\end{gather*}
It is easy to see that $J_n$ commutes with $P_n$, hence its preserves the subspace $\H_n$. The full CPT operator is $J = \bigoplus_n J_n$, where $J_0$ is the complex conjugation on $\H_0 = \CC$.

For a test function $f$ in $\RR^2$, we def\/ine
\begin{gather*}
 f^\pm(\theta) = \int dx\, f(x)e^{\pm ip(\theta)\cdot x},\qquad
 p(\theta) = m\left(\begin{matrix} \cosh \theta \\ \sinh \theta \end{matrix}\right).
\end{gather*}
As $f$ is compactly supported, its Fourier transform decays fast, and so do~$f^\pm$, therefore, they belong to~$\H_1$. On the level of test functions, we def\/ine the CPT transformation by $f_j(x) = \overline{f(-x)}$.

If $S$ had no pole, then Lechner took the f\/ield $\phi(f) = z^\dagger(f^+) + z(J_1 f^-)$ and the ref\/lected f\/ield $\phi'(g) = z'^\dagger(g^+) + z'(J_1g^+)$, where $z'^\dagger(\psi) = Jz^\dagger(J_1\psi)J$ and $z'(\psi) = Jz(J_1\psi)J$ and proved that they were wedge-local~\cite{Lechner03}: namely, if $\supp f \subset W_\L$ and $\supp g \subset W_\R$, then $\phi(f)$ and $\phi'(g)$ strongly commute.

\subsubsection*{Extension to Hardy space vectors}

Actually, it is possible and also natural to consider $\phi$, $z^\dagger$ and~$z$ as a function of $f^+$, and in this sense we can extend them to any $\xi \in \hardytwouf\cap \hardyinfuf$. For $a \le 0 \le b$, recall the def\/inition of the Hardy space:
\begin{gather*}
 H^2(\SS_{a,b}) = \left\{\xi \in \H_1\colon \xi(\theta) \mbox{ has an } L^2\mbox{-bounded analytic continuation to } \SS_{a,b}\right\}.
\end{gather*}
In other words, a function $\xi \in H^2(\SS_{a,b})$ is an analytic function on $\SS_{a,b}$ whose $L^2$-norms on horizontal lines $\xi(\,\cdot + i\l)$ are bounded uniformly for $\l \in (a,b)$. It is well-known that for $\l \to a,b$ the limits $\xi(\,\cdot + i\l)$ exist in the $L^2$-sense \cite[Corollary~III.2.10]{SW71}. As $0\in [a,b]$, we can idenf\/ity $\H_1 = L^2(\RR)$ as a subspace of $H^2(\SS_{a,b})$ by considering the value $\xi(\theta)$, $\theta \in \RR$. We denote the $L^2$-norm by~$\|\xi\|$, while the Hardy space norm is $\|\xi\|_2 := \sup\limits_{\l \in (a,b)}\|\xi(\,\cdot + i\l)\|$.

Similarly, $H^\infty(\SS_{a,b})$ is the space of bounded analytic functions on $\SS_{a,b}$ and we def\/ine the norm $\|\xi\|_\infty := \sup\limits_{\zeta \in \SS_{a,b}}|\xi(\zeta)|$.

The operators $z^\dagger(\xi)$ and $z(\xi)$ are def\/ined for vectors $\xi \in \H_1$, hence with the understanding that $\H_1 = L^2(\RR) \supset \hardytwouf$ by taking the boundary value at $\im \zeta = 0$, $z^\dagger$ and~$z$ are also def\/ined on $\hardytwouf$. The reality condition $f(a) = \overline{f(a)}$ on a test function $f$ translates to an element $\xi \in \hardytwouf$ as $\xi(\zeta + i\pi) = \overline{\xi(\zeta)}$. It is straightforward to see that the proof of wedge-locality for~$S$ without pole~\cite{Lechner03} works for such $\xi$'s: the Cauchy theorem works for $H^2$-functions (see, e.g., \cite[Lemma~B.2]{CT15-1}).

As we are interested in cases where $S$ has poles, we introduce the bound state operator $\chi(\xi)$ (cf.~in~\cite{CT15-1} we considered it as associated with $f^+$, instead of $\xi$). We associate an operator $\chi(\xi)$ to $\xi \in \hardytwouf\cap\hardyinfuf$ which satisf\/ies $\overline{\xi(\theta)} = \xi(\theta + \pi i)$ by
\begin{gather*}
 (\chi_1(\xi)\Psi)(\theta) := \sqrt{2\pi |R|}\xi\left(\theta + \frac{\pi i}3\right)\Psi\left(\theta - \frac{\pi i}3\right), \\
 \chi_n(\xi) := nP_n \left(\chi_1(\xi)\otimes\1\otimes\cdots \otimes\1\right)P_n, \qquad
 \chi(\xi) := \bigoplus_n \chi_n(\xi),
\end{gather*}
where $\chi_0(\xi) = 0$ and $R = \underset{\zeta = \frac{2\pi i}3} \res S(\zeta)$. Of course, $\chi_1(\xi)$ cannot be def\/ined on arbitrary vectors, as it involves an analytic continuation of $\Psi$. In Section~\ref{one}, we will specify a domain of self-adjointness of $\chi_1(\xi)$
and def\/ine~$\chi_n(\xi)$ accordingly. It follows that $\chi(\xi)$ is symmetric as in \cite[Proposition~3.1]{CT15-1}. For $\eta \in \hardytwolf\cap\hardyinflf$, $\overline{\eta(\theta)} = \eta(\theta-\pi i)$ we introduce also
\begin{gather*}
 (\chi'_1(\eta)\Psi)(\theta) := \sqrt{2\pi |R|}\eta\left(\theta - \frac{\pi i}3\right)\Psi\left(\theta + \frac{\pi i}3\right), \\
 \chi_n(\eta) := nP_n \left(\1\otimes\cdots \otimes\1\otimes \chi'_1(\eta)\right)P_n, \qquad
 \chi'(\eta) := \bigoplus_n \chi'_n(\eta).
\end{gather*}
It holds that $\chi'_1(\eta) = J_1\chi_1(J_1\eta)J_1$, $\chi'_n(\eta) = J_n\chi_n(J_1\eta)J_n$ and $\chi'(\eta) = J\chi(J_1\eta)J$.

Our candidate operators for wedge-observables are $\fct(\xi) := \phi(\xi) + \chi(\xi)$ and $\fct'(\eta) := \phi'(\eta) + \chi'(\eta)$. We will show that they weakly commute on the intersection of their domains and propose a way to prove their strong commutativity.

\subsection{The bound state operator}\label{one}
We studied the one-particle component $\chi_1(\xi)$ in detail in \cite{Tanimoto15-1}: actually, we studied the domain $\hardytwolf$ rather than $\hardytwol$, but the adaptation is straightforward. The important lessons obtained there are that $\chi_1(\xi)$, when def\/ined on $\hardytwol$, does not always have a nice self-adjoint extension and even if it does, its domain varies as $\xi$ varies. This is due to the fact that the def\/iciency subspaces of $\chi_1(\xi)$ is closely related with the zeros and the decay rate of $\xi$, and $\chi_1(\xi)$ does not have any self-adjoint extension if~$\xi$ has an odd number of zeros and does not decay suf\/f\/iciently fast as $\re \zeta \to \pm \infty$. And even if $\xi$ has an even number of zeros, it is not obvious which extension is the suited one for our purpose of f\/inding quantum observables. Therefore, we restrict ourselves to the certain subclass of functions.

Recall that one can def\/ine the analytic continuation operator\footnote{We use the symbol $\Delta$, because if we could prove the strong commutativity we will be discussing in this paper, then $\Delta_1$ will turn out to be the one-particle component of the modular operator
of the wedge-algebra with respect to the vacuum~\cite{CT15-3}.}
\begin{gather*}
 \dom\big(\shiftone\big) = \hardytwol, \qquad \big(\shiftone\Psi\big)(\theta) = \Psi\left(\theta - \frac{\pi i}3\right),
\end{gather*}
namely, any vector $\Psi \in \hardytwol$ has an $L^2$-boundary value $\Psi(\theta)$ and $\Psi(\theta - \frac{\pi i}3)$ and it is considered as an element of~$L^2(\RR)$ by taking~$\Psi(\theta)$, while~$\shiftone\Psi$ is def\/ined to be the other boundary value $\Psi(\theta - \frac{\pi i}3)$. This is self-adjoint, indeed unitarily equivalent to the multiplication by the exponential function through Fourier transform \cite[Appendix~A]{Tanimoto15-1}.

A property which we will use repeatedly is the ``Cauchy theorem'' in the following form: since~$\Delta^{\frac t{2\pi}}$ is self-adjoint on $H^2(\SS_{-t,0})$, we have $\<\Phi, \Delta^{\frac t{2\pi}}\Psi\> = \<\Delta^{\frac t{2\pi}}\Phi, \Psi\>$ for $\Phi, \Psi \in H^2(\SS_{-t,0})$. By noting that $\overline{\Phi} \in H^2(\SS_{0,t})$, where $\overline{\Phi}(z) := \overline{\Phi(\overline z)}$, we can write this as
\begin{gather*}
 \<\Phi, \Delta^{\frac t{2\pi}}\Psi\> = \int d\theta\,\overline{\Phi(\theta)}\Psi(\theta - it)
 = \int d\theta\,\overline{\Phi}(\theta)\Psi(\theta - it)
 = \int d\theta\,\overline{\Phi}(\theta + it)\Psi(\theta) \\
\hphantom{\<\Phi, \Delta^{\frac t{2\pi}}\Psi\>}{} = \int d\theta\,\overline{\Phi(\theta - t)}\Psi(\theta) = \<\Delta^{\frac t{2\pi}}\Phi, \Psi\>,
\end{gather*}
namely, we can shift the argument in the integral. An analogous formula is valid even if the functions have more variables, by considering the operator $\Delta^{\frac t{2\pi}}\otimes \1$ with the identity operator~$\1$ on the complementary Hilbert space. To apply this, it is important to check that the involved vectors belong to the domain of the operator $\Delta^{\frac t{2\pi}}\otimes \1$.

We learned in \cite[Section 5.1]{Tanimoto15-1} that, if $\xi$ is the square of another analytic function $\xi = \uxi^2$, $\uxi \in \hardytwouf\cap\hardyinfuf$ and if $\uxi$ satisf\/ies the reality condition $\uxi(\zeta + i\pi) = \overline{\uxi(\zeta)}$ (which implies that $\uxi(\zeta + \frac{\pi i}3) = \overline{\uxi(\zeta + \frac{2\pi i}3)}$), then $\chi_1(\xi)$ has a canonical self-adjoint extension, which we denote by the same symbol for simplicity. More precisely, for $\xi$ as above, we can f\/ind a function $\xi_0 \in H^\infty(\SS_{\frac\pi 3,\frac{2\pi}3})$ such that $|\xi_0(\theta + \frac{2\pi i}3)| = 1$ (almost everywhere) and
\begin{gather}\label{eq:consistency}
 \xi_0\left(\theta + \frac{\pi i}3\right)\overline{\xi_0\left(\theta + \frac{2\pi i}3\right)} = \sqrt{2\pi|R|}\xi\left(\theta + \frac{\pi i}3\right).
\end{gather}
Now, the multiplication operator $M_{\xi_0(\cdot\, + \frac{2\pi i}3)}$, which is unitary, corresponds to the unitary ope\-ra\-tor of \cite[Theorem~5.1]{Tanimoto15-1} (with the notation replacing $\xi$ with $f$ and the scaling between the domains $\hardyinfl$ and $\hardyinflf$).
This unitary operator $M_{\xi_0(\cdot\, + \frac{2\pi i}3)}$ implements the unitary equivalence between $\shiftone$ and our bound state operator
\begin{gather*}
 \dom(\chi_1(\xi)) := M_{\xi_0(\cdot\, + \frac{2\pi i}3)}^*\hardytwol, \\
 (\chi_1(\xi)\Psi)(\theta) := \sqrt{2\pi |R|}\xi\left(\theta + \frac{\pi i}3\right)\Psi\left(\theta - \frac{\pi i}3\right),
\end{gather*}
or in other words, $ \chi_1(\xi) = M_{\xi_0(\cdot\, + \frac{2\pi i}3)}^*\shiftone M_{\xi_0(\cdot\, + \frac{2\pi i}3)}$, which follows straightforwardly from the property~\eqref{eq:consistency}. The point is that $\xi_0(\theta+\frac{2\pi i}3)\Psi(\theta) \in \hardytwol$ and the analytic continuation $\theta \to \theta + \frac{\pi i}3$ does not see any pole or non-$L^2$ behavior on this domain.

The $n$-particle component $\chi_n(\xi)$ is def\/ined by $\chi_n(\xi) := P_n(\chi_1(\xi)\otimes\1\otimes\cdots\otimes\1)P_n$. This means that $\dom(\chi_n(\xi)) = \{\Psi_n \in \H_n\colon \Psi_n \in \dom(\chi_1(\xi)\otimes\1\otimes\cdots\otimes\1)\}$. As the expression of $\chi_1(\eta)$ is same as~\cite{CT15-1} (although the domain is larger), we have also the following alternative expression (we can do this because the continuation
$\theta \to \theta + \frac{\pi i}3$ is always accompanied by the multiplication by $\xi_0(\theta + \frac{2\pi i}3)$, hence one does not encounter non-$L^2$ problem by def\/inition):
\begin{gather}\label{eq:chi-n}
 (\chi_n(f)\Psi_n)(\theta_1,\dots,\theta_n)\\
 = \sqrt{2\pi|R|}\sum_{1\le k \le n}
 \prod_{1\le j \le k-1} S\left(\theta_k-\theta_j + \frac{\pi i}3\right)
 \xi\left(\theta_k + \frac{\pi i}3\right)
 \Psi_n\left(\theta_1,\dots,\theta_k - \frac{\pi i}3,\dots,\theta_n\right).\nonumber
\end{gather}

We can rephrase all this as follows: the domain of $\chi_1(\xi)$ consists of functions $\Psi_1$ such that $\xi_0(\theta + \frac{2\pi i}3)\Psi_1(\theta)$ is the boundary value of an element in $\hardytwol$. The domain of $\chi_n(\xi)$ consists of functions $\Psi_n$ such that $P_n\Psi_n$ belongs to the domain of $\chi_1(\xi)\otimes\1\otimes\cdots\otimes\1$. In particular, $\xi_0(\theta_1 + \frac{2\pi i}3)(P_n\Psi_n)(\theta_1,\dots,\theta_n)$ can be viewed as an $L^2(\RR^{n-1})$-valued analytic function in $\theta_1$. By $S$-symmetry, $\xi_0(\theta_k + \frac{2\pi i}3)(P_n\Psi_n)(\theta_1,\dots,\theta_n)$ is also analytic in other variables $\theta_k$, but as the $S$-factor has poles, hence it belong to $\1\otimes\cdots\otimes\1\otimes\underset{k\text{-th}}{\Delta_1^\frac{\k}{2\pi}}\otimes\1\otimes\cdots\otimes\1$, where the boundedness of $S$ is assured in $\RR + i(-\k,\k)$ as~(\ref{ax:regularity}).

Similarly, for each $\eta = \ueta^2$ where $\ueta\in\hardytwolf\cap\hardyinflf$ such that $\ueta(\theta - i\pi) = \overline{\ueta(\theta)}$, there is $\eta_0 \in H^\infty(\SS_{-\frac{2\pi}3, -\frac{\pi}3})$ such that $|\eta_0(\theta - \frac{2\pi i}3)| = 1$ and
\begin{gather*}
 \eta_0\left(\theta - \frac{\pi i}3\right)\overline{\eta_0\left(\theta - \frac{2\pi i}3\right)} = \sqrt{2\pi|R|}\eta\left(\theta - \frac{\pi i}3\right).
\end{gather*}
We def\/ine
\begin{gather*}
 \dom(\chi'_1(\eta)) = M_{\eta_0(\cdot\, - \frac{2\pi i}3)}^*\hardytwou, \\
 (\chi'_1(\eta)\Psi)(\theta) := \sqrt{2\pi |R|}\eta\left(\theta - \frac{\pi i}3\right)\Psi\left(\theta + \frac{\pi i}3\right),
\end{gather*}
where $M_{\eta_0(\cdot\, - \frac{2\pi i}3)}$ is unitary, and this is equivalent to $\chi'_1(\eta) = M_{\eta_0(\cdot\, - \frac{2\pi i}3)}^*\shiftione M_{\eta_0(\cdot\, - \frac{2\pi i}3)}$. This operator $\chi'(\eta)$ has an alternative expression as $\chi(f)$ does:
\begin{gather*}
 (\chi'_n(\eta)\Psi_n)(\theta_1,\dots,\theta_n) = \sqrt{2\pi|R|}\sum_{1\le k \le n} \prod_{n-k+2 \le j \le n}\! S\left(\theta_j- \theta_{n-k+1} + \frac{\pi i}3\right)
 \eta\left(\theta_{n-k+1} - \frac{\pi i}3\right) \\
\hphantom{(\chi'_n(\eta)\Psi_n)(\theta_1,\dots,\theta_n) =}{} \times\Psi_n\left(\theta_1,\dots,\theta_{n-k+1} + \frac{\pi i}3,\dots,\theta_n\right).
\end{gather*}

\section{Towards strong commutativity}\label{towards}
In this Section we show that, in order to construct the Borchers triple associated with $S$ with poles, it is enough to prove that $\chi(\xi)+\chi'(\eta)$ is essentially self-adjoint.

\subsection{A general criterion}
This Section is technically independent from QFT.

For unbounded operators, strong commutativity is a hard problem. It may fail even for a~two self-adjoint operators which commute on a common invariant core \cite[Section~10]{Nelson59}, \cite[Section~VIII.5, Example~1]{RSII}. We also exhibit more examples of weakly-commuting but not strongly-commuting operators in Appendix~\ref{counterexample}. A suf\/f\/icient condition which is used very widely in the context of QFT (e.g.,~\cite{ BT13, CKLW15, GJ87}) is the Driessler--Fr\"ohlich theorem~\cite{DF77}. The theorem says roughly that, if there is a positive self-adjoint operator~$T$ (``Hamiltonian'') which can nicely bound the weakly commuting symmetric operators~$A$,~$B$, and their commutators with~$T$, then~$A$ and~$B$ are actually self-adjoint and commute strongly (see Appendix~\ref{DF} for the precise statements).

The trouble in our situation is that the physical Hamiltonian is not strong enough to estimate our candidates which contain the bound state operator $\chi(\xi)$: the Hamiltonian is the multiplication operator, while~$\chi(\xi)$ contains an analytic continuation. We have no idea for any other operator which dominates $\chi(\xi)$ ``nicely'', and anyway there cannot be a single operator which bounds all $\chi(\xi)$'s with dif\/ferent $\xi$'s, as $\chi_1(\xi)$'s have already dif\/ferent domains of self-adjointness. In such a case, the question of strong commutativity is considerably dif\/f\/icult (cf.~\cite{CKLW15}).

Yet, the Driessler--Fr\"ohlich theorem can be used to reduce the problem of strong commutativity to self-adjointness of a certain positive operator.
\begin{Proposition}\label{pr:general}
Let $A$, $B$ and $T_0$ be symmetric operators. Assume that $T:=A+B+T_0 \ge \1$ and is essentially self-adjoint on its natural domain $\dom(T) := \dom(A)\cap\dom(B)\cap\dom(T_0)$, and there are $\Dom \subset \dom(T)$ which is a core of $T$ and $c_1,c_2,c_3 > 0$ such that
 \begin{enumerate}[{\rm (1)}]\itemsep=0pt
 \item\label{gn1} $\|A\psi\| \le c_1\|T\psi\|$ and $\|B\psi\| \le c_1\|T\psi\|$ for all $\psi \in \Dom$,
 \item\label{gn2} for all $\varphi,\psi \in \Dom$,
 \begin{itemize}\itemsep=0pt
 \item $|\<T_0\psi,A\varphi\> - \<A\psi, T_0\varphi\>| \le c_2\big\|T^\frac12\psi\big\|\cdot \big\|T^\frac12\varphi\big\|$,
 \item $|\<T_0\psi,B\varphi\> - \<B\psi, T_0\varphi\>| \le c_2\big\|T^\frac12\psi\big\|\cdot \big\|T^\frac12\varphi\big\|$,
 \end{itemize}
 \item\label{gn3} for all $\varphi,\psi \in \Dom$,
 \begin{itemize}\itemsep=0pt
 \item $|\<T_0\psi,A\varphi\> - \<A\psi, T_0\varphi\>| \le c_3 \|\psi\|\cdot \|T\varphi\|$,
 \item $|\<T_0\psi,B\varphi\> - \<B\psi, T_0\varphi\>| \le c_3 \|\psi\|\cdot \|T\varphi\|$,
 \end{itemize}
 \item\label{gn4} $\<A\psi, B\varphi\> = \<B\psi, A\varphi\>$ for all
 $\psi, \varphi \in \Dom$.
 \end{enumerate}
 Then $A$ and $B$ are essentially self-adjoint on any core of $T$ and they strongly commute.
\end{Proposition}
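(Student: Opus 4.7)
The plan is to reduce the statement to the Driessler--Fr\"ohlich theorem, which is the paper's only available tool for converting a positive-operator bound together with commutator estimates into genuine self-adjointness and strong commutativity. The key observation is that, even though the hypotheses of Driessler--Fr\"ohlich demand estimates on the commutators $[T,A]$ and $[T,B]$, here the dominating operator $T = A + B + T_0$ itself contains $A$ and $B$, and the weak commutativity~(\ref{gn4}) together with the trivial vanishing of $[A,A]$ and $[B,B]$ will collapse these commutators to commutators with $T_0$ alone, which are exactly what~(\ref{gn2}) and~(\ref{gn3}) estimate.

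Concretely, I would first close $T$ to a positive self-adjoint operator $\overline{T} \ge \1$; by hypothesis $\Dom$ is a core for $\overline{T}$. For any $\psi,\varphi \in \Dom$ I would then expand
\[
\<T\psi,A\varphi\> - \<A\psi,T\varphi\> = \bigl(\<A\psi,A\varphi\>-\<A\psi,A\varphi\>\bigr) + \bigl(\<B\psi,A\varphi\>-\<A\psi,B\varphi\>\bigr) + \bigl(\<T_0\psi,A\varphi\>-\<A\psi,T_0\varphi\>\bigr),
\]
notice that the first bracket vanishes trivially and the second by~(\ref{gn4}), and apply~(\ref{gn2}) and~(\ref{gn3}) to bound the surviving remainder by $c_2\|T^{\frac12}\psi\|\,\|T^{\frac12}\varphi\|$ and by $c_3\|\psi\|\,\|T\varphi\|$ respectively. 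The analogous identity with $A$ and $B$ exchanged yields the corresponding bounds for the commutator of $T$ with $B$. Combined with the relative boundedness~(\ref{gn1}), this verifies on the core $\Dom$ the full list of hypotheses of the self-adjointness part of Driessler--Fr\"ohlich for the pair $(T,A)$, and independently for $(T,B)$. Hence $A$ and $B$ are essentially self-adjoint, and moreover essentially self-adjoint on every core of $T$.

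For strong commutativity I would then invoke the version of Driessler--Fr\"ohlich tailored to a pair of symmetric operators weakly commuting on a common core, using~(\ref{gn4}) as the weak commutativity input and the same bounds just derived. Its conclusion is precisely that the self-adjoint closures of $A$ and $B$ commute strongly, which is the second assertion.

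There is no real analytic obstacle; the only point requiring care is making sure every inner product above is legal, i.e.\ that the arguments lie in $\dom(A)\cap\dom(B)\cap\dom(T_0)$. This is automatic since $\Dom\subset\dom(T)=\dom(A)\cap\dom(B)\cap\dom(T_0)$ by definition of the natural domain of $T$. In effect the proposition is a repackaging of Driessler--Fr\"ohlich designed to be applicable in our setting, where no a priori ``Hamiltonian'' dominating $\chi(\xi)$ exists and one is forced to manufacture the dominating operator out of the very operators whose strong commutativity one wants to prove.
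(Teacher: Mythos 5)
Your argument is correct and follows the paper's own route: expand $\<T\psi,A\varphi\>-\<A\psi,T\varphi\>$ using $T=A+B+T_0$, cancel the $A$--$A$ term trivially and the $A$--$B$ term by the weak commutativity (\ref{gn4}), bound the remaining $T_0$-commutator by (\ref{gn2}) and (\ref{gn3}), and then feed these bounds together with (\ref{gn1}) into the Driessler--Fr\"ohlich theorem with $\overline{T}$ as the dominating positive self-adjoint operator. This is exactly the paper's proof, so no further comment is needed.
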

\begin{proof}
 As we have $\<A\psi, B\varphi\> = \<B\psi, A\varphi\>$ by assumption and $T=A+B+T_0$ by def\/inition, it follows that for all $\varphi,\psi \in \Dom$
 \begin{itemize}\itemsep=0pt
 \item $|\<T\psi,A\varphi\> - \<A\psi, T\varphi\>| \le c_2\big\|T^\frac12\psi\big\|\cdot \big\|T^\frac12\varphi\big\|$,
 \item $|\<T\psi,B\varphi\> - \<B\psi, T\varphi\>| \le c_2\big\|T^\frac12\psi\big\|\cdot \big\|T^\frac12\varphi\big\|$,
 \item $|\<T\psi,A\varphi\> - \<A\psi, T\varphi\>| \le c_3 \|\psi\|\cdot \|T\varphi\|$,
 \item $|\<T\psi,B\varphi\> - \<B\psi, T\varphi\>| \le c_3 \|\psi\|\cdot \|T\varphi\|$.
 \end{itemize}
Hence we can apply Theorem \ref{th:commutator}, by taking $\overline{T}$ as the positive self-adjoint operator.
\end{proof}

The point of this proposition is that, instead of using a single ``Hamiltonian'' for all cases, we take the operator $T=A+B+T_0$ which depends on~$A$ and~$B$. Of course, as weak commutativity does not imply strong commutativity, the essential self-adjointness hypothesis of~$T$ and the estimate of~$A$,~$B$ by $T$ are crucial.

We have the following case in mind: $A = \fct(\xi)$, $B = \fct'(\eta)$, $T_0 = c(N+\1)$, where $N$ is the number operator on the $S$-symmetric Fock space and $c \in \RR_+$. Furthermore, we will see in Section~\ref{strong} that $T = A + B + T_0$ is a Kato--Rellich perturbation of the operator $\chi(\xi) + \chi'(\eta) + c(N+\1)$. This last operator preserves each $n$-particle subspace, and the question of (essential) self-adjointness is reduced to each~$n$.

This self-adjointness is far from obvious. By a slight modif\/ication, the self-adjointness of a~similar operator easily fails (see Appendix~\ref{ce-sum}).

\subsection{Proof of weak commutativity on the intersection domain}\label{weak}
We show that $\fct(\xi)$ and $\fct'(\eta)$ commute weakly on $\dom(\fct(\xi))\cap \dom(\fct'(g))$. Thanks to our choice of the extension (Section~\ref{one}), this will be a straightforward modif\/ication of the proof of~\cite{CT15-1}, with the help of some techniques in complex analysis of several variables in Appendix~\ref{S-symmetric}.

\begin{Proposition}\label{pr:weak}
Let $\underline\xi \in \hardyinfuf$, $\underline\xi(\theta + \pi i) = \overline{\underline\xi(\theta)}$ and $\underline\eta \in \hardyinflf$, $\underline\eta(\theta - \pi i) = \overline{\underline\eta(\theta)}$ for $\theta \in \RR$ and assume that $\xi = \underline\xi^2 \in \hardytwouf$, $\eta = \underline\eta^2 \in \hardytwolf$. Then, for each $\Psi,\Phi \in \dom(\fct(\xi))\cap\dom(\fct'(\eta))$, it holds that
 \begin{gather*}
 \big\<\fct(\xi)\Phi, \fct'(\eta)\Psi\big\> = \big\<\fct'(\eta)\Phi, \fct(\xi)\Psi\big\>.
 \end{gather*}
\end{Proposition}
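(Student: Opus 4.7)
The plan is to decompose $\fct(\xi) = \phi(\xi) + \chi(\xi)$ and $\fct'(\eta) = \phi'(\eta) + \chi'(\eta)$ and check the sesquilinear identity piece by piece. Expanding, the difference
\begin{gather*}
 \big\<\fct(\xi)\Phi, \fct'(\eta)\Psi\big\> - \big\<\fct'(\eta)\Phi, \fct(\xi)\Psi\big\>
\end{gather*}
splits into four commutator-type pieces: the pure $\phi$--$\phi'$ piece, the pure $\chi$--$\chi'$ piece, and two mixed pieces $\phi$--$\chi'$ and $\chi$--$\phi'$. I would show each piece vanishes, reducing the statement to the corresponding identity already established in \cite{CT15-1} but now on the enlarged common domain of Section~\ref{one}.

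For the pure $\phi$--$\phi'$ piece, Lechner's wedge-commutativity proof for pole-free $S$ applies verbatim once the test-function setup is replaced by Hardy-space arguments: the operators $z^\dagger,z,z'^\dagger,z'$ are insensitive to the extension, and the single contour shift across a strip of width~$\pi$ required to pass from $f^+$ to $f^-$ is justified by the $L^2$-boundary behavior of elements of $\hardytwouf$ and $\hardytwolf$ via the Cauchy theorem recalled in Section~\ref{one}. The pure $\chi$--$\chi'$ piece is a bookkeeping exercise with the explicit formula~\eqref{eq:chi-n} for $\chi_n$ and its primed analogue; on $S$-symmetric wavefunctions it reduces, by relabeling arguments, to an algebraic identity between the pole factor $R$ and the $S$-factors produced by $P_n$. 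For the two mixed pieces, one proceeds exactly as in \cite[Section~4]{CT15-1}: a contour shift of the annihilation-integrand of $\phi$ (resp.\ $\phi'$) across the strip $\RR + i[0,\pi/3]$ picks up a residue at $2\pi i/3$, and by the very definition of $\chi$ this residue cancels the contribution of $\chi'$ (resp.\ $\chi$) against the $\phi$ of the other side.

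The real content is to justify these contour manipulations in the enlarged domain. A vector $\Psi\in\dom(\chi(\xi))$ is not itself in $\hardytwol$ in each variable; only $\xi_0(\theta_k + \tfrac{2\pi i}{3})(P_n\Psi_n)(\theta_1,\dots,\theta_n)$ is analytic in $\theta_k$ across the full strip of width $\pi/3$, and $S$-symmetry then forces possible poles in the remaining variables at distance $\pi/3$ from the real axis due to the pole of $S$ at $2\pi i/3$. Here I would invoke the joint-analyticity lemmas collected in Appendix~\ref{S-symmetric}: combined with assumption~(S\ref{ax:regularity}), they ensure that, multiplied by the appropriate $\xi_0$ (resp.\ $\eta_0$) factors, the wavefunction is the $L^2$-boundary value of a function analytic on an explicit polydomain, so that the multi-variable Cauchy theorem applies and the residues can actually be read off.

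The main obstacle is precisely this joint analytic continuation together with the bookkeeping of residues: contributions from the pole of $S$ at $2\pi i/3$ must cancel between the mixed piece and the pure $\chi$--$\chi'$ piece, but these cancellations rely on $\xi_0\bar\xi_0 = \sqrt{2\pi|R|}\,\xi$ via~\eqref{eq:consistency} and on the $S$-symmetry being respected by the shifted contours. I would therefore first establish the identity on a dense subspace of particularly regular $S$-symmetric vectors (for which all integrands are smooth and rapidly decaying on a thick polystrip, so the Cauchy theorem is applied without scruple and the residue computation is direct), and then extend to all $\Phi,\Psi\in\dom(\fct(\xi))\cap\dom(\fct'(\eta))$ by an approximation argument based on the uniform bounds $\|S\|_\k<\infty$, $|\xi_0|=|\eta_0|=1$ on the relevant horizontal lines, together with continuity of both sides of the target identity with respect to the graph norms of $\fct(\xi)$ and $\fct'(\eta)$.
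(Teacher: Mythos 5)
Your overall shape (expand $\fct=\phi+\chi$, compute the commutator term by term, justify the contour shifts on the enlarged domain via the Appendix lemmas) is the right one, but the cancellation pattern you propose is not correct, and it is internally inconsistent. You claim each of the four pieces vanishes separately and in particular that Lechner's pole-free argument applies ``verbatim'' to $[\phi(\xi),\phi'(\eta)]$. It does not: the contour shift in that computation crosses the poles of $S$ at $\frac{\pi i}3$ and $\frac{2\pi i}3$ inside the physical strip, so $[\phi(\xi),\phi'(\eta)]$ is \emph{not} zero here --- it produces exactly the residue terms proportional to $R$ displayed in the paper (if it vanished, there would be no reason to introduce $\chi$ at all). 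Likewise $[\chi(\xi),\chi'(\eta)]$ does not vanish by itself: only its $2n(n-1)$ off-diagonal terms cancel internally, while the $n$ diagonal terms (where $\chi_1(\xi)$ and $\chi'_1(\eta)$ act on the same variable) survive and are precisely what cancels the $[\phi,\phi']$ residues, using $R=i|R|$ from (S\ref{ax:atzero}). The mixed pieces do not vanish individually either; $[\chi(\xi),z'(\eta)]$ cancels against $[z(\xi),\chi'(\eta)]$ (and the creation analogues against each other), and in these shifts no residue at $\frac{2\pi i}3$ is picked up --- the pole of $S$ at $\frac{\pi i}3$ is neutralized by the zero of the $S$-symmetric wavefunction at coinciding rapidities (forced by $S(0)=-1$) via Lemma~\ref{pr:analyticity-n}. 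Your later statements that a residue from the $\phi$-shift cancels $\chi'$, or that the pole contributions cancel ``between the mixed piece and the pure $\chi$--$\chi'$ piece'', contradict your opening claim that every piece vanishes and do not match any correct bookkeeping.

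The second genuine gap is the closing approximation argument. The identity is indeed continuous in the joint graph norms of $\fct(\xi)$ and $\fct'(\eta)$, but you would need a subspace of ``particularly regular'' $S$-symmetric vectors that is dense in $\dom(\fct(\xi))\cap\dom(\fct'(\eta))$ with respect to \emph{both} graph norms simultaneously, and no such density is available: $\dom(\chi_n(\xi))$ and $\dom(\chi'_n(\eta))$ are defined through the inner functions $\xi_0$, $\eta_0$ (i.e., $M_{\xi_0(\cdot\,+\frac{2\pi i}3)}^*\hardytwol$ and its primed analogue), so vectors that are smooth, rapidly decaying and analytic on a thick polystrip need not even belong to these domains, let alone be graph-dense in their intersection --- the delicacy of these domains is exactly why strong commutativity is the open problem of the paper. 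The paper's proof avoids any approximation: it works directly with arbitrary vectors of the intersection domain, using the characterization of the domain (multiplication by $\xi_0$, $\eta_0$ lands in a Hardy class), the expression \eqref{eq:chi-n}, the zeros at $\theta_k=\theta_l$ coming from $S$-symmetry, and (S\ref{ax:regularity}) together with Lemma~\ref{pr:analyticity-n} to legitimize each Hilbert-space-valued Cauchy shift. To repair your argument you would have to either prove the joint-graph-norm density you invoke (which is not known and is likely as hard as the domain problems the paper is circumventing) or redo the computation directly on $\dom(\fct(\xi))\cap\dom(\fct'(\eta))$ as the paper does, with the corrected cancellation pattern described above.
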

\begin{proof}
We may assume that $\Phi$, $\Psi$ have f\/initely many non-zero components and since we are considering the algebraic direct sum $\underline{\bigoplus}_n \dom(\chi_n(\xi)) \subset \dom(\phi(\xi))$, $\underline{\bigoplus}_n \dom(\chi_n(\eta)) \subset \dom(\phi'(\eta))$, therefore, we can compute the action of the operators as the sum:
 \begin{gather*}
 \fct(\xi) = \phi(\xi) + \chi(\xi) = z^\dagger(\xi) + \chi(\xi) + z(\xi), \\
 \fct'(\eta) = \phi'(\eta) + \chi'(\eta) = z'^\dagger(\eta) + \chi'(\eta) + z'(\eta).
 \end{gather*}
We may also assume that $\Phi$ and $\Psi$ are already $S$-symmetric, although we use the unsymmetrized Fock space $\H^\Sigma$ in the intermediate steps. We are going to follow the steps of our previous result and compute the commutator term by term. Since the most of the computations are same as the previous ones, we will be brief and indicate only the points where care is needed, and refer the reader to the proof of~\cite[Theorem~3.4]{CT15-1}.

{\it The commutator $[\chi(\xi), z'(\eta)]$.} We can compute this commutator as operators as before~\cite{CT15-1}. Although we have an extended domain for $\chi(\xi)$, the expression \eqref{eq:chi-n} is valid as we explained before. By computing the operators term by term, we get
 \begin{gather*}
([\chi(\xi),z'(\eta)]\Psi_n)(\theta_1,\dots,\theta_{n-1}) \\
= -\sqrt{2\pi|R|n}\int d\theta\, \overline{\eta(\theta)}\prod_{1\le j \le n-1} S\left(\theta-\theta_j + \frac{\pi i}3\right)
 \xi\left(\theta + \frac{\pi i}3\right) \Psi_n\left(\theta_1,\dots,\theta_{n-1},\theta - \frac{\pi i}3\right) \\
= -\sqrt{2\pi|R|n}\int d\theta\, \eta(\theta - \pi i)\prod_{1\le j \le n-1} S(\theta - \theta_j)
 \xi\left(\theta + \frac{\pi i}3\right) \Psi_n\left(\theta - \frac{\pi i}3, \theta_1,\dots,\theta_{n-1}\right),
\end{gather*}
where in the last equation we used also the symmetry property $\overline{\eta(\theta)} = \eta(\theta-i\pi)$ and the $S$-symmetry of~$\Psi_n$.

Let us look at the integrand without the $S$-factor and shift the f\/irst variable $\theta$ by $\frac{\pi i}3$: $\eta(\theta - \frac{2\pi i}3)\xi\left(\theta + \frac{2\pi i}3\right) \Psi_n\left(\theta, \theta_1,\dots,\theta_{n-1}\right)$. Recall that:
 \begin{itemize}\itemsep=0pt
 \item $\xi \in \hardyinfuf$, $\eta \in \hardyinflf$,
 \item $\overline{\xi_0\left(\theta + \frac{\pi i}3\right)}\xi_0\left(\theta + \frac{2\pi i}3\right) =
 \sqrt{2\pi|R|}\xi\left(\theta + \frac{2\pi i}3\right)$, $\xi_0 \in H^\infty(\SS_{\frac{\pi}3,\frac{2\pi}3})$,
 \item $\overline{\eta_0\left(\theta - \frac{\pi i}3\right)}\eta_0\left(\theta - \frac{2\pi i}3\right) =
 \sqrt{2\pi|R|}\eta\left(\theta - \frac{2\pi i}3\right)$, $\eta_0 \in H^\infty\big(\SS_{-\frac{2\pi}3,-\frac{\pi}3}\big)$,
 \item $\xi_0\big(\theta + \frac{2\pi i}3\big)\Psi_n(\theta, \theta_1,\dots,\theta_{n-1}) \in \dom\big(\shiftone\otimes\1\otimes\cdots\otimes\1\big)$,
 \item $\eta_0\big(\theta - \frac{2\pi i}3\big)\Psi_n(\theta, \theta_1,\dots,\theta_{n-1}) \in \dom\big(\Delta_1^{-\frac{\k}{2\pi}}\otimes\1\otimes\cdots\otimes\1\big)$.
 \end{itemize}
 We consider the function
 \begin{gather*}
 \eta\left(\theta - \frac{2\pi i}3\right)\xi\left(\theta + \frac{2\pi i}3\right)\prod_{j=1}^{n-1}\xi_0\left(\theta_j + \frac{2\pi i}3\right)
 \cdot \Psi_n\left(\theta, \theta_1,\dots,\theta_{n-1}\right),
 \end{gather*}
 which belongs to
 \begin{gather*}
 \dom\big(\shiftone\otimes\1\otimes\cdots\otimes\1\big)\bigcap\dom\big(\Delta_1^{-\frac{\k}{2\pi}}\otimes\1\otimes\cdots\otimes\1\big)\\
 \qquad {}\bigcap_{j=1}^{n-1} \dom\big(\1\otimes\cdots\otimes\1\otimes\underset{j+1\textrm{-th}}{\Delta_1^{\frac\k{2\pi}}}\otimes\1\otimes\cdots\otimes\1\big)
 \end{gather*}
 by $S$-symmetry of $\Psi$. Furthermore, again from the $S$-symmetry of~$\Psi_n$ and from the proper\-ty~(S\ref{ax:atzero}), it has zeros at any point where two of the variables coincide: $\theta = \theta_j$. By applying Lem\-ma~\ref{pr:analyticity-n}, we obtain that
 \begin{gather*}
 \prod_{1\le j \le n-1} S\left(\theta - \theta_j + \frac{\pi i}3\right) \cdot \eta\left(\theta - \frac{2\pi i}3\right)\xi\left(\theta + \frac{2\pi i}3\right)\\
 \qquad {}\times\prod_{j=2}^n\xi_0\left(\theta_j + \frac{2\pi i}3\right) \cdot \Psi_n\left(\theta, \theta_1,\dots,\theta_{n-1}\right)
 \end{gather*}
still belongs to the same domain. Now we can remove extra $\xi_0$'s with variables $\theta_j$, which does not af\/fect the domain of $\shiftone\otimes\1\otimes\cdots\otimes\1$ and get that
 \begin{gather*}
 \prod_{1\le j \le n-1} S\left(\theta - \theta_j + \frac{\pi i}3\right) \cdot \eta\left(\theta - \frac{2\pi i}3\right)\xi\left(\theta + \frac{2\pi i}3\right) \Psi_n\left(\theta, \theta_1,\dots,\theta_{n-1}\right)
 \end{gather*}
belongs to $\dom\big(\shiftone\otimes\1\otimes\cdots\otimes\1\big)$. Therefore, we can apply the Cauchy theorem (on Hilbert-space valued functions) and
 \begin{gather*}
 ([\chi(\xi),z'(\eta)]\Psi_n)(\theta_1,\dots,\theta_{n-1})
 = -\sqrt{2\pi|R|n}\int d \theta\, \eta\left(\theta - \frac{2\pi i}3\right)\prod_{1\le j \le n-1} S\left(\theta - \theta_j + \frac{\pi i}3\right) \\
\hphantom{([\chi(\xi),z'(\eta)]\Psi_n)(\theta_1,\dots,\theta_{n-1}) =-\sqrt{2\pi|R|n}\int}{} \times\xi\left(\theta + \frac{2\pi i}3\right) \Psi_n\left(\theta, \theta_1,\dots,\theta_{n-1}\right).
 \end{gather*}

{\it The commutator $[z(\xi), \chi'(\eta)]$.} This can be computed in a similar way as the previous one:
 \begin{gather*}
 [(z(\xi), \chi'(\eta)]\Psi_n) (\theta_1,\dots,\theta_{n-1})
 = \sqrt{2\pi|R|n}\int d \theta\, \xi\left(\theta + \frac{2\pi i}3\right)\prod_{1\le j \le n-1} S\left(\theta-\theta_j + \frac{\pi i}3\right) \\
 \hphantom{[(z(\xi), \chi'(\eta)]\Psi_n) (\theta_1,\dots,\theta_{n-1}) =\sqrt{2\pi|R|n}\int}{}
 \times\eta\left(\theta - \frac{2\pi i}3\right) \Psi_n\left(\theta,\theta_1,\dots,\theta_{n-1}\right),
\end{gather*}
which coincides with the result of the commutator $[\chi(f), z'(J_1g^-)]$ up to a sign, therefore they cancel each other.

{\it The commutators $[z^\dagger(\xi), \chi'(\eta)]$ and $[\chi(\xi), z'^\dagger(\eta)]$.} One can show that these two commutators cancel each other by taking the adjoints and repeat the computations as in the commutators before.

{\it The commutator $[\phi(\xi), \phi'(\eta)]$.} This term has been essentially computed in~\cite{Lechner03}. The only dif\/ference is that $\xi$ does not come from the Fourier transform of a test function~$f$ supported in~$W_\L$, but the properties needed in the computations are that $\xi \in \hardytwouf$ (which allows one to apply the Cauchy theorem) and the symmetry property $\overline{\xi(\theta)} = \xi(\theta + i\pi)$ (which assures that the terms appearing in the commutator is the boundary values of the same analytic function), and the corresponding properties of~$\eta$. Here is no ef\/fect of the domain and the result is essentially taken from \cite[before Proposition~2]{Lechner03}:
 \begin{gather*}
 ([\phi(\xi), \phi'(\eta)]\Psi_n)(\theta_1,\dots,\theta_n) \\
 \qquad{}
 =-2\pi i \Bigg(\sum_{k = 1}^n R\eta\left(\theta_k - \frac{\pi i}3\right)\xi\left(\theta_k + \frac{2\pi i}3\right)
 \prod_{j\neq k} S\left(\theta_k-\theta_j + \frac{2\pi i}3\right) \\
\qquad\qquad\qquad{}- \sum_{k = 1}^n R\eta\left(\theta_k - \frac{2\pi i}3\right)\xi\left(\theta_k + \frac{\pi i}3\right)
 \prod_{j\neq k} S\left(\theta_k-\theta_j + \frac{\pi i}3\right)\Bigg) \Psi_n(\theta_1,\dots,\theta_n).
 \end{gather*}

{\it The commutator $[\chi(\xi), \chi'(\eta)]$.} We have the expressions~\eqref{eq:chi-n} and those for $\chi_n'(\eta)$. This allows one to expand the commutator into the sum of $n^2$ terms as in~\cite{CT15-1} by a straightforward computation. Of them, there are~$2n(n-1)$ terms which come from the actions of $\chi(\xi)$, $\chi'(\eta)$ on dif\/ferent variables in the sense of~\eqref{eq:chi-n}, and cancel each other. For that, we only need that~$\chi_n(\xi)$ is the projection onto $P_n\H^{\otimes n}$ of the operator~$\chi_1(\xi)\otimes\1\otimes\cdots\otimes\1$ acting on the f\/irst component (and the corresponding property of~$\chi'_n(\eta)$ and the $S$-symmetry of the vectors). The extended domains do not af\/fect this.

The remaining terms are those where $\chi_1(\xi)$ and $\chi'_1(\eta)$ act on the same variable in the sense of~\eqref{eq:chi-n}:
\begin{gather*}
 \<D_n(\rho'_{n-k+1})(\1\otimes\cdots\otimes\1\otimes\chi'_1(\eta))\Phi, \;D_n(\rho_k)(\chi_1(\xi)\otimes\1\otimes\cdots\otimes\1)\Psi\> \\
 = 2\pi|R|\int d\theta_1\cdots d\theta_n\, \prod_{l=1}^{k-1} S\left(\theta_k - \theta_l + \frac{\pi i}3\right) \xi\left(\theta_k + \frac{\pi i}3\right)\Psi_n\left(\theta_1,\dots,\theta_k-\frac{\pi i}3,\dots,\theta_n\right) \\
\hphantom{= 2\pi|R|\int}{} \times \overline{\prod_{l=k+1}^{n} S\left(\theta_l - \theta_k + \frac{\pi i}3\right) \eta\left(\theta_k - \frac{\pi i}3\right)\Phi_n\left(\theta_1,\dots,\theta_k+\frac{\pi i}3,\dots,\theta_n\right)} \\
 = 2\pi|R|\int d\theta_1\cdots d\theta_n\, \prod_{l\neq k} S\left(\theta_k - \theta_l + \frac{2\pi i}3 - i\k\right) \xi\left(\theta_k + \frac{2\pi i}3 - i\k\right)\overline{\eta\left(\theta_k - \frac{2\pi i}3 + i\k\right)} \\
\hphantom{= 2\pi|R|\int}{}\times \Psi_n\left(\theta_1,\dots,\theta_k-i\k,\dots,\theta_n\right)\overline{\Phi_n\left(\theta_1,\dots,\theta_k+i\k,\dots,\theta_n\right)},
 \end{gather*}
 where $\rho_k$, $\rho_{n-k+1}'$ are the cyclic permutations
 \begin{gather*}
 \rho_k\colon \ (1,2,\dots, n) \mapsto (k,1,2, \dots, k-1, k+1,\dots, n), \\
 \rho_{n-k+1}'\colon \ (1,2,\dots, n) \mapsto (1,2, \dots, k-1, k+1,\dots, n, k),
 \end{gather*}
and in the second equality we used that $S$ has no pole except $\frac{\pi i}3$, $\frac{2\pi i}3$ and applied the (Hilbert space-valued) Cauchy theorem: the pole at $\frac{\pi i}3$ makes no problem by the fact that $\Psi_n \in \dom(\chi_n(\xi))$, $\Phi_n \in \dom(\chi_n'(\eta))$.

Now, the question is the pole of $S$ at $\frac{2\pi i}3$. Let us forget at the moment the $S$-factors and look at the further shifted integrand:
\begin{gather*}
\xi\left(\theta_k + \frac{2\pi i}3\right)\Psi_n\left(\theta_1,\dots,\theta_n\right)
 \overline{\eta\left(\theta_k - \frac{2\pi i}3\right)\Phi_n\left(\theta_1,\dots,\theta_n\right)} \\
=\frac1{\sqrt{2\pi|R|}}\xi\left(\theta_k + \frac{2\pi i}3\right)\eta_0\left(\theta_k - \frac{\pi i}3\right)\Psi_n\left(\theta_1,\dots,\theta_n\right) \cdot \overline{\eta_0\left(\theta_k - \frac{2\pi i}3\right)\Phi_n\left(\theta_1,\dots,\theta_n\right)},
\end{gather*}
where we used that $\overline{\eta_0\left(\theta_k - \frac{\pi i}3\right)}\eta_0\left(\theta_k - \frac{2\pi i}3\right) = \sqrt{2\pi|R|}\eta\left(\theta_k - \frac{2\pi i}3\right)$.

{\samepage Let us observe that
\begin{itemize}\itemsep=0pt
\item $\eta_0\big(\theta_k - \frac{2\pi i}3\big)\Phi_n(\theta_1,\dots,\theta_n)
 \in \dom\big(\1\otimes\cdots \otimes\1\otimes\underset{k\textrm{-th}}{\Delta_1^{-\frac{\k}{2\pi}}}\otimes\1\otimes\cdots\otimes\1\big)$,
\item $\xi\big(\theta_k + \frac{2\pi i}3\big)\eta_0\big(\theta_k - \frac{\pi i}3\big)\Psi_n(\theta_1,\dots,\theta_n)
 \in \dom\big(\1\otimes\cdots \otimes\1\otimes\underset{k\textrm{-th}}{\big(\Delta_1^{\frac{\k}{2\pi}} + \Delta_1^{-\frac{\k}{2\pi}}\big)}\otimes\1\otimes\cdots\otimes\1\big)$,
 since $\xi \in \hardyinfuf$ and $\eta_0(\theta_k - \frac{\pi i}3) = \sqrt{2\pi|R|}\eta\big(\theta_k - \frac{\pi i}3\big)\eta_0\big(\theta_k - \frac{2\pi i}3\big)$, $\eta \in H^\infty(\SS_{-\pi, 0})$.
 \end{itemize}
 We} use the same trick as before: consider the function
 \begin{gather*}
 \xi\left(\theta_k + \frac{2\pi i}3\right)\eta_0\left(\theta_k - \frac{\pi i}3\right)\prod_{l\neq k}\xi_0\left(\theta_l + \frac{2\pi i}3\right)
 \cdot \Psi_n\left(\theta_1,\dots,\theta_n\right),
 \end{gather*}
 which belongs to
 \begin{gather*}
 \dom\big(\1\otimes\cdots \otimes\1\otimes\underset{k\textrm{-th}}{\big(\Delta_1^{\frac{\k}{2\pi}} + \Delta_1^{-\frac{\k}{2\pi}}\big)}\otimes\1\otimes\cdots\otimes\1\big)\\
 \qquad{} \bigcap_{l\neq k} \dom\big(\1\otimes\cdots\otimes\1\otimes\underset{l\textrm{-th}}{\Delta_1^{\frac\k{2\pi}}}\otimes\1\otimes\cdots\otimes\1\big)
 \end{gather*}
 by $S$-symmetry of $\Psi$. Furthermore, again by $S$-symmetry, for $k< l$ we have
 \begin{gather*}
 \Psi(\theta_1,\dots,\theta_n) = \prod_{k<j<l}S(\theta_l-\theta_j)S(\theta_j-\theta_k)\cdot S(\theta_l-\theta_k)\Psi(\theta_1,\dots,\theta_l,\dots,\theta_k,\dots,\theta_n).
 \end{gather*}
 From this expression, it is clear that $\Psi$ has a zero at $\theta_l - \theta_k = 0$, $k < l$. A parallel argument holds for $l < k$.

By applying Lemma \ref{pr:analyticity-n}, we obtain that
 \begin{gather*}
 \prod_{l\neq k} S\left(\theta_k - \theta_l + \frac{2\pi i}3\right) \cdot \xi\left(\theta_k + \frac{2\pi i}3\right)\eta_0\left(\theta_k - \frac{\pi i}3\right)\prod_{l\neq k}\xi_0\left(\theta_l + \frac{2\pi i}3\right) \cdot \Psi_n\left(\theta_1,\dots,\theta_n\right)
 \end{gather*}
still belongs to the same domain. We remove extra $\xi_0$'s with variables $\theta_l$ and get that
 \begin{gather*}
 \prod_{l \neq k} S\left(\theta_k - \theta_l + \frac{2\pi i}3\right) \cdot \xi\left(\theta_k + \frac{2\pi i}3\right)\eta_0\left(\theta_k - \frac{\pi i}3\right) \Psi_n\left(\theta_1,\dots,\theta_n\right)
 \end{gather*}
belongs to $\dom\big(\1\otimes\cdots \otimes\1\otimes\underset{k\textrm{-th}}{\Delta_1^{\frac{\k}{2\pi}}}\otimes\1\otimes\cdots\otimes\1\big)$.

Therefore, we can apply the Cauchy theorem and
 \begin{gather*}
 \<D_n(\rho'_{n-k+1})(\1\otimes\cdots \otimes\1\otimes\chi'_1(\eta))\Phi, D_n(\rho_k)(\chi_1(\xi)\otimes\1\otimes\cdots\otimes\1)\Psi\> \\
\qquad{} = 2\pi |R|\int d\theta_1\cdots d\theta_n \prod_{l\neq k} S\left(\theta_k - \theta_l + \frac{2\pi i}3\right) \xi\left(\theta_k + \frac{2\pi i}3\right)\Psi_n\left(\theta_1,\dots,\theta_n\right) \\
 \qquad\hphantom{= 2\pi |R|\int}{}\times \overline{\eta\left(\theta_k - \frac{2\pi i}3\right)\Phi_n\left(\theta_1,\dots,\theta_n\right)} \\
\qquad{} = 2\pi |R|\int d\theta_1\cdots d\theta_n \prod_{l\neq k} S\left(\theta_k - \theta_l + \frac{2\pi i}3\right) \xi\left(\theta_k + \frac{2\pi i}3\right)\eta\left(\theta_k - \frac{\pi i}3\right) \\
\qquad\hphantom{= 2\pi |R|\int}{} \times \overline{\Phi_n\left(\theta_1,\dots,\theta_n\right)}\Psi_n\left(\theta_1,\dots,\theta_n\right).
 \end{gather*}
 By recalling (S\ref{ax:atzero}) which implies $R = i|R|$, this cancels the f\/irst contribution from the commutator $\<\Phi, [\phi(f),\phi'(g)]\Psi\>$.

 The remaining terms in $\<\chi(\xi) \Phi, \chi'(\eta)\Psi\>$ are, by an analogous argument as above,
 \begin{gather*}
 \<D_n(\rho_k)(\chi_1(\xi)\otimes\1\otimes\cdots \otimes\1)\Phi, D_n(\rho'_{n-k+1})(\1\otimes\cdots\otimes\1\otimes\chi'_1(\eta))\Psi\> \\
 \qquad{}= -2\pi iR\int d\theta_1\cdots d\theta_n \prod_{l\neq k} S\left(\theta_k - \theta_l + \frac{\pi i}3\right) \eta\left(\theta_k - \frac{2\pi i}3\right)\xi\left(\theta_k + \frac{\pi i}3\right) \\
 \qquad\hphantom{= -2\pi iR\int}{}\times \overline{\Phi_n\left(\theta_1,\dots,\theta_n\right)}\Psi_n\left(\theta_1,\dots,\theta_n\right)
 \end{gather*}
which is equal to the second contribution from the commutator $\<\Phi, [\phi(f),\phi'(g)]\Psi\>$ up to the sign, therefore, they cancel each other.
\end{proof}

\subsection{Estimates}\label{strong}

Let us assume that $\chi_n(\xi) + \chi'_n(\eta)$ is already essentially self-adjoint on its natural domain $\Dom_n(\xi,\eta) := \dom(\chi_n(\xi))\cap\dom(\chi'_n(\eta))$. We denote the algebraic direct sum by $\Dom(\xi,\eta) := \underline{\bigoplus}_n \Dom_n(\xi,\eta)$. We are going to show that Proposition~\ref{pr:general} can be applied to show that~$\fct(\xi)$ and~$\fct'(\eta)$ strongly commute.

\begin{Proposition}\label{pr:positive}
There is $c(\xi,\eta) > 0$ such that, for $\Psi \in \Dom_n(\xi,\eta)$, it holds that
\begin{gather*}
 \re\<\chi(\xi)\Psi, \chi'(\eta)\Psi\> \ge -nc(\xi,\eta)\<(\chi(\xi) + \chi'(\eta)+2)\Psi, \Psi\>,
 \end{gather*}
and therefore,
 \begin{gather*}
 \|\chi(\xi)\Psi\| \le \|(\chi(\xi) + \chi'(\eta) +c(\xi,\eta)(N+\1)\Psi\|, \\
 \|\chi'(\eta)\Psi\| \le \|(\chi(\xi) + \chi'(\eta) +c(\xi,\eta)(N+\1)\Psi\|
 \end{gather*}
 for $\Psi \in \Dom(\xi,\eta)$.
\end{Proposition}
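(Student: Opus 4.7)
The plan is to establish the positivity estimate on $\re\<\chi(\xi)\Psi,\chi'(\eta)\Psi\>$ first, and then to extract both operator-norm inequalities from it by a short quadratic expansion. The positivity estimate is the substantive part, and the strategy is a direct computation of the inner product patterned on the commutator calculation in Proposition~\ref{pr:weak}. Writing $\chi_n(\xi)=\sum_{k=1}^n \chi_1^{(k)}$ as the decomposition of \eqref{eq:chi-n} into the $n$ summands labelled by the variable $\theta_k$ carrying the bound-state shift, and the analogous $\chi'_n(\eta)=\sum_{l=1}^n \chi_1'^{(l)}$, the inner product splits into $n^2$ scalar pieces $\<\chi_1^{(k)}\Psi,\chi_1'^{(l)}\Psi\>$. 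For each of these I would shift the $\theta_k$-contour by $\frac{\pi i}3$ via the Cauchy theorem, justified by Lemma~\ref{pr:analyticity-n} together with the hypothesis $\Psi\in\Dom_n(\xi,\eta)$, exactly as in the proof of Proposition~\ref{pr:weak}.

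The $n$ diagonal pieces $k=l$ are the only ones that meet the residue of $S$ at $\frac{2\pi i}3$, and after the contour shift each takes the form
\begin{gather*}
 2\pi|R|\int \prod_{j\ne k} S\left(\theta_k-\theta_j+\frac{2\pi i}3\right) \xi\left(\theta_k+\frac{2\pi i}3\right) \eta\left(\theta_k-\frac{\pi i}3\right)|\Psi(\theta)|^2\,d\theta.
\end{gather*}
Using $R=i|R|$ from (S\ref{ax:residue}), the reality conditions on $\xi,\eta$, and the pointwise bound $\re S(\theta+\frac{2\pi i}3)\ge 0$ from (S\ref{ax:re}), the real part of the diagonal contribution can be bounded below by a nonnegative integral, up to a multiple of $\|\Psi\|^2$. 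The $n(n-1)$ off-diagonal pieces $k\ne l$ have the two bound-state shifts acting on distinct variables, so no pole of $S$ is hit and $\xi,\eta$ remain bounded on the shifted contours; Cauchy--Schwarz against appropriate first-moment functionals then yields a combined contribution bounded in absolute value by $nc(\xi,\eta)\<(\chi(\xi)+\chi'(\eta)+2)\Psi,\Psi\>$.

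For the norm inequalities, I would expand on $\H_n$, using $(N+\1)\Psi=(n+1)\Psi$:
\begin{gather*}
 \|(\chi(\xi)+\chi'(\eta)+c(\xi,\eta)(N+\1))\Psi\|^2 = \|\chi(\xi)\Psi\|^2+\|\chi'(\eta)\Psi\|^2 \\
 +\,2\re\<\chi(\xi)\Psi,\chi'(\eta)\Psi\> +2c(\xi,\eta)(n+1)\<(\chi(\xi)+\chi'(\eta))\Psi,\Psi\>+c(\xi,\eta)^2(n+1)^2\|\Psi\|^2,
\end{gather*}
insert the positivity estimate into the cross term, and observe that the resulting coefficient $2(n+1)c(\xi,\eta)-2nc(\xi,\eta)=2c(\xi,\eta)$ of $\<(\chi(\xi)+\chi'(\eta))\Psi,\Psi\>$ is nonnegative since $\chi(\xi)+\chi'(\eta)\ge 0$. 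Taking $c(\xi,\eta)\ge 1$ so that $c(\xi,\eta)^2(n+1)^2 \ge 4nc(\xi,\eta)$ for every $n\ge 1$, the right-hand side dominates $\|\chi(\xi)\Psi\|^2$; the bound for $\chi'(\eta)$ follows by swapping the roles of $\chi$ and $\chi'$ throughout.

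The main obstacle is securing the positivity in the diagonal contribution: (S\ref{ax:re}) yields the pointwise sign of $\re S(\theta+\frac{2\pi i}3)$, but the product over $j\ne k$ need not itself lie in the closed right half-plane. I expect the resolution to come from re-symmetrizing over $k$ by means of the $S$-symmetry of $\Psi$, so that after averaging the integrand reorganizes into a manifestly nonnegative form built from $|\xi_0(\cdot+\frac{2\pi i}3)\Psi|^2$-type blocks together with coefficients of controlled sign derived from $\re S$.
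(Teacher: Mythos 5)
Your final quadratic expansion (insert the first inequality into the cross term, use $2(n+1)c-2nc\ge 0$ and $c(\xi,\eta)\ge 1$ to absorb the $-4nc\|\Psi\|^2$ into $(n+1)^2c^2\|\Psi\|^2$) is fine and is essentially what the paper does. The gap is in the core inequality, and it is exactly the one you flag yourself: positivity of the diagonal (collision) contribution is not available from (S\ref{ax:re}). The real part of $\prod_{j\neq k}S\big(\theta_k-\theta_j+\frac{2\pi i}3\big)\,\xi\big(\theta_k+\frac{2\pi i}3\big)\,\eta\big(\theta_k-\frac{\pi i}3\big)$ is not controlled by the pointwise condition $\re S\big(\theta+\frac{2\pi i}3\big)\ge 0$: a product of numbers in the closed right half-plane need not stay there, and the weights $\xi\big(\theta_k+\frac{2\pi i}3\big)$, $\eta\big(\theta_k-\frac{\pi i}3\big)$ are genuinely complex off the real line; no re-symmetrization over $k$ is known to repair this, and the paper never attempts it. The actual proof splits the $n^2$ terms the other way around: the $n(n-1)$ terms in which the two bound-state shifts act on \emph{distinct} variables are manifestly nonnegative (they are expectations of $\chi_1(\xi)$ and $\chi'_1(\eta)$ placed on disjoint tensor legs, hence of a positive operator) and are simply dropped, while the $n$ collision terms~\eqref{eq:cross} are \emph{not} shown to be positive but are bounded in modulus. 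That bound comes from writing $D_n(\rho_k)=M_{S_k}F_k$, splitting $\chi_1(\xi)=\chi_1(\xi)^{\frac12}\chi_1(\xi)^{\frac12}$ and commuting one square root through $M_{S_k}$ at the cost of shifting the arguments of $S$ by $\frac{\pi i}6$, where the multiplier is bounded by $1$ uniformly in $n$ by (S\ref{ax:bound}), together with the finiteness of $c(\xi,\eta)=\big\|\chi_1(\xi)^{\frac12}\chi'_1(\eta)^{\frac12}\big\|+1$ (which follows from the explicit forms $M_{\xi_0}^*\halfshiftone M_{\xi_0}$ and $M_{\eta_0}^*\halfshiftione M_{\eta_0}$). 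This yields $|\eqref{eq:cross}|\le c(\xi,\eta)\big(\<\chi_n(\xi)\Psi,\Psi\>+\<\chi'_n(\eta)\Psi,\Psi\>+2\big)$ per term, hence the factor $n$. None of these ingredients — (S\ref{ax:bound}), the square-root splitting, the boundedness of $\chi_1(\xi)^{\frac12}\chi'_1(\eta)^{\frac12}$ — appear in your sketch, and (S\ref{ax:re}), which you lean on, is used in the paper for the closedness/self-adjointness results, not here.

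Your treatment of the off-diagonal terms is also not viable as stated. Cauchy--Schwarz on $\<\chi_1^{(k)}\Psi,\chi_1'^{(l)}\Psi\>$ produces $\|\chi_1^{(k)}\Psi\|\cdot\|\chi_1'^{(l)}\Psi\|$, i.e., precisely the second-moment quantities $\|\chi(\xi)\Psi\|$, $\|\chi'(\eta)\Psi\|$ that the proposition is supposed to control (circular), not the first moments $\<\chi(\xi)\Psi,\Psi\>$, $\<\chi'(\eta)\Psi,\Psi\>$; and summing $n(n-1)$ such terms would at best give growth of order $n^2$ rather than the claimed $n$. The correct observation is that these terms are nonnegative and need no bound at all, so the entire estimate must be carried by the $n$ collision terms in the manner described above.
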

\begin{proof}
We may assume that $\Psi \in \Dom_n(\xi,\eta)$. We use the following expressions:
 \begin{gather*}
 \chi_n(\xi) = \sum_{1\le k \le n} D_n(\rho_k)(\chi_1(\xi)\otimes\1\otimes\cdots\otimes\1)P_n, \\
 \chi'_n(\eta) = \sum_{1\le k \le n} D_n(\rho'_k)(\1\otimes\cdots\otimes\1\otimes\chi'_1(\eta))P_n.
 \end{gather*}
 By proceeding as in Section \ref{weak}, we see that $\<\chi(\xi)\Psi, \chi'(\eta)\Psi\>$ consists of $n(n-1)$ times
 \begin{gather*}
 \<(\chi_1(\xi)\otimes\1\otimes\cdots\otimes\1)\Psi, (\1\otimes\cdots\otimes\1\otimes\chi'_1(\eta))\Psi\>
 \end{gather*}
 and the following $n$ terms ($1\le k \le n$):
 \begin{gather}\label{eq:cross}
 \<D_n(\rho_k)(\chi_1(\xi)\otimes\1\otimes\cdots\otimes\1)\Psi, D_n(\rho'_{n-k+1})(\1\otimes\cdots\otimes\1\otimes\chi'_1(\eta))\Psi\>.
 \end{gather}
 We write $D_n(\rho_k) = M_{S_k}F_k$, where
 \begin{gather*}
 S_k\left(\pmb{\theta}\right) := \prod_{j < k} S\left(\theta_k-\theta_j\right)
 \end{gather*}
 and $F_k$ is the corresponding cyclic permutation of the variables. We have
 \begin{gather*}
D_n(\rho_k)(\chi_1(\xi)\otimes\1\otimes\cdots\otimes\1)\Psi \\
\qquad{} = M_{S_k}F_k\big(\chi_1(\xi)^\frac12\otimes\1\otimes\cdots\otimes\1\big)F_k^*F_k(\chi_1(\xi)^\frac12\otimes\1\otimes\cdots\otimes\1)\Psi \\
\qquad{}= \big(\1\otimes\cdots\1\otimes\underset{k\textrm{-th}}{\chi_1(\xi)^\frac12}\otimes\1\otimes\cdots\otimes\1\big)M_{S_k(\pmb{\theta} + \frac{\pi i}6)}F_k\big(\chi_1(\xi)^\frac12\otimes\1\otimes\cdots\otimes\1\big)\Psi,
 \end{gather*}
 where
 \begin{gather*}
 S_k\left(\pmb{\theta} + \frac{\pi i}6\right) = \prod_{j < k} S\left(\theta_k-\theta_j + \frac{\pi i}6\right),
 \end{gather*}
 which is bounded by $1$ by (S\ref{ax:bound}). Similarly,
 \begin{gather*}
 D_n(\rho'_{n-k+1})(\1\otimes\cdots\otimes\1\otimes\chi'_1(\eta))\Psi \\
 \qquad{} = M_{S'_{n-k+1}}F'_{n-k+1}\big(\1\otimes\cdots\otimes\1\otimes\chi'_1(\eta)^\frac12\big)F_{n-k+1}^{\prime *}F'_{n-k+1}\big(\1\otimes\cdots\otimes\1\otimes\chi'_1(\eta)^\frac12\big)\Psi \\
\qquad{} = \big(\1\otimes\cdots\1\otimes\underset{k\textrm{-th}}{\chi'_1(\eta)^\frac12}\otimes\1\cdots\otimes\1\big)M_{S'_{n-k+1}(\pmb{\theta} + \frac{\pi i}6)}F_{n-k+1}\big(\1\otimes\cdots\otimes\1\otimes\chi'_1(\eta)^\frac12\big)\Psi,
 \end{gather*}
 where $S'_{n-k+1}(\pmb{\theta} + \frac{\pi i}6) = \prod\limits_{j > k} S(\theta_k-\theta_j + \frac{\pi i}6)$, which is again bounded by $1$ and $F'_{n-k+1}$ is the corresponding cyclic permutation. We def\/ine $c(\xi,\eta) = \|\chi_1(\xi)^\frac12\cdot \chi'_1(\eta)^\frac12\| + 1$, which is f\/inite. Now we have
 \begin{gather*}
 |\eqref{eq:cross}|
 \le c(\xi,\eta)\big\|(\chi_1(\xi)^\frac12\otimes\1\otimes\cdots\otimes\1)\Psi\big\|^\frac12\cdot
 \big\|(\1\otimes\cdots\otimes\1\otimes\chi'_1(\eta)^\frac12)\Psi\big\|^\frac12 \\
\hphantom{|\eqref{eq:cross}|}{} \le c(\xi,\eta)\big(\big\|(\chi_1(\xi)^\frac12\otimes\1\otimes\cdots\otimes\1)\Psi\big\|+
 \big\|(\1\otimes\cdots\otimes\1\otimes\chi'_1(\eta)^\frac12)\Psi\big\|\big) \\
\hphantom{|\eqref{eq:cross}|}{} \le c(\xi,\eta)\big(\big\|(\chi_1(\xi)^\frac12\otimes\1\otimes\cdots\otimes\1)\Psi\big\|^2+
 \big\|(\1\otimes\cdots\otimes\1\otimes\chi'_1(\eta)^\frac12)\Psi\big\|^2 + 2\big) \\
\hphantom{|\eqref{eq:cross}|}{}= c(\xi,\eta)\big( \<\chi_n(\xi)\Psi, \Psi \>+ \<\chi'_n(\eta)\Psi, \Psi \> + 2\big),
 \end{gather*}
 from which the f\/irst statement follows directly.

The second statement is easy because
 \begin{gather*}
\|(\chi(\xi) + \chi'(\eta) + (n+1)c(\xi,\eta)\Psi\|^2\\
\qquad{} = \|\chi(\xi)\Psi\|^2 + 2\re\<\chi(\xi)\Psi, \chi'(\eta)\Psi\> + \|\chi'(\eta)\Psi\|^2 \\
\qquad\quad {}+ 2(n+1)c(\xi,\eta)\<(\chi(\xi) + \chi'(\eta)\Psi,\Psi\> + (n+1)^2c(\xi,\eta)^2\|\Psi\|^2,
 \end{gather*}
 which extends to $\Psi = (\Psi_n) \in \Dom(\xi,\eta)$ by replacing $n$ by the number operator $N$, because the operators appearing here preserve $\H_n$. Therefore, by dropping the positive terms at each $n$ in the right-hand side, we conclude:
 \begin{gather*}
 \|\chi(\xi)\Psi\|^2 \le \|(\chi(\xi) + \chi'(\eta) + c(\xi,\eta)(N+\1)\Psi\|^2, \\
 \|\chi'(\eta)\Psi\|^2 \le \|(\chi(\xi) + \chi'(\eta) + c(\xi,\eta)(N+\1)\Psi\|^2.\tag*{\qed}
 \end{gather*}
\renewcommand{\qed}{}
\end{proof}

It is important that the estimate does not grow too fast with $n$, thanks to (S\ref{ax:bound}).

\begin{Corollary}\label{co:strong}
Assume that $\chi_n(\xi) + \chi'_n(\eta)$ is essentially self-adjoint. Then $\fct(\xi)$ and $\fct'(\eta)$ commute strongly.
\end{Corollary}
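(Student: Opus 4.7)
The plan is to apply Proposition \ref{pr:general} with $A = \fct(\xi)$, $B = \fct'(\eta)$, $T_0 = c(N+\1)$ for a suitable constant $c$ depending on $\xi,\eta$, and with $\Dom = \Dom(\xi,\eta)$ (or the algebraic direct sum of a core of each $\chi_n(\xi)+\chi'_n(\eta)$). Condition (\ref{gn4}) is exactly Proposition \ref{pr:weak}, so the work is in verifying (\ref{gn1})--(\ref{gn3}) and the essential self-adjointness of $T$.

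First, I would analyze $T = \fct(\xi)+\fct'(\eta)+c(N+\1) = \chi(\xi)+\chi'(\eta)+c(N+\1) + \phi(\xi)+\phi'(\eta)$. The operator $S_0 := \chi(\xi)+\chi'(\eta)+c(N+\1)$ reduces to each $n$-particle subspace, where by hypothesis $\chi_n(\xi)+\chi'_n(\eta)$ is essentially self-adjoint; adding $c(n+1)\1$ does not affect this, and for $c$ large enough (say $c \ge c(\xi,\eta)+1$ from Proposition \ref{pr:positive}), $S_0 \ge \1$. The Lechner-type operators $\phi(\xi)$ and $\phi'(\eta)$ are controlled by the standard $N^{1/2}$ bound $\|z^\sharp(\xi)\Psi\|\le \|\xi\|\,\|(N+\1)^{1/2}\Psi\|$, so $\phi(\xi)+\phi'(\eta)$ is infinitesimally small with respect to $N+\1$, hence with respect to $S_0$. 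By Kato--Rellich, $T$ is essentially self-adjoint on any core of $S_0$, in particular on $\Dom(\xi,\eta)$, and $T\ge \1$ up to adjusting $c$.

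For (\ref{gn1}), combine Proposition \ref{pr:positive}, which gives $\|\chi(\xi)\Psi\|, \|\chi'(\eta)\Psi\| \le \|S_0\Psi\|$, with the $N^{1/2}$-bound to get $\|\phi(\xi)\Psi\|+\|\phi'(\eta)\Psi\|\le C\|(N+\1)^{1/2}\Psi\|\le C'\|T\Psi\|$ (since $T\ge \1$, and the Kato--Rellich estimate identifies $(N+\1)$-norms with $T$-norms up to constants on the common domain). Hence $\|A\Psi\|,\|B\Psi\|\le c_1\|T\Psi\|$.

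For (\ref{gn2}) and (\ref{gn3}), I compute $[T_0,A] = c[N,\fct(\xi)]$. Since $\chi_n(\xi)$ preserves each $\H_n$, $[N,\chi(\xi)]=0$, and the only contribution comes from $[N,\phi(\xi)] = z^\dagger(\xi)-z(\xi)$, which satisfies $\|(z^\dagger(\xi)-z(\xi))\Psi\| \le 2\|\xi\|\,\|(N+\1)^{1/2}\Psi\|$. Then
\begin{gather*}
|\<T_0\psi,A\varphi\>-\<A\psi,T_0\varphi\>| = |\<\psi,[A,T_0]\varphi\>|\le 2c\|\xi\|\,\|\psi\|\,\|(N+\1)^{1/2}\varphi\|,
\end{gather*}
which gives (\ref{gn3}) because $(N+\1)^{1/2}\le T^{1/2}\cdot\mathrm{const}$, and polarizing yields the symmetric $T^{1/2}$-bound of (\ref{gn2}). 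The same argument applies to $B=\fct'(\eta)$.

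The main obstacle, and the only place where something subtle happens, is the essential self-adjointness of $T$ on its natural domain: one must verify that a common core of $\chi_n(\xi)+\chi'_n(\eta)$ for all $n$ can be assembled into a core of $S_0$ on the full Fock space, and then confirm that the $\phi$-perturbation is genuinely infinitesimal in the Kato--Rellich sense (uniformly in $n$, which is where the linear-in-$n$ constant of Proposition \ref{pr:positive} and the $\|\xi\|\sqrt{N+1}$ bound interact). Once this is in place, Proposition \ref{pr:general} applies and yields that $\fct(\xi)$ and $\fct'(\eta)$ are essentially self-adjoint on any core of $T$ and commute strongly.
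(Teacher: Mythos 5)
Your proposal is correct and follows essentially the same route as the paper: apply Proposition \ref{pr:general} with $A=\fct(\xi)$, $B=\fct'(\eta)$, $T_0=c(N+\1)$, obtaining essential self-adjointness of $T$ by Kato--Rellich perturbation of $\chi(\xi)+\chi'(\eta)+c(N+\1)$ (which is essentially self-adjoint because it preserves each $\H_n$), condition~(\ref{gn1}) from Proposition \ref{pr:positive} together with the standard creation/annihilation bounds, conditions~(\ref{gn2})--(\ref{gn3}) from $[N,\phi(\xi)]=z^\dagger(\xi)-z(\xi)$, and condition~(\ref{gn4}) from Proposition \ref{pr:weak}. The only cosmetic difference is the choice of the constant $c$ (the paper takes $c=\|\xi\|+\|\eta\|+c(\xi,\eta)$ so that the cruder $(N+\1)$-bound on $\phi(\xi)+\phi'(\eta)$ already has relative bound $<1$), and the ``obstacle'' you flag at the end is exactly what your second paragraph, like the paper, already disposes of.
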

\begin{proof}
From the assumption, it is immediate that $\chi(\xi) + \chi'(\eta) + c(N+\1)$ is essentially self-adjoint, where $N$ is the number operator and $c = \|\xi\| + \|\eta\| + c(\xi,\eta)$, where $c(\xi,\eta)$ comes from Proposition~\ref{pr:positive}. For simplicity we do not indicate the dependence of $c$ on $\xi$ and $\eta$. As~$\chi(\xi)$ and~$\chi'(\eta)$ are def\/ined at each $n$-particle space, they strongly commute with $N$.

Next, let us observe that $\fct(\xi) + \fct'(\eta) + c(N+\1)$ is essentially self-adjoint. Indeed, this is equal to the sum of $\chi(\xi) + \chi'(\eta) + c(N+\1)$ and $\phi(\xi) + \phi'(\eta)$. As $\chi(\xi) + \chi'(\eta)$ and $N+\1$ are both positive and strongly commute, we have for any $\Psi \in \Dom(\xi,\eta)$ that $\|c(N+\1)\Psi\| \le \|(\chi(\xi) + \chi'(\eta) + c(N+\1))\Psi\|$. On the other hand, as $\phi(\xi) + \phi'(\eta)$ are the sum of creation and annihilation operators, it holds by the above choice of $c$ that $\|(\phi(\xi) + \phi'(\eta))\Psi\| \le \|(\|\xi\| + \|\eta\|)(N+\1)\Psi\| < \|c(N+\1)\Psi\|$. Now by Kato--Rellich perturbation theorem \cite[Theorem~X.12]{RSII}\footnote{Here the larger operator is only essentially self-adjoint, but the estimate extends to the whole domain, hence the theorem applies. Accordingly, the conclusion of the theorem is essential self-adjointness.}, $\fct(\xi) + \fct'(\eta) + c(N+\1)$ is essentially self-adjoint on $\Dom(\xi,\eta)$.

In order to show the hypothesis (\ref{gn1}) of Proposition~\ref{pr:general} with $A = \fct(\xi)$, $B = \fct'(\eta)$, $T_0 = c(N+\1)$, note that, by Proposition~\ref{pr:positive} and the def\/inition of $c$, we have
 \begin{gather*}
 \big\|\fct(\xi)\Psi\big\| = \|(\phi(\xi) + \chi(\xi))\Psi\| \le \|c(N+\1)\Psi\| + \|(\chi(\xi) + \chi'(\eta)) + c(\xi,\eta)(N+\1)\Psi\| \\
\hphantom{\big\|\fct(\xi)\Psi\big\| = \|(\phi(\xi) + \chi(\xi))\Psi\|}{} \le 2\|(\chi(\xi) + \chi'(\eta) + c(N+\1))\Psi\|
 \end{gather*}
 and a parallel estimate shows that $\big\|\fct'(\eta)\Psi\big\| \le 2\|(\chi(\xi) + \chi'(\eta) + c(N+\1))\Psi\|$.

As for (\ref{gn2}) and (\ref{gn3}), recall that $N$ commutes with $\chi(\xi)$ and $\chi'(\eta)$, therefore, the (weak) commutator between $(N+\1)$ and $\fct(\xi)$ is reduced to $[N,\phi(\xi)]$, which is the sum of $[N, z^\dagger(\xi)] = z^\dagger(\xi)$ and $[N, z(\xi)] = -z(\xi)$. Namely, we have
 \begin{gather*}
 \big|\<N\Psi, \fct(\xi)\varphi\> - \<\fct(\xi)\Psi, N\Phi\>\big| = |\<\Psi, (z^\dagger(\xi) - z(\xi))\Phi\>|
 \end{gather*}
and it is easy to see that this can be bounded by both $\big\|(c(N+\1))^\frac12\Psi\big\|\cdot \big\|(c(N+\1))^\frac12\Phi\big\|$ and $\|\Psi\|\cdot \|c(N+\1)\Phi\|$.

As we saw that $\fct(\xi)$ and $\fct'(\eta)$ weakly commute on $\Dom(\xi,\eta)$ in Section~\ref{weak}, we can apply Proposition~\ref{pr:general} and they strongly commute.
\end{proof}

\subsection{Construction of Borchers triple}\label{nets}
We now proceed to von Neumann algebras, always assuming that $\chi_n(\xi)+\chi'_n(\eta)$ is essentially self-adjoint on $\Dom_n(\xi,\eta)$. Among the properties of Borchers triples, important are the separating property of the vacuum $\Omega$ and the endomorphic action of the Poincar\'e group. The former is an immediate consequence of Corollary~\ref{co:strong} and the latter follows from the canonical correspondence from~$\xi$ to~$\chi(\xi)$.

{\sloppy For the next lemma, we need to recall the Beurling factorization \cite[Theorems~17.15 and~17.17]{Rudin87}. Any bounded analytic function $f(\zeta)$ on the unit circle can be decomposed in to the product $f(\zeta)=cf_{\bl}(\zeta)f_{\tout}(\zeta)f_{\tin}(\zeta)$, where the Blaschke product $f_{\bl}$ is a possibly inf\/inite product of functions $\frac{|\a|}{\a}\frac{\a-\zeta}{1-\bar\a\zeta}$, where $|\a| < 1$, $f_{\tout}$ is the Poisson integral of the boundary value of~$|f(\zeta)|$ on~$S^1$ and~$f_{\tin}$ is the Poisson integral of a certain singular measure. A corresponding decomposition holds for any region in~$\CC$ which is conformally equivalent to the unit disk, in particular, the strip $\SS_{-\frac{2\pi}3,-\frac\pi 3}$. In~\cite{Tanimoto15-1}, we exploited this decomposition in order to construct the analytic function~$\eta_0$ (see Section~\ref{one}).

}

From here on, we switch to the objects with $'$, because they are by convention those which generate the von Neumann algebra.

\begin{Lemma}\label{lm:covariance}
 $\chi'(\eta)$ and $\fct'(\eta)$ are covariant in the following sense: if $a \in W_\R$, then we have $\chi'(U_1(a,\l)\eta) = \Ad U(a,\l)(\chi'(\eta))$ and $\fct'(U_1(a,\l)\eta) = \Ad U(a,\l)(\fct'(\eta))$ as operators $($inclu\-ding the domains$)$.
\end{Lemma}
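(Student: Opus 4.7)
Since $\fct'(\eta)=\phi'(\eta)+\chi'(\eta)$ and the covariance of Lechner's field $\phi'(\eta)$ (for general $\eta\in\hardytwolf\cap\hardyinflf$ satisfying the reality condition) is already part of the Hardy-space extension of the construction in \cite{Lechner03}, it suffices to establish the claim for $\chi'(\eta)$. The proof has three ingredients: (i) stability of the hypotheses on $\eta$ under $U_1(a,\l)$, (ii) the formal operator identity on a common analytic domain, and (iii) matching of the self-adjointness domains. I would organise the argument around these three points.

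First, I would verify that $U_1(a,\l)\eta$ still admits a square-root representation $\underline{\eta}'{}^2$ with $\underline{\eta}'\in\hardytwolf\cap\hardyinflf$ satisfying $\underline{\eta}'(\theta-i\pi)=\overline{\underline{\eta}'(\theta)}$. The natural candidate is $\underline{\eta}'(\theta):=e^{\frac{i}{2}p(\theta)\cdot a}\,\underline\eta(\theta-\l)$. For $a\in W_\R$ the factor $e^{\frac{i}{2}p(\zeta)\cdot a}$ is bounded by $1$ on $\striplf$ (one checks that $\re(ip(\zeta)\cdot a)\le 0$ there using $p(\zeta)=m(\cosh\zeta,\sinh\zeta)$ and $a^\pm=a^0\pm a^1$), the real shift by $\l$ preserves Hardy spaces, and the reality condition follows from $p(\zeta-i\pi)=-p(\zeta)$. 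Thus $U_1(a,\l)\eta$ lies in the class for which $\chi'(\cdot)$ was defined in Section~\ref{one}.

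Second, on a dense set of vectors that are analytic in the relevant strip I would compute $U_n(a,\l)\chi'_n(\eta)U_n(a,\l)^*$ directly from the formula for $\chi'_n(\eta)$. For $n=1$ one gets, after the usual manipulations, the prefactor $e^{ip(\theta)\cdot a}e^{-ip(\theta+\frac{\pi i}3)\cdot a}\,\eta(\theta-\l-\frac{\pi i}3)$ in front of $\Psi(\theta+\frac{\pi i}3)$. The decisive input is the bound-state momentum identity
\begin{equation*}
 p\Bigl(\theta+\tfrac{\pi i}3\Bigr)+p\Bigl(\theta-\tfrac{\pi i}3\Bigr)=p(\theta),
\end{equation*}
which is an immediate consequence of $\cos\frac{\pi}{3}=\frac12$ applied to $\cosh$ and $\sinh$. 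This collapses the two exponentials to $e^{ip(\theta-\frac{\pi i}3)\cdot a}$, producing exactly $(U_1(a,\l)\eta)(\theta-\frac{\pi i}3)$ and hence the right-hand side $\chi'_1(U_1(a,\l)\eta)\Psi$. For $n\ge 2$ the only additional pieces are the $S$-factors $S(\theta_j-\theta_{n-k+1}+\frac{\pi i}3)$, whose arguments depend only on rapidity differences and are therefore invariant under both translation (by a common phase cancellation) and boost; the operator identity on the analytic domain follows.

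Third, and this is the main technical step, one has to show the set-theoretic equality $\dom(\chi'_1(U_1(a,\l)\eta))=U_1(a,\l)\dom(\chi'_1(\eta))$. Recalling that $\dom(\chi'_1(\eta))=M_{\eta_0(\cdot-\frac{2\pi i}3)}^{*}\hardytwou$, I would produce from $\underline{\eta}'$ a Beurling factorization data $\eta'_0\in H^\infty(\SS_{-\frac{2\pi}3,-\frac\pi 3})$ satisfying $|\eta'_0(\theta-\frac{2\pi i}3)|=1$ and \eqref{eq:consistency} (with $\eta$ replaced by $U_1(a,\l)\eta$) by multiplying $\eta_0(\cdot-\l)$ with an explicit bounded analytic factor built from $e^{\frac{i}{2}p(\zeta)\cdot a}$ (whose modulus and phase on the strip are controlled using $p(\zeta-i\pi/3)$). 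The existence of such a factor, having unit modulus on the lower edge and compatible boundary values on the upper edge of $\SS_{-\frac{2\pi}3,-\frac\pi 3}$, is the step I expect to require the most care, and it is what genuinely uses the analytic structure of $e^{ip(\cdot)\cdot a}$ in the physical strip for $a\in W_\R$; once it is in place, the identity $U_1(a,\l)M_{\eta_0(\cdot-\frac{2\pi i}3)}^{*}\hardytwou=M_{\eta'_0(\cdot-\frac{2\pi i}3)}^{*}\hardytwou$ is immediate from the intertwining relation $\chi'_1(\cdot)=M_{\cdot}^{*}\shiftione M_{\cdot}$. The $n$-particle case reduces to this via $\chi'_n=nP_n(\1\otimes\cdots\otimes\chi'_1)P_n$ and the commutation of $P_n$ with $U_n(a,\l)$, and combining with covariance of $\phi'$ yields the statement for $\fct'$.
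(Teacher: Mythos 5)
Your outline starts along the same lines as the paper (stability of the class $\eta=\underline\eta^2$ under $U_1(a,\l)$ with $\underline\eta'(\theta)=e^{\frac i2 p(\theta)\cdot a}\underline\eta(\theta-\l)$, reduction to the one-particle operator, and matching of the factorization data $\eta_0$), but the step you explicitly defer in (iii) is exactly where the whole proof lives, and your hint for it does not work as stated. The paper settles it through the Beurling decomposition $M_{\eta_0(\cdot\,-\frac{2\pi i}3)}=M_{\eta_\bl}M_{\eta_\tin}M_{\eta_-}$: for a boost the claim follows from the canonicity of this decomposition under real shifts, and for a translation one combines the identity $-p(\theta)+p\big(\theta-\frac{\pi i}3\big)=p\big(\theta-\frac{2\pi i}3\big)$ with the fact that $e^{ia\cdot p(\,\cdot\,)}$, although singular on $\SS_{-\pi,0}$, is \emph{outer} on the smaller strip $\SS_{-\frac{2\pi}3,-\frac\pi3}$; this forces $(U_1(a,0)\eta)_\bl=\eta_\bl$, $(U_1(a,0)\eta)_\tin=\eta_\tin$ and multiplies the outer part precisely by $e^{-ia\cdot p(\theta)}$, so that $M_{(U_1(a,0)\eta)_0(\cdot\,-\frac{2\pi i}3)}=M_{\eta_0(\cdot\,-\frac{2\pi i}3)}U_1(a,0)^*$ and the covariance, domains included, follows at once from $\chi'_1(\eta)=M_{\eta_0(\cdot\,-\frac{2\pi i}3)}^*\shiftione M_{\eta_0(\cdot\,-\frac{2\pi i}3)}$ (your step (ii), the computation on analytic vectors, is then not even needed and by itself only gives agreement on a dense set, which is weaker than the statement).

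Concretely, two things are missing. First, the factor itself: what has to multiply $\eta_0(\cdot-\l)$ is the full-power outer function $e^{ia\cdot p(\zeta-\frac{\pi i}3)}=e^{-ia\cdot p(\zeta+\frac{2\pi i}3)}$, whose boundary value at $\im\zeta=-\frac{2\pi}3$ is the unimodular function $e^{-ia\cdot p(\theta)}$; the function $e^{\frac i2 p(\zeta)\cdot a}$ you propose to build it from satisfies a consistency product of the right shape but is \emph{not} of modulus one on the lower edge (for real $\theta$, $a\cdot p\big(\theta-\frac{2\pi i}3\big)$ has nonvanishing imaginary part), so it cannot yield an admissible $\eta'_0$ with $|\eta'_0(\theta-\frac{2\pi i}3)|=1$ and a unitary $M_{\eta'_0(\cdot\,-\frac{2\pi i}3)}$. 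Second, and more importantly, even with the correct factor you must verify that it is outer on $\SS_{-\frac{2\pi}3,-\frac\pi3}$, so that the product with $\eta_0(\cdot-\l)$ is the \emph{canonical} factorization datum of $U_1(a,\l)\eta$ (the one fixed in Section~\ref{one} via the Blaschke, singular inner and outer parts) and not merely some solution of the analogue of \eqref{eq:consistency}: different admissible choices of $\eta_0$ correspond to different self-adjoint extensions with different domains, so without this outerness argument the asserted equality ``as operators, including the domains'' does not follow.
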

\begin{proof}
First note that, if $\eta \in \hardytwolf \cap \hardyinflf$, then it follows from $(U_1(a,\l)\eta)(\theta) = e^{ip(\theta)\cdot a}\eta(\theta - \l)$ that $U_1(a,\l)\eta \in \hardytwolf \cap \hardyinflf$, as $a \in W_\R$ implies that\footnote{Our convention of the Minkowski metric is $a\cdot b = a_0b_0 - a_1b_1$.} $|e^{ip(\zeta)\cdot a}| \le 1$ for $\zeta \in \striplf$. Furthermore, if $\eta = \underline \eta^2$, then $(U_1(a,\l)\eta)(\theta) = \left(e^{ip(\theta)\cdot a/2}\underline\eta(\theta-\l)\right)^2$, hence $\chi'(U_1(a,\l)\eta)$ can be def\/ined.

Next, let us remember how $\chi_1'(\eta) = M_{\eta_0(\cdot \, - \frac{2\pi i}3)}^{\prime *}\shiftione M_{\eta_0(\cdot \, - \frac{2\pi i}3)}'$ is constructed: $M_{\eta_0(\cdot \, - \frac{2\pi i}3)} = M_{\eta_\bl}M_{\eta_\tin}M_{\eta_-}$ (again up to the scaling), where $M_{\eta_\bl}$ is the Blaschke product corresponding to the zeros of $\underline\eta$ in $\SS_{-\frac{2\pi}3,-\frac\pi 3}$, $M_{\eta_\tin}$ is the inner singular part of $\underline\eta$ again in $H^\infty\big(\SS_{-\frac{2\pi}3,-\frac{\pi}3}\big)$ and $\eta_-$ is an outer function on $\SS_{-\frac{2\pi}3,-\frac{\pi}3}$ such that $ \eta_-\big(\theta - \frac{\pi i}3\big)\overline{\eta_-\big(\theta - \frac{2\pi i}3\big)} = \eta_\tout\big(\theta - \frac{\pi i}3\big)$, or equivalently,
 \begin{gather}\label{eq:outer}
 \eta_-\left(\theta - \frac{2\pi i}3\right)\overline{\eta_-\left(\theta - \frac{\pi i}3\right)} = \eta_\tout\left(\theta - \frac{2\pi i}3\right),
 \end{gather}
 where $\eta_\tout$ is the outer part of $\underline\eta$.

First we consider a pure boost $U_1(0,\l)$. As the decomposition $M_{\eta_0(\cdot \,- \frac{2\pi i}3)} = M_{\eta_\bl}M_{\eta_\tin}M_{\eta_-}$ is canonical where ${\eta_\bl}$ is a~Blaschke product, $\eta_\tin$ is inner and~$\eta_-$ is outer with the above equality, we also have the corresponding decomposition for $U_1(0,\l)\eta$ and we obtain $\chi'_1(U_1(0,\l)\eta) = U_1(0,\l)\chi'(\eta)U_1(0,\l)^*$.

Next (and more importantly), take a pure translation $U_1(a,0)$. We have $(U_1(a,0)\eta)(\theta) = e^{ia\cdot p(\theta)}\eta(\theta)$, hence $(U_1(a,0)\eta)\big(\theta-\frac{2\pi i}3\big) = e^{ia\cdot p(\theta-\frac{2\pi i}3)}\eta\big(\theta-\frac{2\pi i}3\big)$. The function $e^{ia\cdot p(\theta)}$ is purely outer\footnote{$e^{ia\cdot p(\theta)}$ is singular as a function on $\SS_{-\pi,0}$, but is outer when restricted to $\SS_{-\frac{2\pi}3,-\frac{\pi}3}$, as it gets scaled.} on $\SS_{-\frac{2\pi}3,-\frac{\pi}3}$. It holds that $-p(\theta)+p\big(\theta-\frac{\pi i}3\big) = p\big(\theta-\frac{2\pi i}3\big)$, hence
 \begin{gather*}
 e^{-ia\cdot p(\theta)} \cdot \overline{e^{-ia\cdot p(\theta+\frac{\pi i}3)}} = e^{ia\cdot p(\theta-\frac{2\pi i}3)},
 \end{gather*}
which, by comparison with \eqref{eq:outer}, implies that $(U_1(a,0)\eta)_-$ is the product $\eta_-$ and $e^{-ia\cdot p(\theta)}$, i.e.,
 \begin{gather*}
 (U_1(a,0)\eta)_-\left(\theta - \frac{\pi i}3\right) = e^{-ia\cdot p(\theta)}\eta_-\left(\theta - \frac{\pi i}3\right)
 \end{gather*}
and $(U(a,0)\eta)_\bl = \eta_\bl$, $(U(a,0)\eta)_\tin = \eta_\tin$. In other words,
\begin{gather*}
 M_{(U_1(a,0)\eta)_-(\cdot \,-\frac{\pi i}3)} = U_1(-a,0) M_{\eta_-(\cdot \,-\frac{\pi i}3)} = U_1(a,0)^* M_{\eta_-(\cdot \,-\frac{\pi i}3)}.
\end{gather*}
Now we have the decomposition
\begin{gather*}
 M_{(U_1(a,0)\eta)_0(\cdot \,- \frac{2\pi i}3)} = M_{\eta_\bl}M_{\eta_\tin}M_{\eta_-}U_1(a,0)^*,
 \end{gather*}
therefore,
 \begin{gather*}
 \chi'_1(U_1(a,0)\eta) = U_1(a,0)\chi'(\eta)U_1(a,0)^*.
 \end{gather*}

The covariance of the full operators $\chi'(\eta) = \bigoplus_n P_n(\1\otimes\cdots\otimes\1\otimes \chi'_1(\eta))P_n$ and $\fct'(\eta) = \phi'(\eta) + \chi'(\eta)$ follows from the covariance of~$\chi'_1(\eta)$.
\end{proof}

\begin{Theorem}
Assume that $\V$ is a subset of $\hardytwouf\cap\hardyinfuf$, whose elements $\xi$ have the form $\xi = \underline\xi^2$, satisfy $\underline\xi(\theta + i\pi) = \overline{\underline\xi(\theta)}$, and $\V$ is invariant under $U_1(a,\l), (a,\l)\in\poincare$ such that $a\in W_\L$ and is total in $\H_1$. Assume further that $\chi(\xi) + \chi'(\eta)$ is essentially self-adjoint on $\Dom(\xi,\eta)$ for any pair of $\xi \in \V$, $\eta \in J_1\V$.

We define
 \begin{gather*}
 \M = \big\{e^{i\fct'(\eta)}\colon \eta \in J_1\V \big\}''.
 \end{gather*}
Then, $(\M,U,\Omega)$ is a Borchers triple.
\end{Theorem}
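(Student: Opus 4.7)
The plan is to check the three defining conditions of a Borchers triple in turn: the cyclic and separating property of $\Omega$ for $\M$, the spectrum condition for $U$, and $W_\R$-covariance of $\M$. The starting point is that, by the standing essential self-adjointness hypothesis and Corollary~\ref{co:strong}, for every $\xi \in \V$ and $\eta \in J_1\V$ the self-adjoint operators $\fct(\xi)$ and $\fct'(\eta)$ strongly commute. Consequently $e^{it\fct(\xi)}$ belongs to $\M'$ for every $\xi \in \V$ and $t \in \RR$, and $\fct(\xi)$ is affiliated with $\M'$.

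The separating property of $\Omega$ for $\M$ I would obtain from density of $\M'\Omega$. Since $\chi_0 = 0$ by definition, $\fct(\xi)\Omega = \phi(\xi)\Omega = \xi$, and more generally $\fct(\xi_1)\cdots\fct(\xi_k)\Omega$ equals the $S$-symmetrized tensor $P_k(\xi_1\otimes\cdots\otimes\xi_k)$ plus a sum of vectors living in $\H_j$ for $j < k$ (the $z^\dagger$-parts produce the leading $k$-particle piece, whereas the $z$-parts, the $\chi_n$-parts and mixed terms all lower the particle number). Since $\fct(\xi)$ is affiliated with $\M'$, polynomial applications of the $\fct(\xi_j)$ to a common domain vector stay in $\overline{\M'\Omega}$ (rigorously via bounded functional calculus on the strongly commuting family $\fct(\xi_1),\dots,\fct(\xi_k)$, or equivalently by differentiating products of $e^{it\fct(\xi_j)}$ applied to $\Omega$). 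An induction on $n$ then places every $P_n(\xi_1\otimes\cdots\otimes\xi_n)$, $\xi_j \in \V$, in $\overline{\M'\Omega}$, and the totality of $\V$ in $\H_1$ gives $\overline{\M'\Omega} = \H$. Thus $\Omega$ is separating for $\M$. Cyclicity of $\Omega$ for $\M$ follows by the exact same argument applied to the generators $\fct'(\eta)$ with $\eta \in J_1\V$, noting that $J_1\V$ is total since $J_1$ is antiunitary.

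The spectrum condition is essentially tautological: $U$ is the second quantization on $\H = \bigoplus_n P_n\H_1^{\otimes n}$ of the one-particle representation $(U_1(a,\l)\psi)(\theta) = e^{ip(\theta)\cdot a}\psi(\theta-\l)$, whose translation spectrum is the positive mass-shell; hence the joint translation spectrum on each $\H_n$ lies in $n$-fold sums of $\overline{V_+}$ and in particular in $\overline{V_+}$. For $W_\R$-covariance, fix $(a,\l)\in\poincare$ with $a\in W_\R$. Lemma~\ref{lm:covariance} identifies $\Ad U(a,\l)\fct'(\eta) = \fct'(U_1(a,\l)\eta)$ as self-adjoint operators, and hence $U(a,\l)e^{i\fct'(\eta)}U(a,\l)^* = e^{i\fct'(U_1(a,\l)\eta)}$. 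It therefore suffices to verify $U_1(a,\l)(J_1\V) \subset J_1\V$: writing $\eta = J_1\xi$, the identity $U_1(a,\l)J_1 = J_1 U_1(-a,\l)$, read off directly from the explicit formulas for $U_1$ and $J_1$, together with $-a \in W_\L$ and the assumed $W_\L$-invariance of $\V$, gives $U_1(a,\l)\eta = J_1 U_1(-a,\l)\xi \in J_1\V$. This yields $\Ad U(a,\l)\M \subset \M$.

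The only genuine analytic obstacle in the theorem is already packaged into the essential self-adjointness assumption on $\chi(\xi)+\chi'(\eta)$, which is the subject of Section~\ref{sa}. Once this is in hand, the verification of the three Borchers triple axioms is a largely formal assembly of Corollary~\ref{co:strong}, the Fock-space generation argument for cyclic and separating vectors, and the covariance Lemma~\ref{lm:covariance}; I expect no further subtleties beyond carefully tracking the $S$-symmetrization in the inductive cyclicity step and the $J_1$-intertwining in the covariance step.
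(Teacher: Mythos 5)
Your overall architecture is the same as the paper's: covariance is obtained from Lemma~\ref{lm:covariance} together with the intertwining $U_1(a,\l)J_1=J_1U_1(-a,\l)$, which turns the assumed $W_\L$-invariance of $\V$ into $W_\R$-invariance of $J_1\V$ (this is exactly the paper's argument), the spectrum condition is indeed automatic, and the separating property is reduced, via Corollary~\ref{co:strong}, to cyclicity of $\Omega$ for the commuting algebra generated by the $e^{i\fct(\xi)}$, $\xi\in\V$. The difference is that the paper does not reprove this cyclicity (Reeh--Schlieder) step; it imports it from \cite[Section~3]{CT15-1}, whereas you sketch it from scratch, and it is precisely there that your argument has a genuine gap.

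Concretely, your justification of ``polynomial applications of the $\fct(\xi_j)$ to $\Omega$ stay in $\overline{\M'\Omega}$'' is unsupported on both offered routes. First, $\fct(\xi_1),\dots,\fct(\xi_k)$ is \emph{not} a strongly (or even weakly) commuting family: Corollary~\ref{co:strong} only concerns a left/right pair $\fct(\xi)$, $\fct'(\eta)$, and fields attached to the same wedge are not expected to commute at all, so there is no joint functional calculus to invoke. Second, the iterated differentiation of $e^{it_1\fct(\xi_1)}\cdots e^{it_k\fct(\xi_k)}\Omega$ requires the intermediate vectors to lie in the relevant domains; already at the second step one needs $\fct(\xi_k)\Omega=\xi_k\in\dom(\chi_1(\xi_{k-1}))=M_{(\xi_{k-1})_0(\cdot\,+\frac{2\pi i}3)}^*\hardytwol$, i.e.\ a (twisted) analytic continuation of $\xi_k$ into the \emph{lower} strip $\SS_{-\frac\pi3,0}$, while $\xi_k\in\hardytwouf$ is only guaranteed to be analytic in the upper strip. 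For Hardy-class $\xi$'s this membership fails in general, so the induction ``places every $P_n(\xi_1\otimes\cdots\otimes\xi_n)$ in $\overline{\M'\Omega}$'' is not established; one needs either the actual Reeh--Schlieder mechanism of \cite{CT15-1} (translation analyticity from the spectrum condition, resp.\ the good analyticity of $f^\pm$ available there) or an explicit verification of these iterated domain conditions. (Two minor inaccuracies in the same step: the $\chi$-parts preserve, rather than lower, the particle number --- the leading-term structure you claim still holds, since reaching $\H_k$ forces all $k$ factors to be creation parts --- and the leading term carries a factor $\sqrt{k!}$.) Apart from this step, your proposal coincides with the paper's proof.
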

\begin{proof}
$\Omega$ is cyclic for $\M$ by the argument of \cite[Section~3, Reeh--Schlieder property]{CT15-1} and the density of the linear span of~$\V$ in~$\H_1$. It is separating because~$\Omega$ is cyclic for $\big\{e^{i\fct(\xi)}\colon \xi \in \V \big\}''$, which commutes with~$\M$ by Corollary~\ref{co:strong}. Finally, for $a \in W_\R$, $\Ad U(a,\l)\M \subset \M$ follows immediately from Lemma~\ref{lm:covariance} and the assumption that $\V$ is invariant under $U_1(a,\l)$ for $a \in W_\L$, hence~$J_1\V$ is invariant under $U_1(a,\l)$, $a \in W_\R$.
\end{proof}

Note that, if the assumption of strong commutativity is dropped, the totality of $\V$ is easy, as (cf.~\cite{CT15-1}). We only have to f\/ind suf\/f\/iciently many~$\xi$'s and~$\eta$'s which satisfy strong commutativity.

\section{Self-adjointness of bound state operators}\label{sa}
Here we consider $\chi(\xi) + \chi'(\eta)$. Although our f\/ields is $\fct(\xi) = \phi(\xi) + \chi(\xi)$ and not $\chi(\xi)$ or $\chi'(\eta)$, the question of the domain of $\fct(\xi)$ can be solved by Corollary~\ref{co:strong}, once that of $\chi(\xi) + \chi'(\eta)$ is settled. We show that $\chi_n(\xi) + \chi'_n(\eta)$ is essentially self-adjoint for $n = 1,2$ if $\xi = \underline\xi^2$ and $\eta = \underline\eta^2$ as in Section~\ref{one}, and give some observations for $n \ge 3$.

\subsubsection*{Observations on closure}
\begin{Lemma}\label{lm:sum-closed}
Let $X$, $Y$ be closed operators and assume that there is $c \in \RR$ such that $c\|\Psi\|^2 \le \re \<X\Psi, Y\Psi\>$ for $\Psi \in \dom(X)\cap \dom(Y)$. Then $X+Y$ is closed.
\end{Lemma}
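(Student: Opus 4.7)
The plan is to use the quadratic identity
\begin{gather*}
\|(X+Y)\Psi\|^2 = \|X\Psi\|^2 + 2\re\<X\Psi,Y\Psi\> + \|Y\Psi\|^2,
\end{gather*}
which, combined with the hypothesis $c\|\Psi\|^2 \le \re\<X\Psi,Y\Psi\>$, gives for every $\Psi \in \dom(X)\cap\dom(Y)$ the fundamental estimate
\begin{gather*}
\|X\Psi\|^2 + \|Y\Psi\|^2 \le \|(X+Y)\Psi\|^2 - 2c\|\Psi\|^2.
\end{gather*}
This is the only substantive input; the rest is extracting convergence of $X\Psi_n$ and $Y\Psi_n$ separately from convergence of the sum.

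First I would take a sequence $\Psi_n \in \dom(X)\cap\dom(Y)$ with $\Psi_n \to \Psi$ and $(X+Y)\Psi_n \to \Phi$, and apply the fundamental estimate to the differences $\Psi_n - \Psi_m$ (which lie in $\dom(X)\cap\dom(Y)$ since it is a linear subspace). Since $\{\Psi_n\}$ is Cauchy in $\H$ and $\{(X+Y)\Psi_n\}$ is Cauchy by assumption, both $\|\Psi_n - \Psi_m\|$ and $\|(X+Y)(\Psi_n-\Psi_m)\|$ tend to zero, and the right-hand side (regardless of the sign of $c$, since $-2c\|\Psi_n-\Psi_m\|^2$ is bounded by $2|c|\|\Psi_n-\Psi_m\|^2$) tends to zero. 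Hence $\{X\Psi_n\}$ and $\{Y\Psi_n\}$ are both Cauchy in $\H$ and converge to some vectors $\Phi_X$ and $\Phi_Y$.

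Next I would invoke closedness of $X$ and $Y$ separately. Since $\Psi_n \to \Psi$ and $X\Psi_n \to \Phi_X$, closedness of $X$ yields $\Psi \in \dom(X)$ and $X\Psi = \Phi_X$; similarly $\Psi \in \dom(Y)$ with $Y\Psi = \Phi_Y$. Therefore $\Psi \in \dom(X+Y) = \dom(X)\cap\dom(Y)$ and
\begin{gather*}
(X+Y)\Psi = \Phi_X + \Phi_Y = \lim_n (X\Psi_n + Y\Psi_n) = \Phi,
\end{gather*}
proving that $X+Y$ is closed on its natural domain.

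There is no real obstacle here; the only thing to watch is that the hypothesis is stated with a possibly negative real constant $c$, so one must not carelessly drop the term $-2c\|\Psi\|^2$ as a non-positive contribution. Writing the bound as $\|X\Psi\|^2 + \|Y\Psi\|^2 \le \|(X+Y)\Psi\|^2 + 2|c|\|\Psi\|^2$ handles both signs uniformly and suffices to conclude Cauchyness of $X\Psi_n$ and $Y\Psi_n$ from Cauchyness of $\Psi_n$ and $(X+Y)\Psi_n$.
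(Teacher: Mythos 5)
Your proposal is correct and is essentially the paper's own proof: both rearrange the identity $\|(X+Y)\Psi\|^2 = \|X\Psi\|^2 + \|Y\Psi\|^2 + 2\re\<X\Psi,Y\Psi\>$ with the hypothesis applied to differences $\Psi_m-\Psi_n$ to extract separate Cauchyness of $\{X\Psi_n\}$ and $\{Y\Psi_n\}$, then use closedness of $X$ and $Y$. Your explicit handling of the possibly negative constant $c$ is a welcome clarification but does not change the argument.
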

\begin{proof}
 Let $\Psi_n \in \dom(X)\cap\dom(Y)$ such that $\Psi_n \to \Psi$ and $(X+Y)\Psi_n \to \Phi$. Then
 \begin{gather*} \|(X+Y)(\Psi_m-\Psi_n)\|^2 \\
 \qquad{} = \|X(\Psi_m-\Psi_n)\|^2 + \|Y(\Psi_m-\Psi_n)\|^2
 + 2\re \<X(\Psi_m-\Psi_n), Y(\Psi_m-\Psi_n)\> \\
 \qquad{} \ge \|X(\Psi_m-\Psi_n)\|^2 + \|Y(\Psi_m-\Psi_n)\|^2 + 2c\|\Psi_m-\Psi_n\|^2,
 \end{gather*}
from which it follows that $\{X\Psi_m\}$ and $\{Y\Psi_m\}$ are both convergent, and therefore, $\Psi \in \dom(X)\cap\dom(Y)$, namely, $X+Y$ is closed.
\end{proof}

\begin{Lemma}\label{lm:closed}
Let $X$, $Y$ be closable operators and assume that there is $c \in \RR$ such that $c\|\Psi\|^2 \le \re \<X\Psi, Y\Psi\>$ for $\Psi \in \dom(X)\cap \dom(Y)$. Then $X+Y$ is closable and $\overline{X+Y} \subset \overline X + \overline Y$.
\end{Lemma}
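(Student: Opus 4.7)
The plan is to reduce this to Lemma \ref{lm:sum-closed} by showing that the given coercivity estimate forces any sequence witnessing a closure of $X+Y$ to split into a genuine approximating sequence for $X$ and for $Y$ separately.

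First I would reprove the key algebraic identity from the proof of Lemma \ref{lm:sum-closed}: for any $\Psi \in \dom(X)\cap\dom(Y)$,
\begin{gather*}
 \|(X+Y)\Psi\|^2 = \|X\Psi\|^2 + \|Y\Psi\|^2 + 2\re\<X\Psi,Y\Psi\> \ge \|X\Psi\|^2 + \|Y\Psi\|^2 + 2c\|\Psi\|^2.
\end{gather*}
Then, given any sequence $\Psi_n \in \dom(X)\cap\dom(Y)$ with $\Psi_n \to \Psi$ and $(X+Y)\Psi_n \to \Phi$, I would apply this estimate to the differences $\Psi_m-\Psi_n$: since $(X+Y)(\Psi_m-\Psi_n)\to 0$ and $\|\Psi_m-\Psi_n\|\to 0$, the estimate forces both $\{X\Psi_n\}$ and $\{Y\Psi_n\}$ to be Cauchy, hence convergent to some $\Phi_X$ and $\Phi_Y$ respectively.

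Closability of $X+Y$ then follows by specializing to $\Psi=0$: in that case, closability of $X$ and of $Y$ individually forces $\Phi_X = 0$ and $\Phi_Y = 0$, hence $\Phi = \Phi_X+\Phi_Y = 0$. For the inclusion $\overline{X+Y} \subset \overline X + \overline Y$, I would take a general $\Psi \in \dom(\overline{X+Y})$ with the approximating sequence $\Psi_n$ as above; the Cauchy argument gives $\Psi \in \dom(\overline X)\cap\dom(\overline Y)$ with $\overline X\Psi = \Phi_X$ and $\overline Y\Psi = \Phi_Y$, so $\overline{X+Y}\,\Psi = \lim(X+Y)\Psi_n = \Phi_X + \Phi_Y = (\overline X + \overline Y)\Psi$.

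There is no real obstacle here; the only subtle point is recognizing that the coercivity bound does the work of upgrading joint convergence of $(X+Y)\Psi_n$ into separate convergence of $X\Psi_n$ and $Y\Psi_n$, which is precisely what is needed both to identify $\overline{X+Y}$ on a dense subspace and to rule out a nontrivial kernel for the closure. Note that I would not need to extend the estimate $c\|\Psi\|^2 \le \re\<X\Psi,Y\Psi\>$ to the closures $\overline X,\overline Y$ (which would be delicate, as it would require a common approximating sequence); the inclusion direction of the conclusion only requires the estimate on the original domain.
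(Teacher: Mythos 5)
Your proposal is correct and follows essentially the same route as the paper: apply the polarization-type identity to differences $\Psi_m-\Psi_n$, use the coercivity bound (together with $\|\Psi_m-\Psi_n\|\to 0$, which correctly absorbs a possibly negative $c$) to upgrade convergence of $(X+Y)\Psi_n$ into separate Cauchy sequences $X\Psi_n$, $Y\Psi_n$, then invoke closability of $X$ and $Y$ for the case $\Psi=0$ and repeat the argument with $\Psi_n\to\Psi$ for the inclusion $\overline{X+Y}\subset\overline X+\overline Y$. Your closing remark that the estimate is only needed on the original domain, not on the closures, is also consistent with how the paper uses the lemma.
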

\begin{proof}
For $\Psi \in \dom(X)\cap\dom(Y)$, we have $\|(X+Y)\Psi\|^2 = \|X\Psi\|^2+\|Y\Psi\|^2+2\re\<X\Psi,Y\Psi\>$ where the last term is bounded below by $c\|\Psi\|^2$ by assumption. If $\Psi_n\to0$ and $(X+Y)\Psi_n\to \Phi$, then $\|(X+Y)(\Psi_n-\Psi_m)\| \to 0$ and
 \begin{gather*}
\lim_{m,n}\,\inf \left(2\re\<X(\Psi_n-\Psi_m),Y(\Psi_n-\Psi_m)\>\right) \ge 0.
 \end{gather*}
From this it follows that $\|X(\Psi_n-\Psi_m)\| \to 0$ and $\|Y(\Psi_n-\Psi_m)\| \to 0$, namely $X\Psi_n$ and $Y\Psi_n$ are separately convergent. As $X$ and $Y$ are closable, we obtain $X\Psi_n\to 0 $ and $Y\Psi_n \to 0$ and $\Phi = 0$, namely $X+Y$ is closable.

To show $\overline{X+Y} \subset \overline X + \overline Y$, we only have to repeat the argument with $\Psi_n \to \Psi$ to obtain $\Psi \in \dom(\overline X)\cap\dom(\overline Y) = \dom(\overline X+\overline Y)$.
\end{proof}

\begin{Lemma}\label{lm:positive-closed}
Let $X$, $Y$ be closed operators and assume that there is $c > 0$ such that $0 \le c^2\|\Psi\|^2 + 2c(\<\Psi,X\Psi\> + \<\Psi, Y\Psi\>) + 2\re \<X\Psi, Y\Psi\>$ for $\Psi \in \dom(X)\cap \dom(Y)$.
 Then $X+Y$ is closed.
\end{Lemma}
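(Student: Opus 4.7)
The plan is to reuse the strategy of Lemma~\ref{lm:sum-closed}: extract from the hypothesis a lower bound of the form
\begin{gather*}
\|(X+Y+c)\Psi\|^2 \ge \|X\Psi\|^2 + \|Y\Psi\|^2,
\end{gather*}
and then argue that convergence of $(X+Y)\Psi_n$ forces the sequences $X\Psi_n$ and $Y\Psi_n$ to be Cauchy separately, so that closedness of $X$ and $Y$ individually gives closedness of the sum.

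First I would expand, for $\Psi \in \dom(X)\cap\dom(Y)$,
\begin{gather*}
\|(X+Y+c)\Psi\|^2 = \|X\Psi\|^2 + \|Y\Psi\|^2 + c^2\|\Psi\|^2 + 2c\,\re(\<\Psi, X\Psi\> + \<\Psi, Y\Psi\>) + 2\re\<X\Psi, Y\Psi\>.
\end{gather*}
Reading the hypothesis with the understood real part on the cross terms $\<\Psi, X\Psi\> + \<\Psi, Y\Psi\>$ (which is the natural interpretation here, and automatic in the intended application to symmetric operators of the form $\chi(\xi)$ and $\chi'(\eta)$, where these quadratic forms are real), the inequality in the statement precisely says that the last three groups sum to a non-negative quantity. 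This yields the displayed bound above.

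Given the bound, suppose $\Psi_n \in \dom(X)\cap\dom(Y)$ with $\Psi_n\to\Psi$ and $(X+Y)\Psi_n\to\Phi$. Then $(X+Y+c)\Psi_n \to \Phi+c\Psi$, and in particular $(X+Y+c)(\Psi_m-\Psi_n)\to 0$. Applying the displayed bound to the difference $\Psi_m-\Psi_n\in\dom(X)\cap\dom(Y)$ gives
\begin{gather*}
\|X(\Psi_m-\Psi_n)\|^2 + \|Y(\Psi_m-\Psi_n)\|^2 \le \|(X+Y+c)(\Psi_m-\Psi_n)\|^2 \longrightarrow 0,
\end{gather*}
so $\{X\Psi_n\}$ and $\{Y\Psi_n\}$ are each Cauchy. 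Since $X$ and $Y$ are closed and $\Psi_n\to\Psi$, we conclude $\Psi\in\dom(X)\cap\dom(Y)$ with $X\Psi_n\to X\Psi$ and $Y\Psi_n\to Y\Psi$, hence $\Phi=(X+Y)\Psi$, proving that $X+Y$ is closed.

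There is no real obstacle: the whole content is the algebraic rewriting of the hypothesis as a bound on $\|(X+Y+c)\Psi\|^2$, after which the Cauchy argument is identical to the one in Lemma~\ref{lm:sum-closed}. The only subtle point worth flagging in the write-up is the implicit real part in the quadratic terms of the assumption.
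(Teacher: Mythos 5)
Your proof is correct and follows essentially the same route as the paper: expand $\|(X+Y+c)\Psi\|^2$, use the hypothesis to drop the cross terms and bound it below by $\|X\Psi\|^2+\|Y\Psi\|^2$, then run the standard Cauchy argument and use closedness of $X$ and $Y$ separately. Your remark about the implicit real part in the quadratic terms is a fair observation (the paper also writes these terms without an explicit real part, the intended application being to symmetric operators), but it does not change the argument.
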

\begin{proof}
 Let $\Psi_n \in \dom(X)\cap\dom(Y)$ such that $\Psi_n \to \Psi$ and $(X+Y+c)\Psi_n \to \Phi$. Hence we have
 \begin{gather*}
 \|(X+Y+c)(\Psi_m-\Psi_n)\|^2 \\
 \qquad{} = \|X(\Psi_m-\Psi_n)\|^2 + \|Y(\Psi_m-\Psi_n)\|^2 + 2\re \<X(\Psi_m-\Psi_n), Y(\Psi_m-\Psi_n)\> \\
 \qquad \quad{} + 2c(\<(\Psi_m-\Psi_n), X(\Psi_m-\Psi_n)\> + \<(\Psi_m-\Psi_n), Y(\Psi_m-\Psi_n)\>) + c^2\|(\Psi_m-\Psi_n)\|^2 \\
 \qquad{} \ge \|X(\Psi_m-\Psi_n)\|^2 + \|Y(\Psi_m-\Psi_n)\|^2,
 \end{gather*}
 from which it follows that $\{X\Psi_m\}$ and $\{Y\Psi_m\}$ are both convergent, and therefore, $\Psi \in \dom(X)\cap\dom(Y)$, namely, $X+Y+c$ is closed. This is equivalent to that $X+Y$ is closed.
\end{proof}

\subsection{One-particle components}\label{sa-one}

Now we show that $\chi_1(\xi) + \chi'_1(\eta)$ is self-adjoint on the intersection $\dom(\chi_1(\xi)) \cap \dom(\chi'_1(\eta))$.
A key observation is that $\chi_1(\xi)$ can be written in the form $X^*X$ where $X = \halfshiftone M_{\xi_0(\cdot \,+ \frac{2\pi i}3)}$. It is clear that $X$ is densely def\/ined, as $\xi_0 \in H^\infty(\SS_{\frac{\pi}3,\frac{2\pi}3})$, and closed as $M_{\xi_0(\cdot \,+ \frac{2\pi i}3)}$ is bounded. Furthermore, it holds that $X^* = M_{\xi_0(\cdot \,+ \frac{2\pi i}3)}^*\halfshiftone$ without closure, since $M_{\xi_0(\cdot \,+ \frac{2\pi i}3)}^*$ is unitary. Similarly, one can write $\chi'_1(g) = Y^*Y$, where $Y = \halfshiftione M_{\eta_0(\cdot \,- \frac{2\pi i}3)}$ and $Y^* = M_{\eta_0(\cdot \,- \frac{2\pi i}3)}^*\halfshiftione$.

Some ideas in the following lemma are due to Henning Bostelmann.
\begin{Lemma}\label{lm:cleave}
 Let $X$, $Y$ be closed, densely defined operators such that
 \begin{itemize}\itemsep=0pt
 \item $X\dom(X)\subset \dom(Y^*)$ and $Y\dom(Y)\subset \dom(X^*)$,
 \item $Y^*X$ and $X^*Y$ are bounded,
 \item $X+Y$ and $X^* + Y^*$ are densely defined,
 \item there is a decomposition of $\H = \H_{X} \oplus \H_{Y}$ such that
 \begin{itemize}\itemsep=0pt
 \item any $\Psi \in \dom({X})$ is the sum $\Psi = \Psi_{X} \oplus \Psi_{Y}$, $\Psi_{X} \in \H_{X}\cap \dom({X})$ and $\Psi_{Y} \in \H_{Y} \cap \dom({X})$,
 \item any $\Phi \in \dom({Y})$ is the sum $\Phi = \Phi_{X} \oplus \Phi_{Y}$, $\Phi_{X} \in \H_{X}\cap \dom({Y})$ and $\Phi_{Y} \in \H_{Y} \cap \dom({Y})$,
 \item ${X}$ is bounded on $\H_{X}$ and ${Y}$ is bounded on $\H_{Y}$.
 \end{itemize}
 \end{itemize}
 Then the sum $X^*X + Y^*Y$ is self-adjoint on $\dom(X^*X)\cap\dom(Y^*Y)$.
\end{Lemma}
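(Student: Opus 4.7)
The plan is to exhibit $X^*X+Y^*Y$ via the standard ``column'' construction. Introduce the operator $D\colon \dom(X)\cap\dom(Y)\to\H\oplus\H$ defined by $D\Psi:=(X\Psi,Y\Psi)$, which is closed because $X$ and $Y$ are, and densely defined because $X+Y$ is by hypothesis. By von Neumann's theorem $D^*D$ is positive self-adjoint, so it suffices to identify $D^*D=X^*X+Y^*Y$ with $\dom(D^*D)=\dom(X^*X)\cap\dom(Y^*Y)$. The inclusion $\dom(X^*X)\cap\dom(Y^*Y)\subseteq\dom(D^*D)$ with matching action is immediate: for $\Psi$ in the intersection and $\Phi\in\dom(D)$ one has $\<D\Psi,D\Phi\>=\<(X^*X+Y^*Y)\Psi,\Phi\>$, so $D\Psi\in\dom(D^*)$.

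For the reverse inclusion I would exploit the decomposition $\H=\H_X\oplus\H_Y$. As a preliminary, note that $\H_X\cap\dom(X)$ is dense in $\H_X$ (since the orthogonal projections preserve the dense subspace $\dom(X)$), that $X$ is bounded on it by some $C_X$ by hypothesis, and $X$ is closed; these three facts force $\H_X\subseteq\dom(X)$, and symmetrically $\H_Y\subseteq\dom(Y)$. Now take $\Psi\in\dom(D^*D)$ and set $\eta:=D^*D\Psi$, so that $\<X\Psi,X\Phi\>+\<Y\Psi,Y\Phi\>=\<\eta,\Phi\>$ for every $\Phi\in\dom(X)\cap\dom(Y)$. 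To conclude $X\Psi\in\dom(X^*)$, take arbitrary $\Phi\in\dom(X)$ and split $\Phi=\Phi_X+\Phi_Y$; the key point is that $\Phi_Y\in\H_Y\cap\dom(X)\subseteq\dom(X)\cap\dom(Y)$ (thanks to $\H_Y\subseteq\dom(Y)$), so the identity above applies to $\Phi_Y$ and gives $|\<X\Psi,X\Phi_Y\>|=|\<\eta,\Phi_Y\>-\<Y\Psi,Y\Phi_Y\>|\le(\|\eta\|+C_Y\|Y\Psi\|)\|\Phi\|$, while $|\<X\Psi,X\Phi_X\>|\le C_X\|X\Psi\|\cdot\|\Phi\|$ directly by boundedness of $X$ on $\H_X$. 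Summing yields a bound $|\<X\Psi,X\Phi\>|\le\mathrm{const}\cdot\|\Phi\|$ on all of $\dom(X)$, hence $X\Psi\in\dom(X^*)$; the symmetric argument (now splitting test vectors from $\dom(Y)$) gives $Y\Psi\in\dom(Y^*)$, so $\Psi\in\dom(X^*X)\cap\dom(Y^*Y)$.

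The main obstacle is precisely this reverse inclusion, because the identity relating $\eta$ to the quadratic forms holds only on the smaller intersection $\dom(X)\cap\dom(Y)$; without the block decomposition one could not peel off the $Y$-contribution for an arbitrary $\Phi\in\dom(X)$ that does not happen to lie in $\dom(Y)$. The decomposition works because it separates the two sources of unboundedness onto orthogonal subspaces, turning the problem back into a combination of a bounded part (on $\H_X$ for $X$ and on $\H_Y$ for $Y$) and a form-identity part. The remaining hypotheses of the lemma---$X^*Y$ and $Y^*X$ bounded, together with $X\dom(X)\subseteq\dom(Y^*)$ and $Y\dom(Y)\subseteq\dom(X^*)$---are used implicitly to ensure that all the compositions and adjoints appearing throughout are well defined and that $\dom(X^*X)\cap\dom(Y^*Y)$ is large enough to coincide with $\dom(D^*D)$.
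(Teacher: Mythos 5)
Your proof is correct, and it takes a genuinely different---and in fact leaner---route than the paper's. The paper applies von Neumann's theorem to the sum $T=X+Y$: it first shows $X+Y$ is closed (using boundedness of $X^*Y$), then uses the decomposition $\H=\H_X\oplus\H_Y$ to show that $\dom(X)\cap\dom(Y)$ is a core of both $X$ and $Y$ and that $(X+Y)^*=X^*+Y^*$; then $(X+Y)^*(X+Y)$ is self-adjoint, its domain is identified with $\dom(X^*X)\cap\dom(Y^*Y)$ via the inclusions $X\dom(X)\subset\dom(Y^*)$ and $Y\dom(Y)\subset\dom(X^*)$, and finally the bounded cross term $X^*Y+Y^*X$ is stripped off. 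You instead apply von Neumann's theorem to the column operator $D\Psi=X\Psi\oplus Y\Psi$, for which no cross terms ever appear, and use the decomposition (together with its consequence $\H_X\subset\dom(X)$, $\H_Y\subset\dom(Y)$, derived exactly as in the paper) only to identify $\dom(D^*D)$ with $\dom(X^*X)\cap\dom(Y^*Y)$: splitting a test vector $\Phi\in\dom(X)$ as $\Phi_X\oplus\Phi_Y$ with $\Phi_Y\in\H_Y\subset\dom(Y)$ is precisely what lets you peel off the $Y$-contribution, and your estimates are correct. One remark: contrary to your closing sentence, the remaining hypotheses (boundedness of $X^*Y$ and $Y^*X$, the inclusions $X\dom(X)\subset\dom(Y^*)$, $Y\dom(Y)\subset\dom(X^*)$, density of $\dom(X^*+Y^*)$) are not used even implicitly in your argument; you only need $X$, $Y$ closed, $\dom(X)\cap\dom(Y)$ dense, and the decomposition hypothesis, so you in fact prove a slightly more general statement. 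What the paper's longer route buys is the auxiliary information produced along the way ($X+Y$ closed on $\dom(X)\cap\dom(Y)$, $(X+Y)^*=X^*+Y^*$, the explicit description of the common domain), in the spirit of the closedness lemmas surrounding this one; what yours buys is brevity and weaker hypotheses.
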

\begin{proof}
By def\/inition, $\dom(X+Y) = \dom(X)\cap\dom(Y)$. Under the assumption, $X+Y$ is closed. Indeed, we have $\<X\Psi, Y\Psi\> = \<\Psi, X^*Y\Psi\>$ which is bounded by $\|X^*Y\|\cdot\|\Psi\|^2$ and we can apply Lemma~\ref{lm:sum-closed}.

Next, note that $\dom({X}+{Y}) = \dom({X})\cap\dom({Y})$ is a core of ${X}$. Indeed, any $\Psi \in \dom({X})$ can be decomposed as $\Psi_{X}\oplus\Psi_{Y}$ by assumption. On the other hand, $\dom({Y})$ is dense and admits the decomposition $\dom({Y}) = (\dom({Y})\cap\H_{X}) \oplus (\dom({Y})\cap\H_{Y})$. As $\dom({X}+{Y})$ is dense by assumption, $\dom({X}+{Y})\cap\H_{X}$ is dense in $\H_{X}$. But ${X}$ is bounded on $\H_{X}$, thus $\dom({X})$ includes $\H_{X}$. Namely, $\dom({X}) = \H_{X} \oplus (\dom({X})\cap\H_{Y})$. Similarly, we have $\dom({Y}) = (\dom({Y})\cap \H_{X}) \oplus \H_{Y}$. Therefore, $\dom({X}+{Y}) = (\dom({Y})\cap \H_{X}) \oplus (\dom({X})\cap\H_{Y})$. We saw that $\dom({X}+{Y})\cap\H_{X} = \dom({Y})\cap \H_{X}$ is dense in $\H_{X}$ and ${X}$ is bounded here, hence, the closure of ${X}$ restricted on $\dom({X}+{Y})$ is $\H_{X} \oplus \dom({X})\cap\H_{Y} = \dom({X})$. Similarly, it holds that $\dom({X}+{Y})$ is a core of ${Y}$.

{\allowdisplaybreaks Next we see that $\dom((X+Y)^*) = \dom(X^*)\cap\dom(Y^*)$. For each vector $\Psi \in \dom({X}+{Y}) = \dom({X}) \cap \dom({Y})$, we take the decomposition $\Psi = \Psi_{X} \oplus \Psi_{Y}$ which is given by the assumption. By the def\/inition of the adjoint operator, a vector $\Phi$ is in $\dom(({X}+{Y})^*)$ if and only if the map $\Psi \mapsto \<\Phi, ({X}+{Y})\Psi\>$ is continuous for $\Psi \in \dom({X}+{Y})$ (namely, there is $c$ such that $|\<\Phi, ({X}+{Y})\Psi\>| \le c\|\Psi\|$). Then we have the following chain of equivalences:
 \begin{gather*}
 \Phi \in \dom(({X}+{Y})^*) \\
 \qquad \Longleftrightarrow \ \Psi \mapsto \<\Phi, ({X}+{Y})\Psi\> \textrm{ is continuous for } \Psi \in \dom({X} + {Y}) \\
 \qquad \Longleftrightarrow \ \Psi_{X} \mapsto \<\Phi, ({X}+{Y})\Psi_{X}\> \textrm{ and }
 \Psi_{Y} \mapsto \<\Phi, ({X}+{Y})\Psi_{Y}\> \\
 \qquad{} \hphantom{\Longleftrightarrow}{} \ \ \textrm{are continuous for }
 \Psi_{X} \in \dom({X}+{Y})\cap \H_{X}, \Psi_{Y} \in \dom({X}+{Y}) \cap \H_{Y}.\\
 \qquad \Longleftrightarrow \ \Psi_{X} \mapsto \<\Phi, {Y}\Psi_{X}\> \textrm{ and }
 \Psi_{Y} \mapsto \<\Phi, {X}\Psi_{Y}\> \\
 \qquad{} \hphantom{\Longleftrightarrow}{} \ \ \textrm{are continuous for }
 \Psi_{X} \in \dom({X}+{Y})\cap \H_{X}, \Psi_{Y} \in \dom({X}+{Y}) \cap \H_{Y}.\\
 \qquad \Longleftrightarrow \ \Psi_{X} \oplus \Psi_{Y} \mapsto \<\Phi, {Y}(\Psi_{X} \oplus\Psi_{Y})\> \textrm{ and }
 \Psi_{X} \oplus \Psi_{Y} \mapsto \<\Phi, {X}(\Psi_{X} \oplus \Psi_{Y})\> \\
 \qquad{} \hphantom{\Longleftrightarrow}{} \ \ \textrm{are continuous for } \Psi = \Psi_{X} \oplus \Psi_{Y} \in \dom({X}+{Y}).
 \end{gather*}
We saw that $\dom({X}+{Y})$ is a core both for ${X}$ and ${Y}$, thus, the continuity of $\Phi \mapsto \<\Phi, {X}\Psi\>$ for $\Psi \in \dom({X}+{Y})$ extends to $\Psi \in \dom({X})$, which implies that $\Phi \in \dom(X^{*})$. Analogously, we obtain $\Phi \in \dom(Y^{*})$. This implies that $({X} + {Y})^* = X^*+Y^*$.}

It is well-known \cite[Theorem X.25]{RSII} that, because $X+Y$ is closed, on the domain $\dom((X+Y)^*(X+Y)) = \{\Psi\in\dom(X+Y): (X+Y)\Psi\in\dom((X+Y)^*))\}$ the operator $(X+Y)^*(X+Y)$ is self-adjoint. From the above observations, this domain is actually
 \begin{gather*}
 \dom((X+Y)^*(X+Y)) = \{\Psi\in\dom(X+Y)\colon (X+Y)\Psi\in\dom((X+Y)^*))\} \\
\hphantom{\dom((X+Y)^*(X+Y))}{} = \{\Psi\in\dom(X+Y)\colon (X+Y)\Psi\in\dom(X^*)\cap\dom(Y^*)\} \\
 \hphantom{\dom((X+Y)^*(X+Y))}{} = \{\Psi\in\dom(X+Y)\colon X\Psi\in\dom(X^*), Y\Psi\in\dom(Y^*)\},
 \end{gather*}
where in the last line we used the assumptions that $X\dom(X)\subset \dom(Y^*)$ and $Y\dom(Y)\subset \dom(X^*)$.

Now we have $(X+Y)^*(X+Y) = X^*X + X^*Y + Y^*X + Y^*Y$ and this is dif\/ferent from $X^*X + Y^*Y$ only by a bounded operator.
\end{proof}

\begin{Proposition}\label{pr:sa-1}
The sum of one-particle components $\chi_1(\xi) + \chi_1'(\eta)$ is self-adjoint on the domain $\dom(\chi_1(\xi))\cap\dom(\chi_1'(\eta))$.
\end{Proposition}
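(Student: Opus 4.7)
The plan is to factorize
\[
\chi_1(\xi) = X^*X, \qquad \chi_1'(\eta) = Y^*Y,
\]
with $X := \halfshiftone M_{\xi_0(\,\cdot\, + \frac{2\pi i}{3})}$ and $Y := \halfshiftione M_{\eta_0(\,\cdot\, - \frac{2\pi i}{3})}$, and then apply Lemma~\ref{lm:cleave}. Both $X$ and $Y$ are closed and densely defined because the multiplication operators $M_{\xi_0(\,\cdot\, + \frac{2\pi i}{3})}$ and $M_{\eta_0(\,\cdot\, - \frac{2\pi i}{3})}$ are unitary (their boundary values have modulus one) while $\halfshiftone, \halfshiftione$ are self-adjoint. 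The identity $X^*X = M_{\xi_0(\,\cdot\, + \frac{2\pi i}{3})}^*\shiftone M_{\xi_0(\,\cdot\, + \frac{2\pi i}{3})} = \chi_1(\xi)$ follows from functional calculus of the positive self-adjoint operator $\shiftone$, and in particular $\dom(X^*X) = \dom(\chi_1(\xi))$; analogously $Y^*Y = \chi_1'(\eta)$.

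To apply Lemma~\ref{lm:cleave} I verify its four hypotheses. For $\Psi \in \dom(X)$, the function $M_{\xi_0}\Psi$ lies in the natural domain $H^2(\SS_{-\frac{\pi}{6}, 0})$ of $\halfshiftone$, so $X\Psi$ is the boundary value of a function analytic on $\SS_{0, \frac{\pi}{6}}$ and hence lies in $\dom(\halfshiftione) = \dom(Y^*)$; the symmetric inclusion $Y\dom(Y) \subset \dom(X^*)$ is analogous. For boundedness of $Y^*X$ (and symmetrically $X^*Y$), observe that $\halfshiftione$ and $\halfshiftone$ are inverse square roots of $\shiftone$ in the sense of functional calculus, so $\halfshiftione\halfshiftone = \1$ on $\dom(\halfshiftone)$. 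Consequently $Y^*X$ coincides on $\dom(X)$ with the unitary $M_{\eta_0(\,\cdot\, - \frac{2\pi i}{3})}^* M_{\xi_0(\,\cdot\, + \frac{2\pi i}{3})}$ and extends boundedly to $\H_1$. Density of $\dom(X)\cap\dom(Y)$ and of $\dom(X^*)\cap\dom(Y^*)$ can be checked by producing explicit dense test families, for instance boundary values of elements in $H^2(\SS_{-\frac{\pi}{6}, \frac{\pi}{6}})$ of sufficient decay, twisted by $M_{\xi_0}^*$ or $M_{\eta_0}^*$.

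The main obstacle is hypothesis~(4), the direct-sum decomposition $\H_1 = \H_X \oplus \H_Y$ such that $X$ is bounded on $\H_X$, $Y$ on $\H_Y$, and both $\dom(X)$ and $\dom(Y)$ split along it. Under the Fourier transform, $\shiftone$ becomes multiplication by an exponential, so $\halfshiftone$ is bounded on the Fourier-negative subspace $F^{-1}L^2((-\infty, 0])$, whereas $\halfshiftione$ is bounded on the Fourier-positive subspace $F^{-1}L^2([0, \infty))$; by Paley--Wiener these are the Hardy spaces of the lower and upper half-planes and they give an orthogonal decomposition of $L^2(\RR)$. The natural candidate is then
\[
\H_X := M_{\xi_0(\,\cdot\, + \frac{2\pi i}{3})}^{-1} F^{-1}L^2((-\infty, 0]), \qquad \H_Y := M_{\eta_0(\,\cdot\, - \frac{2\pi i}{3})}^{-1} F^{-1}L^2([0, \infty)),
\]
which render $X$ and $Y$ bounded on their respective subspaces. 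The delicate point is that the two unitary twists differ in general, so these subspaces need not give an orthogonal or even a direct decomposition of $L^2(\RR)$. Exploiting the Beurling inner--outer factorisation of $\xi_0$ and $\eta_0$ recalled in Section~\ref{one} should allow one to verify $\H_X + \H_Y = L^2(\RR)$, $\H_X \cap \H_Y = \{0\}$, and that the associated projections preserve $\dom(X)$ and $\dom(Y)$. Once hypothesis~(4) is established, Lemma~\ref{lm:cleave} yields the self-adjointness of $X^*X + Y^*Y = \chi_1(\xi) + \chi_1'(\eta)$ on $\dom(\chi_1(\xi)) \cap \dom(\chi_1'(\eta))$, as claimed.
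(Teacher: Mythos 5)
Your overall strategy coincides with the paper's: the factorizations $\chi_1(\xi)=X^*X$, $\chi_1'(\eta)=Y^*Y$ with $X=\halfshiftone M_{\xi_0(\cdot\,+\frac{2\pi i}3)}$, $Y=\halfshiftione M_{\eta_0(\cdot\,-\frac{2\pi i}3)}$, and an appeal to Lemma~\ref{lm:cleave}; your checks of the range inclusions and of the boundedness of $Y^*X=M_{\eta_0(\cdot\,-\frac{2\pi i}3)}^*M_{\xi_0(\cdot\,+\frac{2\pi i}3)}$ and $X^*Y$ are fine. The genuine gap is the last hypothesis of the lemma, the decomposition $\H_1=\H_X\oplus\H_Y$ — which is exactly the one point the proof has to establish — and your candidate does not deliver it. The subspaces $M_{\xi_0(\cdot\,+\frac{2\pi i}3)}^{-1}F^{-1}L^2((-\infty,0])$ and $M_{\eta_0(\cdot\,-\frac{2\pi i}3)}^{-1}F^{-1}L^2([0,\infty))$ are images of complementary half-line (Hardy-type) subspaces under two \emph{different} unitaries: there is no reason for them to intersect trivially, to span $\H_1$, or to admit bounded projections, and Lemma~\ref{lm:cleave} is stated and proved for an orthogonal decomposition along which \emph{both} $\dom(X)$ and $\dom(Y)$ split. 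Even granting a direct sum, the splitting of $\dom(X)$ along your decomposition would require multiplication by the unimodular boundary function $\eta_0\overline{\xi_0}$ to preserve the Fourier half-line subspace, which fails in general; and a version of the lemma for non-orthogonal decompositions would be a further unproven ingredient. Invoking the Beurling factorization here is a hope, not an argument, and this route is likely harder than the proposition itself.

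What actually works — and is the paper's proof — is the \emph{untwisted} decomposition: take $\H_X$, $\H_Y$ to be the spectral subspaces of $\log\Delta_1$ corresponding to $(-\infty,0]$ and $(0,\infty)$, independently of $\xi_0$, $\eta_0$. The point is the order of the factors combined with the direction of analyticity of the twists: $\xi_0\in H^\infty\big(\SS_{\frac\pi3,\frac{2\pi}3}\big)$, so the multiplier $\xi_0\big(\cdot+\frac{2\pi i}3\big)$ continues analytically and boundedly downward; hence for $\Psi_X\in\H_X$ one has $M_{\xi_0(\cdot\,+\frac{2\pi i}3)}\Psi_X\in\dom\big(\halfshiftone\big)$ and $X\Psi_X=M_{\xi_0(\cdot\,+\frac{\pi i}2)}\halfshiftone\Psi_X$, so that $\H_X\subset\dom(X)$ and $X$ is bounded there; consequently every $\Psi\in\dom(X)$ splits as $P_X\Psi\oplus P_Y\Psi$ with both summands in $\dom(X)$. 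The mirror argument, using $\eta_0\in H^\infty\big(\SS_{-\frac{2\pi}3,-\frac\pi3}\big)$ and the fact that $\halfshiftione$ continues upward, gives the compatibility of the same decomposition with $\dom(Y)$ and the boundedness of $Y$ on $\H_Y$. With this choice the density of $\dom(X)\cap\dom(Y)$ and of $\dom(X^*)\cap\dom(Y^*)$ also comes for free (since $\H_X\subset\dom(X)$, $\H_Y\subset\dom(Y)$ and both domains split), whereas your proposed test families, twisted by a single $M_{\xi_0(\cdot\,+\frac{2\pi i}3)}^*$ or $M_{\eta_0(\cdot\,-\frac{2\pi i}3)}^*$, only obviously land in one of the two domains.
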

\begin{proof}
This is a straightforward consequence of this Lemma~\ref{lm:cleave}.

Indeed, we only check the existence of the decomposition $\H_1 = \H_{X} \oplus \H_{Y}$. We consider $\log \Delta$ and its spectral decomposition: we take $\H_{X}$ as the spectral subspace of $\log \Delta$ corresponding to $(-\infty,0]$, while $\H_{Y}$ corresponds to $(0,\infty)$. Now the required properties are shown as follows.

Let us denote the spectral projections onto these subspaces by $P_X$ and $P_Y$. Let us take an element $\Psi \in \dom(X) = \dom\big(\halfshift M_{\xi_0(\cdot\, + \frac{2\pi i}3)}\big)$. This means, by def\/inition, $\xi_0(\theta+\frac{2\pi i}3)\Psi(\theta)$ is in $\H^2\left(-\frac{\pi}6, 0\right)$. In addition, $\Psi(\theta)$ alone is~$L^2$. While it holds that $\Psi = P_X\Psi \oplus P_Y\Psi$, $P_X\Psi$ is obviously in the domain of $\halfshift$, hence also in the domain of $X = \halfshift M_{\xi_0(\cdot\, + \frac{2\pi i}3)}$. This means that $P_Y\Psi = \Psi - P_X\Psi$ is also in the domain of~$X$. The restriction of~$X$ on~$\H_X$ is bounded, since there it holds that~$X\Psi_X = M_{\xi_0(\cdot\,+\frac{\pi i}2)}\halfshift$, where $\xi_0(\theta + \frac{\pi i}2)$ is bounded.

Similarly, the decomposition $\H_X \oplus \H_Y$ is compatible with $\dom(Y)$ and $Y$ is bounded \linebreak on~$\H_Y$.
\end{proof}

\subsection{Two-particle components}\label{sa-two}
In this Section, under (S\ref{ax:re}), we show that $\chi_2(\xi) + \chi'_2(\eta) = P_2(\chi_1(\xi)\otimes \1 + \1\otimes \chi'_1(\eta))P_2$ is essentially self-adjoint. It should be stressed that this expression does not involve Friedrichs extension. Simply the restriction to $P_2\H_1^{\otimes 2}$ is already essentially self-adjoint. This is important, because we have to be able to compute the weak commutator. In the course we also show that~$\chi_2(\xi)$ is essentially self-adjoint. It is also closed under~(S\ref{ax:re}).

We provide dif\/ferent proofs. The f\/irst one is longer and tricky, but one may hope that with some more ideas the cases $n \ge 3$ could be treated. The other two works only for $\chi_2(\xi)$, but are relatively simple, and give the hints of why the general case requires a better understanding of the poles and zeros of~$S$. It should be noted that operators of the form $\shiftone\otimes\1 + M_f\big(\1\otimes\shiftone\big)M_f^*$ may in general fail to be self-adjoint (see Appendix~\ref{ce-sum}). We have to properly use the properties of~$S$.

\subsubsection{Fubini's theorem}
We f\/irst show that $\chi_2(\xi)$ is self-adjoint. For this purpose, let us note that the unitary operator $M_{\xi_0}\otimes M_{\xi_0}$ obviously commutes with $M_S^*$, hence also with $D_2(\tau_1)$ and $P_2$. As we have
\begin{gather*}
 \chi_2(\xi) = P_2(M_{\xi_0}^*\otimes M_{\xi_0}^*)\big(\shiftone\otimes \1\big)(M_{\xi_0}\otimes M_{\xi_0})P_2 \\
\hphantom{\chi_2(\xi)}{} = (M_{\xi_0}^*\otimes M_{\xi_0}^*)P_2\big(\shiftone\otimes \1\big)P_2 (M_{\xi_0}\otimes M_{\xi_0}),
\end{gather*}
the question is now reduced to the self-adjointness of $P_2\big(\shiftone\otimes \1\big)P_2$.

We start with some observations on $L^2$-functions.
\begin{Lemma}\label{lm:stripl2}
If $\Psi(\theta_1,\theta_2)$ and $\Phi(\theta_1,\theta_2)$ belong to $L^2(\RR^2)$, and $e^{\theta_1}\Psi(\theta_1,\theta_2) + e^{\theta_2}\Phi(\theta_1,\theta_2)$ is $L^2$, then on any region bounded with respect to $\theta_2$, $e^{\theta_1}\Psi(\theta_1,\theta_2)$
is~$L^2$. Similarly, on any region bounded with respect to $\theta_1$, $e^{\theta_2}\Phi(\theta_1,\theta_2)$ is $L^2$. On the region $\theta_1+\theta_2 < N$, both $e^{\theta_1}\Psi(\theta_1,\theta_2)$ and $e^{\theta_2}\Phi(\theta_1,\theta_2)$ are~$L^2$.
\end{Lemma}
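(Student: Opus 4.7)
The plan is to exploit the identity $e^{\theta_1}\Psi = F - e^{\theta_2}\Phi$, with $F := e^{\theta_1}\Psi + e^{\theta_2}\Phi \in L^2(\RR^2)$, on any region where one of the exponential multipliers is bounded, together with the observation that the diagonal region $\{\theta_1+\theta_2<N\}$ is covered by two half-planes on each of which precisely one of the multipliers is indeed bounded. The hypothesis is symmetric in the roles of $(\Psi,\theta_1)$ and $(\Phi,\theta_2)$, so the second assertion will follow from the first by relabelling; I only discuss the first and third.

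For the first assertion I would fix a bounded interval $[a,b]$ and work on the strip $\RR\times[a,b]$. On this strip the factor $e^{\theta_2}$ is bounded, so $e^{\theta_2}\Phi$ lies in $L^2(\RR\times[a,b])$ because $\Phi\in L^2(\RR^2)$; combined with $F\in L^2(\RR\times[a,b])$, the identity above then shows $e^{\theta_1}\Psi\in L^2(\RR\times[a,b])$.

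For the third assertion the key observation is that $\{\theta_1+\theta_2<N\}\subset\{\theta_1<N/2\}\cup\{\theta_2<N/2\}$, since $\theta_1\ge N/2$ together with $\theta_2\ge N/2$ would force $\theta_1+\theta_2\ge N$. On $\{\theta_1<N/2\}$ the factor $e^{\theta_1}$ is bounded by $e^{N/2}$, so $e^{\theta_1}\Psi$ is $L^2$ there, and the identity then yields $e^{\theta_2}\Phi\in L^2$ on that half-plane; the roles reverse on $\{\theta_2<N/2\}$. Patching the two contributions gives that both $e^{\theta_1}\Psi$ and $e^{\theta_2}\Phi$ are in $L^2$ on the full region $\{\theta_1+\theta_2<N\}$.

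I foresee no substantive obstacle: the argument is a bookkeeping application of the triangle inequality in $L^2$. The only point worth flagging is the cover-by-two-half-planes trick, which is what makes the diagonal region accessible despite neither coordinate being individually bounded there. Consistently with the section heading, the same argument can be framed slice-by-slice via Fubini's theorem, controlling $\int|e^{\theta_1}\Psi(\theta_1,\theta_2)|^2\,d\theta_1$ for almost every fixed $\theta_2$ by means of the identity and then integrating in~$\theta_2$.
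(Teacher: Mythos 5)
Your proof is correct and follows essentially the same route as the paper: subtract $e^{\theta_2}\Phi$ (which is $L^2$ wherever $\theta_2$ is bounded, in fact bounded above) from the given sum, and cover $\{\theta_1+\theta_2<N\}$ by the two half-planes $\{\theta_1<N/2\}$ and $\{\theta_2<N/2\}$ on each of which one exponential factor is bounded. The only cosmetic difference is that the paper runs the subtraction argument directly on half-planes $\theta_2<N$, while you use bounded strips first and then redo the half-plane case inside the third assertion.
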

\begin{proof}
Consider the region $\theta_2 < N$. Here, $e^{\theta_2}\Phi(\theta_1,\theta_2)$ is obviously $L^2$, and $e^{\theta_1}\Psi(\theta_1,\theta_2) + e^{\theta_2}\Phi(\theta_1,\theta_2)$ is by assumption $L^2$, therefore, the dif\/ference
\begin{gather*}
 e^{\theta_1}\Psi(\theta_1,\theta_2) = \big( e^{\theta_1}\Psi(\theta_1,\theta_2) + e^{\theta_2}\Phi(\theta_1,\theta_2)\big) - e^{\theta_2}\Phi(\theta_1,\theta_2)
\end{gather*}
is again $L^2$. The other case is similar.

The region $\theta_1+\theta_2 < N$ is a subset of the union of $\theta_1 < \frac{N}2$ and $\theta_2 < \frac{N}2$. The function $e^{\theta_1}\Psi(\theta_1,\theta_2)$ is obviously $L^2$ on the former and also on the latter by the f\/irst paragraph. The other case is similar.
\end{proof}

Coming back to our operators, let us summarize the situation.
\begin{itemize}\itemsep=0pt
\item $M_S$: multiplication operator on $\H^1\otimes\H^1 = L^2(\RR^2)$ by $S(\theta_2-\theta_1)$. Similarly, we denote by $M_{S(\cdot \,+ i\l)}$ the multiplication operator by $S(\theta_2-\theta_1 +i\l)$.
\item $\big(\shiftone\Psi\big)(\theta) = \Psi\big(\theta - \frac{\pi i}3\big)$ on the domain $H^2\big({-}\frac\pi 3,0\big)$.
\item $\Delta_1\otimes \Delta_1$ and $M_S$ strongly commute. It holds that $M_S\big(\shiftone\otimes \1\big) \subset \big(\shiftone\otimes \1\big)M_{S(\cdot \,+ \frac{\pi i}6)}$, since $S(\zeta)$ is bounded on $\RR + i\big(0,\frac{\pi i}6\big)$.
\end{itemize}
Let us denote the f\/lip operator by $F\colon \Psi\otimes \Phi\mapsto \Phi \otimes \Psi$ where $\Psi,\Phi \in \H_1$. Then we can write the $S$-symmetrization as follows: $P_2 = \frac12(\1+M_SF)$.

\begin{Proposition}\label{pr:sa-2}
$P_2\big(\shiftone\otimes \1\big)P_2$ is essentially self-adjoint. Under {\rm (S\ref{ax:re})}, it is even self-adjoint.
\end{Proposition}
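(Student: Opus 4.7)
The strategy is to adapt the $X^{*}X$ construction of the one-particle case (Proposition~\ref{pr:sa-1}) to the symmetric two-particle setting, using a Fubini-type decomposition forced by the bootstrap equation. Set $T := P_{2}(\shiftone\otimes\1)P_{2}$ with natural domain $\Dom := \{\Psi\in P_{2}\H_{1}^{\otimes 2}: \Psi\in\dom(\shiftone\otimes\1)\}$. Symmetry of $T$ on $\Dom$ is immediate, since $P_{2}\Psi=\Psi$ for $\Psi\in\Dom$ and $\shiftone\otimes\1$ is self-adjoint. Introduce the closed operator $X:=(\halfshiftone\otimes\1)|_{P_{2}\H_{1}^{\otimes 2}\cap\dom(\halfshiftone\otimes\1)}$, viewed as a densely defined map $P_{2}\H_{1}^{\otimes 2}\to\H_{1}^{\otimes 2}$; it is closed as the restriction of a closed operator to a closed subspace. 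For $\Psi\in\Dom$ one has $\<T\Psi,\Psi\>=\|(\halfshiftone\otimes\1)\Psi\|^{2}=\|X\Psi\|^{2}\ge 0$, so $T$ is positive. The standard theorem gives $X^{*}X$ positive self-adjoint, and the identity $\<T\Psi,\Phi\>=\<X\Psi,X\Phi\>$ for $\Psi\in\Dom$, $\Phi\in\dom(X)$, combined with $T\Psi\in P_{2}\H_{1}^{\otimes 2}$, shows $X\Psi\in\dom(X^{*})$ and $X^{*}X|_{\Dom}=T$. Thus $X^{*}X$ is a positive self-adjoint extension of $T$, and essential self-adjointness of $T$ reduces to showing $\Dom$ is a core for $X^{*}X$.

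The Fubini-type rewriting is obtained from $P_{2}=\tfrac{1}{2}(\1+M_{S}F)$ together with the bootstrap~(S\ref{ax:bootstrap}). For $\Psi\in\Dom$ it yields
\begin{gather*}
(T\Psi)(\theta_{1},\theta_{2}) = \tfrac{1}{2}\biggl[\Psi\biggl(\theta_{1}-\frac{\pi i}{3},\theta_{2}\biggr) + S\biggl(\theta_{2}-\theta_{1}+\frac{\pi i}{3}\biggr)\Psi\biggl(\theta_{1},\theta_{2}-\frac{\pi i}{3}\biggr)\biggr].
\end{gather*}
The pole of $S$ at $\tfrac{\pi i}{3}$, located at $\theta_{1}=\theta_{2}$, is neutralised by the vanishing of any $S$-symmetric $\Psi$ on the diagonal, a consequence of $S(0)=-1$ (property~(S\ref{ax:atzero})). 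Consequently $2T=A+B$ on $\Dom$ with $A=\shiftone\otimes\1$ and $B=M_{S(\cdot+\pi i/3)}(\1\otimes\shiftone)$. Fubini reduces norm estimates for $T\Psi$ to iterated one-variable estimates to which Proposition~\ref{pr:sa-1} applies, with the spectator rapidity as a parameter. The core property of $\Dom$ for $X^{*}X$ follows by approximating an arbitrary $\Psi\in\dom(X^{*}X)$ by $\Psi_{n}:=P_{2}E_{n}\Psi$, where $E_{n}$ is a spectral cut-off of $\halfshiftone\otimes\1$, with graph-norm convergence controlled by the Fubini formula and the diagonal zero.

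For the upgrade to outright self-adjointness under~(S\ref{ax:re}), I would apply Lemma~\ref{lm:positive-closed} to the decomposition $2T=A+B$ just obtained: the hypothesis $\re S(\cdot+\tfrac{\pi i}{3})\ge 0$ yields an inequality of the form $c^{2}\|\Psi\|^{2}+2c(\<A\Psi,\Psi\>+\<B\Psi,\Psi\>)+2\re\<A\Psi,B\Psi\>\ge 0$ on $\Dom$ for suitable $c>0$, so Lemma~\ref{lm:positive-closed} gives closedness of $A+B$, hence of $T$, on $\Dom$; closed together with essentially self-adjoint forces self-adjoint. The principal obstacle is the approximation step: neither $\halfshiftone\otimes\1$ nor its spectral cut-offs preserve $P_{2}\H_{1}^{\otimes 2}$, because $M_{S}$ fails to commute with $\shiftone\otimes\1$, and the resulting $P_{2}$-symmetrisation must be reconciled with the pole of $S(\cdot+\tfrac{\pi i}{3})$; this is precisely what the diagonal vanishing of $S$-symmetric functions and, for the sharper self-adjoint statement, the sign condition~(S\ref{ax:re}) are designed to achieve.
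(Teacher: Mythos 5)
Your opening step is essentially the paper's: you realize a positive self-adjoint extension of $T=P_2\big(\shiftone\otimes\1\big)P_2$ as $X^*X$ (the paper takes $A=\big(\halfshiftone\otimes\1\big)P_2$ and uses $A^*A=\overline{P_2\big(\halfshiftone\otimes\1\big)}\cdot\big(\halfshiftone\otimes\1\big)P_2$), and your rewriting of $T\Psi$ with the factor $S\big(\theta_2-\theta_1+\frac{\pi i}3\big)$ is correct (it follows from $S$-symmetry, hermitian analyticity, bootstrap and crossing). But the decisive step --- showing that this self-adjoint extension coincides with the closure of $T$, i.e.\ that $\Dom$ is a core --- is exactly what you do not prove. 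The approximation you propose, $\Psi_n=P_2E_n\Psi$ with $E_n$ spectral cut-offs of $\halfshiftone\otimes\1$, does not work as stated: $E_n$ acts in the first variable only, so it destroys the $S$-symmetry of $\Psi$ and with it the diagonal zero coming from $S(0)=-1$; consequently the symmetrized term $M_SFE_n\Psi$ meets the pole of $S$ at $\frac{\pi i}3$ without compensation when continued by $-\frac{\pi i}3$ in $\theta_1$, so $P_2E_n\Psi$ need not lie in $\dom\big(\shiftone\otimes\1\big)$ at all, let alone converge to $\Psi$ in the graph norm of $X^*X$. You flag this yourself as the ``principal obstacle'', so the argument is incomplete precisely at its heart.

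The paper's resolution is the idea missing from your proposal: instead of cutting off $\halfshiftone\otimes\1$, cut off with the spectral projections $Q_N$ of $\Delta_1\otimes\Delta_1$ (slabs $N-1<t_1+t_2\le N$ in the joint spectral space of $\log\Delta_1\otimes\1$ and $\1\otimes\log\Delta_1$). Since $\Delta_1\otimes\Delta_1$ commutes with $M_S$ and $F$, hence with $P_2$ and with the reference operator, $Q_N$ reduces everything, and on each slab the elementary Fubini-type Lemma~\ref{lm:stripl2} (if $e^{t_1}\Psi+e^{t_2}\Phi$ is $L^2$ on $t_1+t_2<N$, then each summand is separately $L^2$ there) shows that every vector in $\dom\big(\overline{P_2\big(\halfshiftone\otimes\1\big)}\cdot\big(\halfshiftone\otimes\1\big)P_2\big)$ already lies in $\dom\big(\shiftone\otimes\1\big)$; thus the closure is superfluous and essential self-adjointness follows with no approximation of symmetric vectors at all. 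For the sharper statement under (S\ref{ax:re}), note also that the positivity $\re\big\<\big(\shiftone\otimes\1\big)\underline\Psi,\,M_SF\big(\shiftone\otimes\1\big)\underline\Psi\big\>\ge0$ is not visible pointwise from your formula; it requires the contour shift to $\big\<\big(\halfshiftone\otimes\halfshiftone\big)\underline\Psi,\,M_{S(\cdot\,+\frac{\pi}3)}M_S^*\big(\halfshiftone\otimes\halfshiftone\big)\underline\Psi\big\>$, whose legitimacy rests on Lemma~\ref{lm:symmetric-cont2} (the restricted analytic continuation of $S$-symmetric vectors together with the diagonal zero); after that, Lemma~\ref{lm:sum-closed} gives closedness directly, whereas your appeal to Lemma~\ref{lm:positive-closed} via an unproved quadratic inequality does not substitute for this justification.
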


\begin{proof}
It is well-known that $A^*A$ is self-adjoint for a closed operator $A$ \cite[Theorem~X.25]{RSII}. The operator $\big(\halfshiftone\otimes \1\big)P_2$ is closed (because $\halfshiftone\otimes\1$ is closed (self-adjoint) and $P_2$ is bounded) and densely def\/ined, hence $P_2\big(\halfshiftone\otimes \1\big)$ is closable and $\overline{P_2\big(\halfshiftone\otimes \1\big)} = \big(\big(\halfshiftone\otimes \1\big)P_2\big)^*$ \cite[Theorem~13.2]{Rudin91}. Therefore, the product $\overline{P_2\big(\halfshiftone\otimes\1\big)}\cdot \big(\halfshiftone\otimes \1\big)P_2$ is self-adjoint. The proof will be complete once we show that its domain coincides with that of $P_2\big(\shiftone\otimes \1\big)P_2$. The domain of $\overline{P_2\big(\halfshiftone\otimes \1\big)}\cdot \big(\halfshiftone\otimes \1\big)P_2$ consists of vectors $\underline{\Psi} \in \dom\big(\big(\halfshiftone\otimes \1\big)P_2\big)$ such that $\big(\halfshiftone\otimes \1\big)P_2\Psi \in \dom\Big(\,\overline{P_2\big(\halfshiftone\otimes \1\big)}\,\Big)$. We put $\Psi = \big(\halfshiftone\otimes I\big)P_2 \underline{\Psi}$. Now, what we have to prove is that $\Psi \in \dom\big(\halfshiftone\otimes I\big)$.

If $\Psi$ belongs to the domain of $\overline{P_2\big(\halfshiftone\otimes \1\big)}$, then there is a convergent sequence $\Psi_n \in \dom\big(\halfshiftone\otimes \1\big)$ such that $P_2\big(\halfshiftone\otimes \1\big)\Psi_n$ is again convergent. By the remark before this Proposition, the latter sequence is equal to $\frac12\big(\big(\halfshiftone\otimes \1\big)\Psi_n + \big(\1\otimes \halfshiftone\big)M_{S(\cdot \,+ \frac{\pi i}6)}F\Psi_n\big)$.

Let us consider the joint spectral decomposition with respect to $\log \Delta_1\otimes \1$ and $\1\otimes \log \Delta_1$. Any vector in $\H_1\otimes\H_1$ can be considered as a two-variable $L^2$-function on the spectral space of $\log \Delta_1$. Now, note that $M_{S(\cdot \,+ \frac{\pi i}6)}F$ is a bounded operator. If we put $\Phi_n = M_{S(\cdot \,+ \frac{\pi i}6)}F\Psi_n$, then~$\Psi_n$ and~$\Phi_n$ are convergent to $\Psi$ and $\Phi$ in the $L^2$-sense, respectively.

Now, by considering $\halfshiftone\otimes\1$ as a multiplication operator on the function $\Psi$ in the spectral space considered above, $\big(\halfshiftone\otimes \1\big)\Psi$ is a priori not a vector in $\H^1\otimes\H^1$: this would mean that $\Psi \in \dom\big(\halfshiftone\otimes\1\big)$, which we have to prove. Yet, as $\Psi$ is a function and $\Delta_1\otimes\1$ is a multiplication operator on the above-mentioned spectral space, it is well-def\/ined as a function (almost everywhere), a priori not $L^2$. Similarly, we can treat $\big(\1\otimes \halfshiftone\big)\Phi$ as a function. In this interpretation, the def\/ining property of $\Psi$ says that $\big(\halfshiftone\otimes \1\big)\Psi_n + \big(\1\otimes \halfshiftone\big)\Phi_n$ is $L^2$-convergent, in particular, by taking a subsequence, we may assume that the sequence is almost everywhere convergent and the pointwise limit is again $L^2$. Namely, $\big(\halfshiftone\otimes \1\big)\Psi + \big(\1\otimes \halfshiftone\big)\Phi$ is~$L^2$.

Let $Q_N$ the spectral projection of $\Delta_1\otimes\Delta_1$ corresponding to~$(e^{N-1},e^N]$. Note that on the joint spectral space of $\log\Delta_1\otimes\1$ and $\1\otimes\log\Delta_1$, this corresponds to the region $N-1 < t_1+t_2 \le N$. It commutes strongly with $\overline{P_2\big(\halfshiftone\otimes \1\big)}\cdot \big(\halfshiftone\otimes \1\big)P_2$. We f\/irst prove that, if $\underline\Psi \in \dom\Big(\,\overline{P_2\big(\halfshiftone\otimes \1\big)}\cdot \big(\halfshiftone\otimes \1\big)P_2\Big)$ and $\underline\Psi = P_2Q_n\underline\Psi,$ then $\underline\Psi \in \dom\big(P_2\big(\shiftone\otimes\1\big)P_2\big)$. Under this situation, $\big(\halfshiftone\otimes \1\big)\Psi + \big(\1\otimes \halfshiftone\big)\Phi$ is $L^2$, hence Lemma~\ref{lm:stripl2} tells that $\Psi \in \dom\big(\halfshiftone\otimes\1\big)$ or equivalently, $\underline\Psi \in \dom\big(\shiftone\otimes\1\big)$.

In other words, any vector $\underline\Psi = Q_N\underline\Psi$ in the domain of $\overline{P_2\big(\halfshiftone\otimes \1\big)}\cdot \big(\halfshiftone\otimes \1\big)P_2$ belongs to the domain of $\shiftone\otimes \1$, which means that we do not need the closure on the operator above for such $\underline\Psi$. As $Q_N$ commutes with $\overline{P_2\big(\halfshiftone\otimes \1\big)}\cdot \big(\halfshiftone\otimes \1\big)P_2$, any vector in its domain is the direct sum of such $Q_N\underline\Psi$'s, hence we obtain $\overline{P_2\big(\halfshiftone\otimes \1\big)}\cdot \big(\halfshiftone\otimes \1\big)P_2 = \overline{P_2\big(\shiftone\otimes \1\big)P_2}$ and $P_n\big(\shiftone\otimes \1\big)P_2$ is essentially self-adjoint.

We now prove that $P_2\big(\shiftone\otimes\1\big)P_2$ is closed under~(S\ref{ax:re}). Indeed, For $\underline\Psi \in \dom\big(P_2(\shiftone\otimes\1)P_2\big)$, we may assume that $P_2\underline\Psi = \underline\Psi$. We claim that $\big\<\big(\shiftone\otimes \1\big)\underline\Psi, M_SF\big(\shiftone\otimes \1\big)\underline\Psi\big\> \ge 0$. Note that we are not claiming that~$M_SF$ is a~positive operator (it is not), but the positivity holds for $\big(\shiftone\otimes \1\big)\underline\Psi$, where $\underline\Psi = P_2\underline\Psi$. We have
\begin{gather*}
 M_SF\big(\shiftone\otimes \1\big)\underline\Psi = M_S\big(\1\otimes \shiftone\big)F\underline\Psi = M_S\big(\1\otimes \shiftone\big)M_S^*\underline\Psi,
\end{gather*}
since $\underline\Psi$ is $S$-symmetric: $\underline\Psi = M_SF\underline\Psi$. Now, the inner product in question is
\begin{gather}
 \big\<\big(\shiftone\otimes \1\big)\underline\Psi, M_S\big(\1\otimes \halfshiftone\big)\big(\1\otimes \halfshiftone\big)M_S^*\underline\Psi\big\> \nonumber\\
\qquad{} = \big\<M_S^*\big(\shiftone\otimes \1\big)\underline\Psi,\big(\1\otimes \halfshiftone\big)\big(\1\otimes \halfshiftone\big)M_S^*\underline\Psi\big\> \nonumber\\
 \qquad{} = \big\<\big(\halfshiftone\otimes \1\big)M_{S(\cdot\, + \frac{\pi}6)}^*\big(\halfshiftone\otimes \1\big)\underline\Psi,\big(\1\otimes \halfshiftone\big)\big(\1\otimes \halfshiftone\big)M_S^*\underline\Psi\big\>.\label{eq:cross1}
\end{gather}
By Lemma \ref{lm:symmetric-cont2}, $M_{S(\cdot\, + \frac{\pi}6)}^*\big(\halfshiftone\otimes \1\big)\underline\Psi \in \dom\big(\1\otimes \halfshiftone\big)$ and $\big(\1\otimes \halfshiftone\big)M_S^*\underline\Psi \in \dom\big(\halfshiftone\otimes \1\big)$. Then by \cite[Lemma~B.1]{CT15-1}, we obtain
\begin{gather*}
 \eqref{eq:cross1} = \big\<\big(\1\otimes \halfshiftone\big)M_{S(\cdot\, + \frac{\pi}6)}^*\big(\halfshiftone\otimes \1\big)\underline\Psi, \big(\halfshiftone\otimes \1\big)\big(\1\otimes \halfshiftone\big)M_S^*\underline\Psi\big\> \\
\hphantom{\eqref{eq:cross1}}{} = \big\<M_{S(\cdot\, + \frac{\pi}3)}^*\big(\halfshiftone\otimes \halfshiftone\big)\underline\Psi, M_S^*\big(\halfshiftone\otimes \halfshiftone\big)\underline\Psi\big\> \\
\hphantom{\eqref{eq:cross1}}{} = \big\<\big(\halfshiftone\otimes \halfshiftone\big)\underline\Psi, M_{S(\cdot\, + \frac{\pi}3)}M_S^*\big(\halfshiftone\otimes \halfshiftone\big)\underline\Psi\big\>,
\end{gather*}
where in the second step we used Lemma~\ref{lm:symmetric-cont2}. Now, we assume~(S\ref{ax:re}) that the real part of $S\big(\theta + \frac{\pi i}3\big)S(-\theta) = S\big(\theta + \frac{2\pi i}3\big)$ is positive. Therefore, this scalar product has the positive real part. Now the closedness of $P_2\big(\shiftone\otimes\1\big)P_2 = \frac12(\1 + M_SF)\big(\shiftone\otimes\1\big)P_2$ follows from Lemma~\ref{lm:sum-closed}.
\end{proof}

\begin{Proposition}
$\chi'_2(\eta) = P_2(\1\otimes \chi'_1(\eta))P_2$ is essentially self-adjoint, and under~{\rm (S\ref{ax:re})} it is self-adjoint.
\end{Proposition}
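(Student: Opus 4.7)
The plan is to reduce the statement to the unprimed case established in Proposition~\ref{pr:sa-2}, via the CPT intertwining $\chi'_n(\eta)=J_n\chi_n(J_1\eta)J_n$ recalled in Section~\ref{fields}. Set $\uxi := J_1\ueta$, so that $\uxi(\zeta)=\overline{\ueta(\bar\zeta)}$. Since $\ueta \in \hardytwolf\cap\hardyinflf$ satisfies $\ueta(\theta-\pi i)=\overline{\ueta(\theta)}$, one checks directly from the definition that $\uxi\in\hardytwouf\cap\hardyinfuf$ with $\uxi(\theta+\pi i)=\overline{\uxi(\theta)}$, and that $J_1\eta=\uxi^2$. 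Hence $\xi:=J_1\eta$ satisfies all the hypotheses of Section~\ref{one}, and the canonical self-adjoint extensions of $\chi_1(\xi)$ and of $\chi_2(\xi)$ are well-defined.

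Next, by Proposition~\ref{pr:sa-2} combined with the unitary conjugation $\chi_2(\xi)=(M_{\xi_0}\otimes M_{\xi_0})^*P_2(\shiftone\otimes\1)P_2(M_{\xi_0}\otimes M_{\xi_0})$ noted at the start of Section~\ref{sa-two}, the operator $\chi_2(J_1\eta)$ is essentially self-adjoint on its natural domain, and self-adjoint under (S\ref{ax:re}). The conclusion then follows because $\chi'_2(\eta)=J_2\chi_2(J_1\eta)J_2$ and $J_2$ is antiunitary: from $(J_2AJ_2)^*=J_2A^*J_2$ and $\overline{J_2AJ_2}=J_2\overline{A}J_2$ for any densely defined $A$, conjugation by an antiunitary operator preserves symmetry, closability, deficiency indices, and hence essential self-adjointness as well as self-adjointness. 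Thus $\chi'_2(\eta)$ inherits the corresponding property from $\chi_2(J_1\eta)$.

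The step requiring the most care is the compatibility of the extended domains of Section~\ref{one} with the $J$-action: one must verify that $\dom(\chi'_1(\eta))=M_{\eta_0(\cdot\,-\frac{2\pi i}{3})}^{*}\hardytwou$ coincides with $J_1\dom(\chi_1(J_1\eta))$, i.e.\ that the canonical Beurling factor $\eta_0$ of $\ueta$ on $\SS_{-\frac{2\pi}{3},-\frac{\pi}{3}}$ and the factor $\xi_0=(J_1\eta)_0$ on $\SS_{\frac{\pi}{3},\frac{2\pi}{3}}$ are related by $\eta_0(\theta-\frac{2\pi i}{3})=\overline{\xi_0(\theta+\frac{2\pi i}{3})}$ up to a unimodular constant. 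This follows from the uniqueness of the Blaschke/singular/outer decomposition on a strip together with the antiunitary behaviour of each factor under $J_1$; once it is in hand, everything else is formal, and the short proof of Proposition~\ref{pr:sa-2} transports without further work to the primed setting.
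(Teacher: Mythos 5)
Your proof is correct and follows the second route the paper itself indicates (its proof reads ``Parallel to the case of $\chi_2(\xi)$. One can also prove it by considering the operator $J_2P_2(\chi_1(\xi)\otimes\1)P_2J_2$''), namely antiunitary conjugation reducing $\chi'_2(\eta)$ to $\chi_2(J_1\eta)$ and Proposition~\ref{pr:sa-2}. The only point the paper leaves implicit, the compatibility of the canonical extensions with the $J_1$-action via the reflected Beurling factors $\eta_0(\theta-\frac{2\pi i}3)=\overline{\xi_0(\theta+\frac{2\pi i}3)}$ up to a unimodular constant, is exactly the step you identify and justify, so no gap remains.
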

\begin{proof}
Parallel to the case of $\chi_2(\xi)$. One can also prove it by considering the operator $J_2P_2(\chi_1(\xi)\otimes \1)P_2J_2$.
\end{proof}

\begin{Proposition}\label{pr:sa-sum}
Under {\rm (S\ref{ax:re})}, $\chi_2(\xi) + \chi'_2(\eta) = P_2(\chi_1(\xi)\otimes \1 + \1\otimes \chi'_1(\eta))P_2$ is essentially self-adjoint.
\end{Proposition}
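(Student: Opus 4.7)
The plan is to realize $\chi_2(\xi) + \chi'_2(\eta)$ as $A^*A$ for a single closed, densely defined operator $A$, in the same spirit as Propositions~\ref{pr:sa-1} and~\ref{pr:sa-2}. Writing $\chi_1(\xi) = X^*X$ with $X := \halfshiftone M_{\xi_0(\cdot + \frac{2\pi i}{3})}$ and $\chi'_1(\eta) = Y^*Y$ with $Y := \halfshiftione M_{\eta_0(\cdot - \frac{2\pi i}{3})}$, I would define
\[
A \colon \H_2 \to (\H_1 \otimes \H_1) \oplus (\H_1 \otimes \H_1), \qquad A\Psi := \big((X \otimes \1)\Psi,\; (\1 \otimes Y)\Psi\big),
\]
on $\dom(A) := \{\Psi \in \H_2 : \Psi \in \dom(X \otimes \1) \cap \dom(\1 \otimes Y)\}$. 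Both summands are closed, so $A$ is closed; it is densely defined because $\dom(A)$ contains the $P_2$-projections of algebraic tensors $\psi_1 \otimes \psi_2$ with $\psi_1 \in \dom(X)$, $\psi_2 \in \dom(Y)$.

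Next I would invoke \cite[Theorem X.25]{RSII} to conclude that $A^*A$ is self-adjoint on its canonical domain $\{\Psi \in \dom(A) : A\Psi \in \dom(A^*)\}$. A direct computation of $A^*$, using $P_2 \Psi = \Psi$ for $\Psi \in \H_2$, gives
\[
A^*A\Psi = P_2\big((X^*X \otimes \1) + (\1 \otimes Y^*Y)\big)\Psi = \chi_2(\xi)\Psi + \chi'_2(\eta)\Psi.
\]
Hence $A^*A$ is a self-adjoint extension of $\chi_2(\xi) + \chi'_2(\eta)$ restricted to $\Dom_2(\xi,\eta)$, and what remains is to show that this natural domain is a core for $A^*A$.

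For the core property I would mimic the strategy of Proposition~\ref{pr:sa-2}, introducing spectral projections $Q_N$ onto bounded regions in the joint spectrum of $\log \Delta_1 \otimes \1$ and $\1 \otimes \log \Delta_1$ (after unitary conjugation by $M_{\xi_0} \otimes M_{\eta_0}$, so that the $Q_N$ commute with $A^*A$) in order to localize the problem to bounded operators. Within each spectral shell, one uses $P_2 = \frac{1}{2}(\1 + M_S F)$ to rewrite $A\Psi$ so that $S$-symmetry exchanges the roles of the two variables, and then applies a two-variable analogue of Lemma~\ref{lm:stripl2}: on a bounded region, $L^2$-control of the sum $A^*A\Psi$ should force $L^2$-control of the individual summands $(X^*X \otimes \1)\Psi$ and $(\1 \otimes Y^*Y)\Psi$, placing $\Psi$ in $\Dom_2(\xi,\eta)$.

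The main obstacle is carrying out this last step in the presence of the $S$-factors produced by the symmetrization $P_2$. Expanding $\|A\Psi\|^2 = \|(X \otimes \1)\Psi\|^2 + \|(\1 \otimes Y)\Psi\|^2 + 2\re\langle (X \otimes \1)\Psi,(\1 \otimes Y)\Psi\rangle$ and invoking $S$-symmetry $\Psi = M_S F \Psi$, the cross-term becomes an inner product weighted by $S(\cdot + \frac{2\pi i}{3})$, exactly as in the closedness argument inside the proof of Proposition~\ref{pr:sa-2}. This is precisely where assumption (S\ref{ax:re}) enters: the non-negativity of $\re S(\cdot + \frac{2\pi i}{3})$, combined with the universal lower bound furnished by Proposition~\ref{pr:positive} for $n=2$, prevents destructive interference between the two summands and supplies the separation of $L^2$-bounds needed to conclude $\dom(A^*A) \subseteq \Dom_2(\xi,\eta)$, whence equality of domains and essential self-adjointness of $\chi_2(\xi) + \chi'_2(\eta)$ on $\Dom_2(\xi,\eta)$.
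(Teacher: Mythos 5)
Your reduction of the problem is set up correctly: the doubling map $A\Psi=((X\otimes\1)\Psi,(\1\otimes Y)\Psi)$ is closed and densely defined, $A^*A$ is self-adjoint by von Neumann's theorem, and on $\Dom_2(\xi,\eta)$ it acts as $\chi_2(\xi)+\chi'_2(\eta)$, so $A^*A$ is \emph{a} self-adjoint extension with no cross term. But this is not yet progress on the statement: the existence of a self-adjoint extension is never the issue (the Friedrichs extension would do), and the whole content of the proposition is that the restriction to the natural domain is essentially self-adjoint, i.e., that $\dom(A^*A)$ (or at least a core of $A^*A$) lies in the closure of the graph over $\Dom_2(\xi,\eta)$. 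That step is exactly as hard as before, and your sketch of it contains two concrete errors. First, the spectral projections you propose --- onto bounded regions of the joint spectrum of $\log\Delta_1\otimes\1$ and $\1\otimes\log\Delta_1$, after conjugating by $M_{\xi_0}\otimes M_{\eta_0}$ --- do \emph{not} commute with $A^*A$: the projection $P_2$ contains $M_SF$, and $M_S$ commutes only with the diagonal group $\Delta_1^{is}\otimes\Delta_1^{is}$ (simultaneous translation of both rapidities), not with $\Delta_1\otimes\1$ and $\1\otimes\Delta_1$ separately; moreover $M_{\xi_0}\otimes M_{\eta_0}$ does not commute with $P_2$ since $\xi_0\neq\eta_0$. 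The only admissible localization is by spectral projections $Q_N$ of $\Delta_1\otimes\Delta_1$ (diagonal strips $t_1+t_2\in(N-1,N]$), which is what the paper uses.

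Second, even on such a strip the problem does not reduce to bounded operators, and no ``two-variable analogue of Lemma~\ref{lm:stripl2}'' is available for the \emph{sum}: $\chi_1(\xi)$ and $\chi'_1(\eta)$ are unitarily equivalent to $\shiftone$ and $\Delta_1^{-\frac16}$ respectively, so the two summands carry weights $e^{t_1}$ and $e^{-t_2}$, which on a strip $t_1+t_2\approx N$ are \emph{proportional} to each other; a region argument cannot separate them, and cancellation between the two terms is precisely what must be excluded. (For $\chi_2(\xi)$ alone the weights $e^{t_1}$, $e^{t_2}$ have the same sign and Lemma~\ref{lm:stripl2} applies, which is why Proposition~\ref{pr:sa-2} is easier.) The paper handles this with a different mechanism: the boundedness of $\chi_1(\xi)^{\frac12}\chi'_1(\eta)^{\frac12}$ to dispose of the mixed terms, a Fubini/``no mixed infinity'' argument for the formal square, positivity of the dangerous same-operator cross term under (S\ref{ax:re}), and then --- because the resulting self-adjoint operator is $P_2(A^2\otimes\1+2(A\otimes B)+\1\otimes B^2)P_2$ with an \emph{unbounded} cross term $2P_2(A\otimes B)P_2$ --- a final application of the commutator theorem (Theorem~\ref{th:commutator}) to pass to $P_2(A^2\otimes\1+\1\otimes B^2)P_2$. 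Your appeal to Proposition~\ref{pr:positive} and to the (S\ref{ax:re})-positivity computation of Proposition~\ref{pr:sa-2} at this point is also circular: those estimates are proved only for vectors in $\Dom_2(\xi,\eta)$ (respectively in the natural domain of $P_2(\shiftone\otimes\1)P_2$), whereas here they would have to be applied to vectors merely in $\dom(A^*A)$, i.e., to vectors you are still trying to place in that domain. So the proposal, as it stands, replaces the hard step by a sketch whose key devices (commuting rectangle projections, a separation lemma for oppositely scaling weights, a priori use of Proposition~\ref{pr:positive}) do not work.
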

\begin{proof}
For simplicity of notation we write $A = \chi_1(\xi)^\frac12$, $B = \chi'_1(\eta)^\frac12$. Recall that $A = M_{\xi_0(\cdot\,+\frac{2\pi i}3)}^*\halfshiftone M_{\xi_0(\cdot\,+\frac{2\pi i}3)}$ and $B = M_{\eta_0(\cdot\,-\frac{2\pi i}3)}^*\halfshiftione M_{\eta_0(\cdot\,-\frac{2\pi i}3)}$, where we have $\xi_0\in\H^\infty(\SS_{\frac{\pi i}3, \frac{2\pi i}3})$, $\eta_0\in\H^\infty(\SS_{-\frac{2\pi i}3, -\frac{\pi i}3})$. It is straightforward to observe that $AB$ and $BA$ are bounded. Our goal is the essential self-adjointness of $P_2(A^2\otimes\1 + \1\otimes B^2)P_2$.

First we claim that $\overline{P_2(A\otimes \1 + \1\otimes B)} \subset \overline{P_2(A\otimes \1)} + \overline{P_2(\1\otimes B)}$. In order to apply Lem\-ma~\ref{lm:closed}, it is enough to check that $c\|\Psi\|^2 \le \re \<P_2(A\otimes \1)\Psi, P_2(\1\otimes B)\Psi\>$. This holds indeed:
 \begin{gather*}
 \<P_2(A\otimes \1)\Psi, P_2(\1\otimes B)\Psi\> = \frac14 \<(\1 + D_2(\tau_1))(A\otimes \1)\Psi,(\1 + D_2(\tau_1))(\1\otimes B)\Psi\> \\
\hphantom{\<P_2(A\otimes \1)\Psi, P_2(\1\otimes B)\Psi\>}{} = \frac12 (\<(A\otimes B)\Psi, \Psi\> + \<(A\otimes \1)\Psi, M_S(B\otimes\1) M_S^*\Psi\>),
 \end{gather*}
and by noting that $A\otimes B$ is positive, while $(A\otimes \1)M_S(B\otimes\1) = M_{S(\cdot\, +\frac{\pi i}6)}(AB\otimes\1)$ is bounded.

We show that, if $(A\otimes \1 + \1\otimes B)\Psi \in \dom(\overline{P_2(A\otimes \1)})$, then it is in $\dom(A\otimes \1)$. This follows the idea of Proposition~\ref{pr:sa-2} (see the third and fourth paragraphs).
Namely, we consider the following formal expression with $\tilde\Psi =
(A\otimes \1 + \1\otimes B)\Psi$,
\begin{gather*}
     \big\<\big(A\otimes \1 + (\1\otimes A) M_{S(\cdot\,+\frac{\pi i}6)}F\big)\tilde\Psi,  \big(A\otimes \1 + (\1\otimes A) M_{S(\cdot\,+\frac{\pi
i}6)}F\big)\tilde\Psi\big\>.
\end{gather*}
This is formal, yet the whole expression has the meaning as an $L^1$ integral over the spectral space with respect to $\log A\otimes\1$ and $\1\otimes \log A$: it is $L^1$ because
 \begin{gather*}
 \big(A\otimes \1 + (\1\otimes A) M_{S(\cdot\,+\frac{\pi i}6)}F\big)(A\otimes \1 + \1\otimes B)\Psi
 \end{gather*}
has the meaning as an $L^2$-function. As $\tilde\Psi = (A\otimes \1 + \1\otimes B)\Psi$, this inner product can be formally decomposed as
 \begin{gather*}
 \big\|(A\otimes \1)\tilde\Psi\big\|^2 + \big\|(\1\otimes A)M_{S(\cdot\, + \frac{\pi i}6)}F\tilde\Psi\big\|^2
 + 2\re \big\<(A\otimes \1)\tilde\Psi, (\1\otimes A)M_{S(\cdot\, + \frac{\pi i}6)}F\tilde\Psi\big\>,
 \end{gather*}
where each of these terms can be understood as an integral over the above-mentioned spectral space. If the f\/irst term is f\/inite, it means that $\tilde\Psi$ is in the domain of $A\otimes \1$ and we are done. If not, namely the f\/irst term is not f\/inite, it must be positive inf\/inite in the sense of Lebesgue integral. Yet the whole expression is~$L^1$ by the def\/ining property of~$\Psi$ (that $(A\otimes \1 + \1\otimes B)\Psi \in \dom(\overline{P_2(A\otimes \1)})$) and the second term is positive (actually in this case it must be positive inf\/inite as well), thus the last term must be negative inf\/inite again in the sense of Lebesgue integral: a~mixed inf\/inity is impossible, because then the whole integral would have a measurable set on which it is positive inf\/inite, which contradicts the assumed $L^1$-property.

On the spectral space of $\log A$, we can write $A$ as the multiplication operator by $e^t$. The crossing term in question $\<(A\otimes \1)\tilde\Psi, (\1\otimes A)M_{S(\cdot\, + \frac{\pi i}6)}F\tilde\Psi\>$ is the following form:
 \begin{gather*}
 \int dt_1\,dt_2\, e^{t_1+t_2}\tilde\Phi(t_1,t_2)\overline{\tilde\Psi(t_1,t_2)},
 \end{gather*}
where $\tilde\Phi = M_{S(\cdot\, + \frac{\pi i}6)}F\tilde\Psi$.

When the real part of an integral is negative inf\/inite, one can apply the Fubini theorem. We insert the spectral projection $Q_N$ of $(A\otimes A)$ corresponding to $\big(e^{N-1},e^N\big]$, which amounts to consider a strip $N-1 < t_1 + t_2 \le N$. This is equivalent to the change of variable and integrating f\/irst with respect to $t_2-t_1$, which is legitimate by Fubini's theorem. On this strip, the integral is $L^1$, because $\tilde\Psi$ and $\tilde\Phi$ are~$L^2$.

Now recall that $\tilde\Psi = (A\otimes\1 + \1\otimes B)\Psi$, therefore, the above integral consists of four contributions. Even though the expression above is formal, note that $Q_N$ commutes with $A\otimes \1$ and $\1\otimes A$, while $AB$, $BA$ are bounded. Therefore, if we take two of the contributions
 \begin{gather*}
\big\<Q_N(A\otimes A)(\1\otimes B)\Psi, M_{S(\cdot\,+\frac{\pi i}6)}F(A\otimes \1 + \1\otimes B)\Psi\big\> \\
\qquad{} = \big\<Q_N(\1\otimes AB)(A\otimes \1)\Psi, M_{S(\cdot\,+\frac{\pi i}6)}F(A\otimes \1 + \1\otimes B)\Psi\big\>,
 \end{gather*}
 where $(A\otimes \1)\Psi$ and $(A\otimes \1 + \1\otimes B)\Psi$ are vectors in $\H_1\otimes\H_1$ and the operators appearing there are all bounded, hence it is f\/inite in the sense of Lebesgue integral. The same argument applies to the term
 \begin{gather*}
 \big\<Q_N(A\otimes A)(A\otimes \1)\Psi, M_{S(\cdot\,+\frac{\pi i}6)}F(\1\otimes B)\Psi\big\> \\
 \qquad{} = \big\<(A\otimes \1)\Psi, M_{S(\cdot\,+\frac{\pi i}6)}F Q_N(\1\otimes AB)(A\otimes \1)\Psi\big\>,
 \end{gather*}
 which implies that the only remaining term
 \begin{gather*}
 \big\<Q_N(A\otimes A)(A\otimes \1)\Psi, M_{S(\cdot\,+\frac{\pi i}6)}F(A\otimes \1)\Psi\big\>
 \end{gather*}
 must have the real part which is negative inf\/inite in the sense of Lebesgue integral. However, we have computed this and it is positive under (S\ref{ax:re}). Hence the whole integral is bounded below, even after removing~$Q_N$. This contradicts the assumption that the whole integral was negative inf\/inite, hence we obtain that each term of the formal expression is f\/inite, namely, $(A\otimes \1 + \1\otimes B)\Psi \in \dom(A\otimes \1)$.

Furthermore, we note that $A\otimes \1$ and $(A\otimes \1 + \1\otimes B)$ commute strongly. By considering the joint spectral decomposition, it is immediate to see that
 \begin{gather*}
 \Psi \in \dom(A\otimes \1 + \1\otimes B) \textrm{ and } (A\otimes \1 + \1\otimes B)\Psi \in \dom(A\otimes \1) \\
 \qquad \Longleftrightarrow \ \Psi \in \dom(A\otimes \1) \textrm{ and }(A\otimes \1)\Psi \in \dom(A\otimes \1 + \1\otimes B).
 \end{gather*}
 To summarize, we proved
 \begin{gather*}
 \overline{P_2(A\otimes \1)}(A\otimes \1 + \1\otimes B)P_2= P_2(A\otimes \1)(A\otimes \1 + \1\otimes B)P_2 \\
\hphantom{ \overline{P_2(A\otimes \1)}(A\otimes \1 + \1\otimes B)P_2}{} = P_2(A\otimes \1 + \1\otimes B)(A\otimes\1)P_2.
 \end{gather*}
 Similarly, we obtain $\overline{P_2(\1\otimes B)}(A\otimes \1 + \1\otimes B)P_2 = P_2(\1\otimes B)(A\otimes \1 + \1\otimes B)P_2 = P_2(A\otimes \1 + \1\otimes B)(\1\otimes B)P_2$.

By the above consideration, we obtain the equality
 \begin{gather*}
 \overline{P_2(A\otimes \1 + \1\otimes B)}(A\otimes \1 + \1\otimes B)P_2
 \subset \big(\overline{P_2(A\otimes \1)} + \overline{P_2(\1\otimes B)}\big)(A\otimes \1 + \1\otimes B)P_2 \\
 \hphantom{\overline{P_2(A\otimes \1 + \1\otimes B)}(A\otimes \1 + \1\otimes B)P_2}{}
 = \left(P_2(A\otimes \1) + P_2(\1\otimes B)\right)(A\otimes \1 + \1\otimes B)P_2 \\
 \hphantom{\overline{P_2(A\otimes \1 + \1\otimes B)}(A\otimes \1 + \1\otimes B)P_2}{}
 = P_2\big(A^2\otimes \1 + 2(A\otimes B) + \1\otimes B^2\big)P_2,
 \end{gather*}
 which is self-adjoint, because of the f\/irst expression.

We show that $P_2(A^2\otimes \1 + \1\otimes B^2)P_2$ is essentially self-adjoint, using the commutator Theo\-rem~\ref{th:commutator}, by taking $P_2\left(A^2\otimes \1 + 2(A\otimes B) + \1\otimes B^2 + c\1\right)P_2$ as the reference operator with some $c > 0$. Let us check the required estimates
 \begin{gather*}
 \big\|P_2\big(A^2\otimes \1 + 2(A\otimes B) + \1\otimes B^2 + c\1\big)\Psi\big\|^2 \\
\qquad{}\ge\big\|P_2\big(A^2\otimes \1 + \1\otimes B^2\big)\Psi\big\|^2 + 4\|P_2(A\otimes B)\Psi\|^2 \\
\quad\qquad{} + 4\re \big\<P_2\big(A^2\otimes \1 + \1\otimes B^2\big)\Psi, P_2(A\otimes B)\Psi\big\> \\
\quad\qquad{} + 2c\big\<\big(A^2\otimes \1 + 2(A\otimes B) + \1\otimes B^2 \big)\Psi,\Psi\big\>.
 \end{gather*}
In order to show $\|P_2(A^2\otimes \1+ \1\otimes B^2)\Psi\|^2 \le c_1\|P_2\big(A^2\otimes \1 + 2(A\otimes B) + \1\otimes B^2 + c\1\big)\Psi\|^2$,
it is enough to have a lower bound on the following term:
\begin{gather*}
4 \big\<P_2\big(A^2\otimes \1 + \1\otimes B^2\big)\Psi, P_2(A\otimes B)\Psi\big\>\\
 \qquad{} = 2\big\<\big(A^2\otimes \1 + \1\otimes B^2\big)\Psi, (A\otimes B)\Psi\big\> \\
\quad\qquad{}+2\big\<\big(A^2\otimes \1 + \1\otimes B^2\big)\Psi, M_S(B\otimes \1)M_S^*D_2(\tau_1)(A\otimes\1)\Psi\big\>,
 \end{gather*}
where the f\/irst term is positive, while we have $\big\<\big(A^2\otimes \1\big)\Psi, M_S(B\otimes \1)M_S^*D_2(\tau_1)(A\otimes\1)\Psi\big\> = \<x(A\otimes \1)\Psi, (A\otimes \1)\Psi\>$, where $x$ is a bounded operator. This is bounded by $\|x\|\cdot \|(A\otimes \1)\Psi\|^2 = \|x\|\cdot \<(A^2\otimes \1)\Psi,\Psi\>$, therefore, we take $c > 2\|x\|$ and $c_1 = 1$. We can estimate analogously the term containing $\1\otimes B^2$.

A bound of the weak commutator
\begin{gather*}
 \big[P_2\big(A^2\otimes \1 + 2(A\otimes B) + \1\otimes B^2\big)P_2, P_2\big(A^2\otimes \1 + \1\otimes B^2\big)P_2\big]
\end{gather*}
reduces to the weak commutator $ [P_2 (A\otimes B )P_2, P_2\big(A^2\otimes \1 + \1\otimes B^2\big)P_2]$, and its bound by
 \begin{gather*}
\big\|\big(P_2\big(A^2\otimes \1 + 2(A\otimes B) + \1\otimes B^2+ c\1\big)P_2\big)^\frac12\Psi\big\|^2\\
\qquad {} = \big\<\big(A^2\otimes \1 + 2(A\otimes B) + \1\otimes B^2+ c\1\big)\Psi, \Psi\big\>
 \end{gather*}
follows from the estimate above of the term $\big\<P_2\big(A^2\otimes \1 + \1\otimes B^2\big)\Psi, P_2(A\otimes B)\Psi\big\>$, up to the constant $c$ above, by noting that $A\otimes B$ commutes with $A^2\otimes \1$ and $\1\otimes B^2$.
\end{proof}

In Proposition \ref{pr:n-sum-closed+} we show that $\chi_2(\xi) + \chi'_2(\eta) + 4\sqrt{c(\xi,\eta)}$ is closed, hence self-adjoint.

\subsubsection[A shorter proof for $\chi_2(\xi)$]{A shorter proof for $\boldsymbol{\chi_2(\xi)}$}\label{shorter}

We here present a proof of the essential self-adjointness of $\chi_2(\xi)$. Although it is not of our direct interest as we need the self-adjointness of $\chi_2(\xi)+\chi'_2(\eta)$, we believe it clarif\/ies the reason why our method does not work for $n\ge 3$.

We observe that it is enough to show that $\chi_1(\xi)\otimes \1 + M_S(\1\otimes \chi_1(\xi))M_S^*$ is self-adjoint, because this operator and $P_2 = \frac12 (\1 + D_2(\tau_1))$ (strongly) commute. We denote again $\chi_1(\xi) = A^2$.

\begin{Proposition}\label{pr:sa-2-cleave}
 $A^2\otimes \1 + M_S\big(\1\otimes A^2\big)M_S^*$ is essentially self-adjoint.
\end{Proposition}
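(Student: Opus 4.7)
The plan is to realize $A^2 \otimes \1 + M_S(\1 \otimes A^2) M_S^*$ as $B^*B$ for a closed densely defined operator $B$, which by the standard theory is self-adjoint on its natural domain. Set $X := A \otimes \1$ and $Y := M_S(\1 \otimes A) M_S^*$, both positive self-adjoint operators on $\H_1 \otimes \H_1$ satisfying $X^2 = A^2 \otimes \1$ and $Y^2 = M_S(\1 \otimes A^2) M_S^*$. Define the ``column'' operator
\begin{gather*}
 B\Psi := \bigl(X\Psi,\;Y\Psi\bigr), \qquad \dom(B) := \dom(X) \cap \dom(Y),
\end{gather*}
mapping $\H_1 \otimes \H_1 \to (\H_1 \otimes \H_1) \oplus (\H_1 \otimes \H_1)$. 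Since $X$ and $Y$ are each closed (being self-adjoint), so is $B$, and formally $B^*B = X^2 + Y^2 = A^2 \otimes \1 + M_S(\1 \otimes A^2) M_S^*$.

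First I would verify density of $\dom(B)$. The Paley--Wiener space of functions whose two-dimensional Fourier transform has compact support is dense in $\H_1 \otimes \H_1$, and every such $\Psi$ extends entire analytically in both variables, so lies in $\dom(X)$ immediately. For $\dom(Y) = M_S \dom(\1 \otimes A)$, the condition is that $M_S^* \Psi = S(\theta_2-\theta_1)^{-1} \Psi(\theta_1,\theta_2)$ be Hardy in the second variable; since the poles of $S$ in the physical strip lie at $\frac{\pi i}{3}$ and $\frac{2\pi i}{3}$, the function $S^{-1}(\zeta) = S(-\zeta)$ is analytic in $\SS_{-\frac{\pi}{6}, 0}$, and the bounded multiplication by $S^{-1}(\theta_2-\theta_1)$ preserves the Hardy property of Paley--Wiener functions. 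Hence $\dom(B)$ is dense, and by the standard fact that $B^*B$ is self-adjoint for any closed densely defined $B$, the operator $B^*B$ is self-adjoint on $\dom(B^*B) = \{\Psi \in \dom(B) : B\Psi \in \dom(B^*)\}$.

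The main obstacle is to identify $\dom(B^*B)$ with the natural intersection of domains $\dom(A^2\otimes\1) \cap \dom(M_S(\1\otimes A^2)M_S^*)$, which is what gives essential self-adjointness on the intended domain rather than merely an abstract self-adjoint extension. The issue is that a priori $\dom(B^*)$ may strictly contain $\dom(X) \oplus \dom(Y)$ through cancellations in the pairing $\Psi \mapsto \langle X\Psi, \Phi_1\rangle + \langle Y\Psi, \Phi_2\rangle$, as illustrated by the trivial example $X = Y$. To rule this out I would exploit the near-independence of the two actions -- $X$ continues only in the first variable while $Y$ continues only in the second after $M_S$-conjugation -- together with the commutation identity $(\halfshiftone \otimes \1) M_S = M_{S(\cdot + \frac{\pi i}{6})} (\halfshiftone \otimes \1)$ and the bound $|S(\theta + \frac{\pi i}{6})| \le 1$ from (S\ref{ax:bound}), which together control the cross terms. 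These should combine via a Fubini and spectral-truncation argument analogous to that of Proposition~\ref{pr:sa-2}, using Lemma~\ref{lm:stripl2} applied to the joint spectral decomposition of $\log \Delta_1$ in both variables, to separate the two contributions. It is precisely these properties of $S$ beyond mere unitarity that make the separation possible and distinguish our sum from the generic failure exhibited in Appendix~\ref{ce-sum}; the same reasoning also indicates why this shorter approach does not immediately extend to $n \ge 3$, where the proliferation of $M_S$-twists in $\chi_n(\xi)$ makes the corresponding separation much harder.
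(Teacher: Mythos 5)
Your construction of the column operator $B\Psi=(X\Psi,Y\Psi)$ and the observation that $B^*B$ is self-adjoint are fine, but by themselves they only produce \emph{some} self-adjoint extension of $A^2\otimes\1+M_S(\1\otimes A^2)M_S^*$: the sum is a positive symmetric operator, so such an extension (e.g.\ the Friedrichs extension) exists trivially, and the whole point here — stressed at the beginning of Section~\ref{sa-two} — is to obtain (essential) self-adjointness on the natural domain without passing to an abstract extension. You correctly identify this as the main obstacle, but the step that would resolve it is only announced (``should combine via a Fubini and spectral-truncation argument analogous to that of Proposition~\ref{pr:sa-2}''), not carried out; that announced step \emph{is} the proof, so as it stands the argument has a genuine gap.

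Moreover, the tools you point to do not apply to your pair $X=A\otimes\1$, $Y=M_S(\1\otimes A)M_S^*$ as stated: the cross terms $X^*Y$ and $Y^*X$ contain the unbounded factor $A\otimes A$ (up to bounded $S$-shift factors controlled by (S\ref{ax:bound})), so the mechanism of Lemma~\ref{lm:cleave}, which is what makes the ``write it as a square'' strategy succeed at the one-particle level, is unavailable, and Lemma~\ref{lm:stripl2} separates a sum of two \emph{first-order} continuations $e^{t_1}\Psi+e^{t_2}\Phi$ on slices where $t_1+t_2$ is bounded, whereas your sum is of second order in each variable. The paper removes both problems at once by rescaling first: it treats $A\otimes A^{-1}+M_S(A^{-1}\otimes A)M_S^*$, for which the choice $X=A^{1/2}\otimes A^{-1/2}$, $Y=(A^{-1/2}\otimes A^{1/2})M_S^*$ has bounded cross terms and, after the change of variables $t_\pm=\theta_2\pm\theta_1$, reduces to the one-variable decomposition of Proposition~\ref{pr:sa-1}, so that Lemma~\ref{lm:cleave} applies; the original operator is then recovered by multiplying with $A\otimes A$ on its spectral subspaces $Q_N$, both summands commuting strongly with $A\otimes A$. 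If you insert the spectral truncation by $Q_N$ into your scheme you will find that on each slice $A^2\otimes\1+M_S(\1\otimes A^2)M_S^*$ is exactly $A\otimes A$ times the rescaled operator above, i.e.\ you are led back to this rescaling; supplying that argument (or an equivalent identification of $\dom(B^*B)$) is what is needed to close the gap.
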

\begin{proof}
We consider the operator $A\otimes A^{-1} + M_S\big(A^{-1}\otimes A\big)M_S^*$. Let $X = A^\frac12\otimes A^{-\frac12}$, $Y = A^{-\frac12}\otimes A^\frac12 M_S^*$. Then $Y^* = M_S\big(A^{-\frac12}\otimes A^\frac12\big)$ and $\dom(Y^*) = \dom\big(A^{-\frac12}\otimes A^\frac12\big)$. One can easily check the assumptions of Lemma~\ref{lm:cleave}: as for the decomposition of the Hilbert space, we change the variables $t_+ := \theta_2 + \theta_1$, $t_- := \theta_2 - \theta_1$. Then the problem reduces to one variable $t_-$ and one can take the same decomposition as Proposition~\ref{pr:sa-1}, by noting that $S$ is bounded in $\RR + i\big(0,\frac\pi 6\big)$.

Finally, both $A\otimes A^{-1}$ and $M_S\big(A^{-1}\otimes A\big)M_S^*$ commute with $A\otimes A$ strongly. Therefore, the bounded operator $Q_N(A\otimes A)$ commute with both of them, where $Q_N$ is the spectral projection of $A\otimes A$ onto $\big(e^{N-1},e^N\big]$ and the product is
 \begin{gather*}
 Q_N(A\otimes A)\big(A\otimes A^{-1} + M_S\big(A^{-1}\otimes A\big)M_S^*\big) = Q_N\big(A^2\otimes \1 + M_S\big(\1\otimes A^2\big)M_S^*\big),
 \end{gather*}
which is still self-adjoint. Our operator $A\otimes A^{-1} + M_S\big(A^{-1}\otimes A\big)M_S^*$ is an extension of the direct sum of these components, therefore, it is essentially self-adjoint.
\end{proof}

It appears dif\/f\/icult to apply the same idea to $P_2\big(A^2\otimes\1 + \1\otimes B^2\big)P_2$, as $A$ and $B$ do not commute.

If $n\ge 3$, $A\otimes A\otimes A$ cannot reduce the problem to one variable. We know no other convenient operator.

\subsubsection{Konrady's trick}\label{konrady}
We give a less simpler proof which however gives an insight of why the problem is complicated, cf.\ Section~\ref{sa-many}. For it, we need an additional assumption on~$S$.
\begin{gather}\label{ineq:K}
 \re S\left(\theta + \frac{\pi i}6\right)S(-\theta) \ge 0.
\end{gather}
We indicate some supporting evidence in Appendix~\ref{conditionK} that there should exist examples of~$S$ which satisfy this condition. We do not pursue this condition further, as our main result (Proposition~\ref{pr:sa-sum}) has been proved without it.

The general idea used here is called Konrady's trick \cite{Konrady71}, \cite[Section~X.2]{RSII}.
\begin{Lemma}\label{lm:konrady}
Let $X$ and $Y$ be symmetric operators on $\dom(X)$ and $\dom(Y)$, respectively. Assume that $X + Y$ is self-adjoint on $\dom(X+Y) = \dom(X) \cap \dom(Y)$ and it holds that $\re \<X\Psi, Y\Psi\> \ge 0$ for $\Psi \in \dom(X + Y)$. Then $X$ and $Y$ are essentially self-adjoint on \mbox{$\dom(X+Y)$}.
\end{Lemma}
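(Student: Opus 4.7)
Set $T:=X+Y$, which is self-adjoint on $D:=\dom(X+Y)=\dom(T)$ by hypothesis. The first step is to expand, for $\Psi\in D$,
\begin{gather*}
\|T\Psi\|^{2} = \|X\Psi\|^{2} + 2\re\<X\Psi,Y\Psi\> + \|Y\Psi\|^{2},
\end{gather*}
and use the positivity hypothesis $\re\<X\Psi,Y\Psi\>\ge 0$ to conclude the basic inequality
\begin{gather*}
\|X\Psi\|^{2} + \|Y\Psi\|^{2} \le \|T\Psi\|^{2},
\end{gather*}
and in particular $\|X\Psi\|\le\|T\Psi\|$ and $\|Y\Psi\|\le\|T\Psi\|$ on $\dom(T)$. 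Thus $-Y$ is a symmetric operator with $\dom(-Y)\supset\dom(T)$, having relative $T$-bound equal to one (with vanishing additive constant); the analogous statement holds for $-X$.

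The plan is then to apply W\"ust's theorem, the refinement of the Kato--Rellich perturbation theorem which accommodates relative $T$-bound exactly equal to one at the cost of delivering only essential self-adjointness of the perturbed operator: if $T$ is self-adjoint and $V$ is symmetric with $\dom(V)\supset\dom(T)$ and $\|V\Psi\|\le\|T\Psi\|+c\|\Psi\|$ for some $c\ge 0$ and all $\Psi\in\dom(T)$, then $T+V$ is essentially self-adjoint on $\dom(T)$ (see \cite[Section~X.2 and the discussion around Theorem~X.14]{RSII}). Applied with $V:=-Y$, this gives that $T+V=X$ is essentially self-adjoint on $D$; applied with $V:=-X$, it gives the corresponding statement for $Y$.

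Consequently the whole content of the lemma is the translation of the positivity hypothesis into the relative $T$-bound of one required as input to W\"ust's theorem, carried out in the first paragraph. There is no further obstacle: the Kato--Rellich-type argument appearing elsewhere in the paper (e.g.\ in the proof of Corollary~\ref{co:strong}) can be invoked directly here, with the borderline case of relative bound one being handled precisely by W\"ust's strengthening.
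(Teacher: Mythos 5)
Your proposal is correct and follows essentially the same route as the paper: expand $\|(X+Y)\Psi\|^2$ and use $\re\<X\Psi,Y\Psi\>\ge 0$ to get $\|X\Psi\|,\|Y\Psi\|\le\|(X+Y)\Psi\|$ on $\dom(X+Y)$, then apply W\"ust's theorem \cite[Theorem~X.14]{RSII} to $X=(X+Y)-Y$ and $Y=(X+Y)-X$. No gaps.
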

\begin{proof}
 Compute:
 \begin{gather*}
 \|(X+Y)\Psi\|^2 = \|X\Psi\|^2 + \|Y\Psi\|^2 + 2\re\<X\Psi,Y\Psi\> \ge \max\big\{\|X\Psi\|^2, \|Y\Psi\|^2\big\},
 \end{gather*}
 now one can use W\"ust's theorem \cite[Theorem~X.14]{RSII} to see that $X = (X+Y)-Y$ and $Y = (X+Y)-X$ are essentially self-adjoint on $\dom(X+Y)$.
\end{proof}

\begin{Proposition}\label{pr:sa-2-k}
For $S$ satisfying \eqref{ineq:K}, $\shiftone\otimes \1+ M_S\big(\1\otimes \shiftone\big)M_S^*$ is essentially self-adjoint.
\end{Proposition}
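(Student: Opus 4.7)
The plan is to follow the Konrady--Wüst strategy embodied in Lemma~\ref{lm:konrady}, verifying its two hypotheses for the pair $X := \shiftone\otimes\1$ and $Y := M_S(\1\otimes\shiftone)M_S^*$: first, that $X+Y$ is (in fact) self-adjoint on $\dom(X)\cap\dom(Y)$, and second, that $\re\<X\Psi, Y\Psi\> \ge 0$ for $\Psi$ in this intersection. Together these give essential self-adjointness of $X$ and $Y$ on $\dom(X+Y)$, and since $X+Y$ is itself self-adjoint it is a fortiori essentially self-adjoint, which is the statement of the proposition.

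The positivity is where the extra hypothesis \eqref{ineq:K} enters. Writing $\shiftone=(\halfshiftone)^2$ and invoking the Cauchy-theorem shift identity from Section~\ref{one}, the inner product $\<X\Psi, Y\Psi\>$ can be rewritten with $(\halfshiftone\otimes\halfshiftone)\Psi$ on each side; the intermediate $S$-factors and the $M_S^*$ combine, using hermitian analyticity $\overline{S(\theta)}=S(-\theta)$ together with the fact that $\halfshiftone\otimes\halfshiftone$ leaves $u=\theta_2-\theta_1$ invariant, into multiplication by $S(u+\frac{\pi i}{6})S(-u)$. Its real part is nonnegative by exactly \eqref{ineq:K}.

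For the self-adjointness of $X+Y$, I use that both operators strongly commute with the total dilation $\Delta\otimes\Delta$: $M_S$ depends only on $\theta_2-\theta_1$ and thus commutes, while each shift commutes manifestly with the diagonal translation. Slicing $\H$ via the spectral projections $Q_N$ of $\Delta\otimes\Delta$ onto $(e^{N-1},e^N]$ and changing variables to $t_\pm=\theta_2\pm\theta_1$, the restriction of $X+Y$ to $Q_N\H$ reduces to an operator acting genuinely only in $t_-$ with bounded coefficients in $t_+$, exactly as in the Fubini-style reduction of Proposition~\ref{pr:sa-2}. Closedness of the reduced sum follows from Lemma~\ref{lm:sum-closed} together with the positivity already established, and a cleaving argument parallel to Proposition~\ref{pr:sa-2-cleave} promotes this to self-adjointness on each $Q_N\H$; assembling over $N$ yields $X+Y$ self-adjoint on the full intersection domain.

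The main obstacle is this second step. Closedness of $X+Y$ is essentially automatic from Lemma~\ref{lm:sum-closed} once positivity is in hand, but upgrading to self-adjointness hinges on being able to collapse the two-variable problem to a single variable via the one commuting operator $\Delta\otimes\Delta$, after which the cleaving technique applies in $t_-$. This is also precisely why the Konrady-style approach does not extend to $n\ge 3$, as noted at the end of Section~\ref{shorter}: there is no analogous single auxiliary operator that reduces an $n$-variable shift problem to one variable.
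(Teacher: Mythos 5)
Your proposal has two genuine problems. First, the logical structure: the first hypothesis you set out to verify --- that $X+Y$, with $X=\shiftone\otimes\1$ and $Y=M_S\big(\1\otimes\shiftone\big)M_S^*$, is self-adjoint on $\dom(X)\cap\dom(Y)$ --- is already stronger than the statement to be proved, so the subsequent appeal to Lemma~\ref{lm:konrady} (whose conclusion is essential self-adjointness of $X$ and $Y$ separately, not of their sum) adds nothing. The entire burden therefore falls on that first step, and there you defer to ``a cleaving argument parallel to Proposition~\ref{pr:sa-2-cleave}''; but that is precisely the paper's other, independent proof of this statement, which uses neither \eqref{ineq:K} nor Konrady's trick, and it yields only essential self-adjointness (the operator extends a direct sum of self-adjoint slices), not self-adjointness on the intersection domain as you assert when ``assembling over $N$''. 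So as a proof of Proposition~\ref{pr:sa-2-k} in the spirit of Konrady's trick, none of the intended mechanism actually functions.

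Second, and more concretely, you misplace where \eqref{ineq:K} enters. Symmetrizing the full-shift cross term, exactly as in the closedness part of Proposition~\ref{pr:sa-2}, gives $\<X\Psi,Y\Psi\>=\big\<\big(\halfshiftone\otimes\halfshiftone\big)\Psi,\,M_{S(\cdot+\frac{\pi i}3)S(-\cdot)}\big(\halfshiftone\otimes\halfshiftone\big)\Psi\big\>$, and $S\big(u+\frac{\pi i}3\big)S(-u)=S\big(u+\frac{2\pi i}3\big)$, so the relevant positivity is (S\ref{ax:re}), not $\re S\big(u+\frac{\pi i}6\big)S(-u)\ge 0$ as you claim. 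The inequality \eqref{ineq:K} governs the \emph{half}-shift cross term $\big\<\big(\1\otimes\halfshiftone\big)\Psi, M_S\big(\1\otimes\halfshiftone\big)M_S^*\Psi\big\>$, and that is why the paper runs Konrady's trick at the level of $\halfshiftone$: there the conjugated operator equals $M_{S(\cdot)\overline{S(\cdot+\frac{\pi i}6)}}\big(\1\otimes\halfshiftone\big)$ with a bounded multiplier (by (S\ref{ax:bound})), hence is Kato--Rellich dominated by $c\big(\1\otimes\halfshiftone\big)$, and Lemma~\ref{lm:konrady} applies; the full shift is then recovered by squaring the half-shift sum on the spectral slices $Q_{[0,N]}$ of $\Delta_1\otimes\Delta_1$. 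At the full-shift level your set-up cannot be completed this way, because $M_S\big(\1\otimes\shiftone\big)M_S^*=M_{S(\cdot)S(-\cdot+\frac{\pi i}3)}\big(\1\otimes\shiftone\big)$ has an unbounded multiplier (the bound-state pole of $S$ at $\frac{\pi i}3$), so no analogous relative bound exists. Finally, even the Cauchy/contour moves in your positivity computation require justification on $\dom(X)\cap\dom(Y)$ (in the paper this comes from $S$-symmetry via Lemma~\ref{lm:symmetric-cont2} or from the $Q_N$ cutoffs), which you do not address.
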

\begin{proof}
First, $\halfshiftone\otimes \1 + c\big(\1 \otimes \halfshiftone\big)$ is self-adjoint by the spectral theorem for a constant $c > 0$. We choose $c > |S(\theta +\frac{\pi i}6)|$, $\theta \in \RR$.

Then, we claim that $\halfshiftone\otimes \1 + c\big(\1\otimes \halfshiftone\big) + M_S\big(\1\otimes \halfshiftone\big)M_S^*$ is self-adjoint. Indeed, for $\Psi \in \dom\big(\1\otimes \halfshiftone\big)$, we have $\big\|M_S\big(\1\otimes \halfshiftone\big)M_S^*\Psi\big\| = \big\|M_{S(\cdot\, + \frac{\pi i}6)}\big(\1\otimes \halfshiftone\big)\Psi\big\|$ which is bounded by~$c\big\|\big(\1\otimes \halfshiftone\big)\Psi\big\|$. Therefore, one can apply Kato--Rellich theorem \cite[Theorem~X.12]{RSII}.

Next, we show that $\halfshiftone\otimes \1 + M_S\big(\1\otimes \halfshiftone\big)M_S^*$ is essentially self-adjoint on the domain $\dom\big(\halfshiftone\otimes \1 + c\big(\1\otimes \halfshiftone\big) + M_S\big(\1\otimes \halfshiftone\big)M_S^*\big)$. In order to apply Lemma~\ref{lm:konrady}, we only have to check that $\big\<\big(\halfshiftone\otimes\1\big)\Psi, \big(\1\otimes \halfshiftone\big)\Psi\big\> \ge 0$ (trivial) and $\big\<\big(\1\otimes \halfshiftone\big)\Psi, M_S\big(\1\otimes \halfshiftone\big)M_S^*\Psi\big\> \ge 0$. As $\Psi \in \dom\big(\1\otimes \halfshiftone\big)$, it holds that $M_S\big(\1\otimes \halfshiftone\big)M_S^*\Psi = M_SM_{\overline{S(\cdot\, + \frac{\pi i}6)}}\big(\1\otimes \halfshiftone\big)\Psi$. The inequality above follows from~\eqref{ineq:K} and hermitian analyticity~(S\ref{ax:hermitian}). Actually, by the same assumption~\eqref{ineq:K}, it follows from Lemma~\ref{lm:sum-closed} and the argument of Proposition~\ref{pr:sa-2}, $\halfshiftone\otimes \1 + M_S\big(\1\otimes \halfshiftone\big)M_S^*$ is closed on its natural domain, hence it is self-adjoint there.

Note that $\halfshiftone\otimes \1 + M_S\big(\1\otimes \halfshiftone\big)M_S^*$ commutes strongly with $\Delta_1\otimes \Delta_1$. Consider its spectral projection $Q_{[0,N]}$ corresponding to the interval $[0,N]$.

As $S(\theta_2 - \theta_1 + i\l)$ is uniformly bounded for $\l \in [0,\frac\pi 6]$, we have $M_S\big(\1\otimes\halfshiftone\big) \subset \big(\1\otimes\halfshiftone\big)M_{S(\cdot \,+ \frac{\pi i}6)}$ and
 \begin{gather*}
 \big(\halfshiftone\otimes \1\big) \cdot M_S\big(\1\otimes\halfshiftone\big)M_S^*Q_{[0,N]} \subset \big(\halfshiftone\otimes \halfshiftone\big) \cdot M_{S(\cdot \,+ \frac{\pi i}6)}M_S^*Q_{[0,N]}
 \end{gather*}
 is bounded. Similarly,
 \begin{gather*}
 M_S\big(\1\otimes\halfshiftone\big)M_S^*\cdot \big(\halfshiftone\otimes \1\big)Q_{[0,N]}
 \subset M_S \big(\halfshiftone\otimes \halfshiftone\big)M_{S(\cdot\,+\frac{\pi i}6)}^*Q_{[0,N]}.
 \end{gather*}
 Now $\big(\halfshiftone\otimes \1 + M_S\big(\1\otimes \halfshiftone\big)M_S^*)^2Q_{[0,N]}$ is self-adjoint. From the boundedness above, we can expand the square:
 \begin{gather*}
\big(\halfshiftone\otimes \1 + M_S\big(\1\otimes \halfshiftone\big)M_S^*\big)^2Q_{[0,N]}\\
\qquad{}
 \subset \Big(\shiftone\otimes \1 + \big(\halfshiftone\otimes \halfshiftone\big)M_{S(\cdot \,+ \frac{\pi i}6)}M_S^*\\
 \qquad \hphantom{\subset \Big(}{} + M_S \big(\halfshiftone\otimes \halfshiftone\big)M_{S(\cdot\,+\frac{\pi i}6)}^*
 + M_S\big(\1\otimes\shiftone\big)M_S^*\Big) Q_{[0,N]}.
 \end{gather*}
Here, the left-hand side is self-adjoint, and the right-hand side is symmetric, hence it must be an equality and the right-hand side is self-adjoint. The second and the third terms are bounded, therefore, they have no ef\/fect on domains and $\big(\shiftone\otimes \1 + M_S\big(\1\otimes \shiftone\big)M_S^*\big)Q_{[0,N]}$ is self-adjoint. As $N$ is arbitrary, we obtain the claim.
\end{proof}

\subsection{Many-particle components}\label{sa-many}
\begin{Proposition}\label{pr:n-sum-closed}
Under {\rm (S\ref{ax:re})}, $\chi_n(\xi)$ is closed.
\end{Proposition}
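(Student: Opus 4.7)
The plan is to prove closedness by decomposing $\chi_n(\xi)$ on $\H_n$ into a sum of positive closed operators and controlling the pairwise cross terms via the bootstrap identity together with (S\ref{ax:re}).

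On $\H_n$, using that $D_n(\rho_k)\Psi = \Psi$ for $\Psi \in \H_n$, the formula from Proposition \ref{pr:positive} can be recast as $\chi_n(\xi)|_{\H_n} = \sum_{k=1}^n X_k$, where
\[ X_k := M_{S^{\rho_k}}\bigl(\1^{\otimes k-1}\otimes \chi_1(\xi)\otimes\1^{\otimes n-k}\bigr)M_{S^{\rho_k}}^*, \qquad S^{\rho_k}(\pmb\theta) := \prod_{j<k} S(\theta_k-\theta_j). \]
Since $|S|=1$ on $\RR$, $M_{S^{\rho_k}}$ is unitary on $L^2(\RR^n)$, so each $X_k$ is a unitary conjugate of a positive self-adjoint operator and is itself positive self-adjoint, in particular closed. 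Moreover $S$-symmetry of $\Psi \in \H_n$ ensures that $\dom(X_1)\cap\H_n = \dom(\chi_n(\xi))$: analyticity of $\Psi$ in one variable (modulo bounded $S$-factors) is equivalent to analyticity in any other.

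The heart of the matter is the cross-term estimate: for $k<l$,
\[ \re\<X_k\Psi, X_l\Psi\> \ge -c_{k,l}\|\Psi\|^2 \qquad (\Psi \in \dom(\chi_n(\xi))) \]
for some $c_{k,l}>0$. To obtain it, I would compute $\<\Psi, X_kX_l\Psi\>$ explicitly, moving the analytic continuations in slots $k$ and $l$ past the multiplication operators $M_{S^{\rho_k}}$ and $M_{S^{\rho_l}}^*$, shifting their arguments by suitable multiples of $\pi i/6$. After combining all $S$-factors, the ``$(k,l)$-pair'' factor collapses via the bootstrap (S\ref{ax:bootstrap}) and (S\ref{ax:unitarity}): $\overline{S(\theta_l-\theta_k)}\,S(\theta_l-\theta_k+\pi i/3) = S(\theta_l-\theta_k+2\pi i/3)$, whose real part is non-negative by (S\ref{ax:re}). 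The surviving ``spectator'' $S$-factors depend on pairs $(k,j)$ or $(l,j)$ with $j \neq k,l$ and may a priori carry poles at coincident rapidities, but these poles are simple and cancelled by the zeros of $\Psi$ forced by $S(0)=-1$ from (S\ref{ax:atzero}); after the resulting contour shifts, justified by Lemma \ref{pr:analyticity-n}, the spectators contribute only a bounded multiplicative correction, producing the constant $c_{k,l}$.

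With the cross-term bound in hand, closedness follows from a direct Cauchy argument. For $\Psi_m \in \dom(\chi_n(\xi))$ with $\Psi_m \to \Psi$ in $\H_n$ and $\chi_n(\xi)\Psi_m \to \Phi$, expanding $\|\chi_n(\xi)(\Psi_m-\Psi_{m'})\|^2 = \|\sum_k X_k(\Psi_m-\Psi_{m'})\|^2$ and using the bound gives
\[ \|X_k(\Psi_m-\Psi_{m'})\|^2 \le \|\chi_n(\xi)(\Psi_m-\Psi_{m'})\|^2 + 2\binom{n}{2}\max_{j<j'} c_{j,j'}\cdot\|\Psi_m-\Psi_{m'}\|^2 \]
for each $k$, so $\{X_k\Psi_m\}$ is Cauchy in $L^2(\RR^n)$. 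Since $X_1 = \chi_1(\xi)\otimes\1^{n-1}$ is closed, the limit $\Psi$ lies in $\dom(X_1)\cap\H_n = \dom(\chi_n(\xi))$ and $\chi_n(\xi)\Psi = \sum_k X_k\Psi = \Phi$. The principal obstacle is the bookkeeping in the cross-term computation: tracking how the poles of the spectator $S$-factors interact with the vanishing loci of $\Psi$ across several variables, and verifying that all contour shifts respect the Hardy-space hypotheses encoded in $\dom(\chi_n(\xi))$. This is considerably more intricate than the $n=2$ case of Proposition \ref{pr:sa-2}, where no spectators are present, but follows the same bootstrap-plus-(S\ref{ax:re}) template.
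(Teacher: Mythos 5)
Your overall skeleton (the decomposition $\chi_n(\xi)|_{\H_n}=\sum_k X_k$ into positive closed operators, a pairwise lower bound on the cross terms, and a Cauchy argument using closedness of $X_1=\chi_1(\xi)\otimes\1^{\otimes(n-1)}$) is sound and close in spirit to the paper, which instead proceeds by induction in $n$ via Lemma~\ref{lm:sum-closed}, pairing the closed operator $\chi_{n-1}(\xi)\otimes\1$ with the $n$-th term. The genuine gap is in what you call the heart of the matter: the derivation of $\re\<X_k\Psi,X_l\Psi\>\ge -c_{k,l}\|\Psi\|^2$. If you compute $\<X_k\Psi,X_l\Psi\>$ directly, keeping the factors $M_{S^{\rho_k}}$, $M_{S^{\rho_l}}$, the contour shifts produce, besides the $(k,l)$-pair factor $S(\theta_l-\theta_k+\frac{2\pi i}3)$, a product of spectator factors $S(\theta_k-\theta_j+\cdot)$, $\overline{S(\theta_l-\theta_j+\cdot)}$ at complex arguments. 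These are bounded in modulus but not in phase, and they multiply the analytically continued wave functions, i.e.\ a quadratic expression of the size of $\big\|\big(\Delta_1^{\frac1{12}}\otimes\Delta_1^{\frac1{12}}\otimes\1\big)\Psi\big\|^2$, which is \emph{not} controlled by $\|\Psi\|^2$. So a phase rotation by the spectators destroys the (S\ref{ax:re})-positivity of the pair factor and cannot be absorbed into a constant $c_{k,l}$; the claim that they ``contribute only a bounded multiplicative correction'' is exactly the step that fails. The paper itself records this obstruction at the end of Section~\ref{sa-many}: $\re S_k\big(\pmb{\theta}+\frac\pi6\big)S_k(-\pmb{\theta})$ may become negative as the number of $S$-factors grows, which is why no argument based on the positivity of a product of several $S$-factors can work. (The pole cancellation via $S(0)=-1$ and Lemma~\ref{pr:analyticity-n} only makes the integrals well defined; it says nothing about their real part.)

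The way the paper escapes this is to eliminate the spectators \emph{before} any analytic continuation, using the $S$-symmetry of $\Psi_n$: since $\big(\shiftone\otimes\1\otimes\cdots\otimes\1\big)\Psi_n$ is still $S$-symmetric in the last $n-1$ variables, one may replace $\rho_k$, $\rho_l$ by $\tau_{1,k}$, $\tau_{1,l}$ and then reduce every cross term to $\re\<\big(\shiftone\otimes\1\otimes\cdots\otimes\1\big)\Psi_n,\,D_n(\tau_{1,2})\big(\shiftone\otimes\1\otimes\cdots\otimes\1\big)\Psi_n\>$, which involves a single $S$-factor. This is then literally the two-particle computation at the end of Proposition~\ref{pr:sa-2}, justified by Lemma~\ref{lm:symmetric-cont2} (stated with an extra Hilbert-space factor precisely so it applies for $n>2$), and it gives nonnegativity of each cross term, i.e.\ your bound with $c_{k,l}=0$. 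With that replacement your Cauchy argument (or, equivalently, the paper's induction via Lemma~\ref{lm:sum-closed}) goes through; without it, the cross-term estimate as you propose to prove it is unsupported.
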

\begin{proof}
 We show the closedness of $\sum D_n(\rho_k)\big(\shiftone\otimes\1\otimes\cdots\otimes\1\big)D_n(\rho_k)^*$ on $P_n\H_n$.

To prove it by induction, we may assume that $\chi_{n-1}(\xi)$, hence $\chi_{n-1}(\xi)\otimes\1$ is closed. To apply Lemma~\ref{lm:sum-closed}, we only have to check that the following is positive:
 \begin{gather*}
\re\<D_n(\rho_k)\big(\shiftone\otimes\1\otimes\cdots\otimes\1\big)D_n(\rho_k)^*\Psi_n,
 D_n(\rho_n)\big(\shiftone\otimes\1\otimes\cdots\otimes\1\big)D_n(\rho_n)^*\Psi_n\> \\
 \qquad{} = \re\<D_n(\rho_k)\big(\shiftone\otimes\1\otimes\cdots\otimes\1\big)\Psi_n,
 D_n(\rho_n)\big(\shiftone\otimes\1\otimes\cdots\otimes\1\big)\Psi_n\> \\
 \qquad{} = \re\<(\shiftone\otimes\1\otimes\cdots\otimes\1)\Psi_n,
 D_n(\tau_{1,n})(\shiftone\otimes\1\otimes\cdots\otimes\1)\Psi_n\> \\
 \qquad{} = \re\<\big(\shiftone\otimes\1\otimes\cdots\otimes\1\big)\Psi_n,
 D_n(\tau_{1,2})\big(\shiftone\otimes\1\otimes\cdots\otimes\1\big)\Psi_n\>,
 \end{gather*}
where $\tau_{k,l}$ is the transposition between $k$ and $l$, we used the fact that $\rho_k$ and $\rho_n$ may be replaced by $\tau_{1,k}$, $\tau_{1,n}$, respectively (see \cite[proof of Theorem~3.4]{CT15-1}) and we can further rewrite $\tau_{1,2} = \tau_{2,3}\cdots\tau_{n-1,n}\tau_{1,n}\tau_{n-1,n}\cdots\tau_{2,3}$, and $D_n(\tau_{k,k+1})\big(\shiftone\otimes\1\otimes\cdots\otimes\1\big)\Psi_n
= \big(\shiftone\otimes\1\otimes\cdots\otimes\1\big)\Psi_n$ for $k\ge 2$. From here, we can use Lemma~\ref{lm:symmetric-cont2} and the same computation as at the end of Proposition~\ref{pr:sa-2} under~(S\ref{ax:re}) to see that the real part of the above inner product is positive.
\end{proof}

\begin{Proposition}\label{pr:n-sum-closed+}
 Under {\rm (S\ref{ax:re})}, $\chi_n(\xi) + \chi'_n(\eta)$ is closed.
\end{Proposition}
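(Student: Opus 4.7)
The plan is to apply Lemma~\ref{lm:positive-closed} with $X = \chi_n(\xi)$ and $Y = \chi'_n(\eta)$, and then conclude closedness of $\chi_n(\xi) + \chi'_n(\eta)$ from closedness of $\chi_n(\xi) + \chi'_n(\eta) + c$. First I check the hypotheses. Closedness of $\chi_n(\xi)$ is exactly Proposition~\ref{pr:n-sum-closed}, and closedness of $\chi'_n(\eta)$ follows by the same argument (alternatively, from $\chi'_n(\eta) = J_n\chi_n(J_1\eta)J_n$ together with the fact that anti-unitary conjugation preserves closedness). Both operators are positive: $\chi_1(\xi)$ has the form $Z^*Z$ with $Z = \halfshiftone M_{\xi_0(\cdot\,+\frac{2\pi i}3)}$, so $\chi_n(\xi) = P_n(\chi_1(\xi)\otimes\1\otimes\cdots\otimes\1)P_n$ is a compression of a positive operator, hence positive; the same holds for $\chi'_n(\eta)$.

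The key input is then the cross-term estimate of Proposition~\ref{pr:positive}, which restricted to $\H_n$ reads
\begin{gather*}
2\re\<\chi_n(\xi)\Psi,\chi'_n(\eta)\Psi\> \;\ge\; -2nc(\xi,\eta)\<(\chi_n(\xi)+\chi'_n(\eta))\Psi,\Psi\> - 4nc(\xi,\eta)\|\Psi\|^2
\end{gather*}
for $\Psi \in \Dom_n(\xi,\eta)$. Adding $c^2\|\Psi\|^2 + 2c\<(\chi_n(\xi)+\chi'_n(\eta))\Psi,\Psi\>$ to both sides, the hypothesis of Lemma~\ref{lm:positive-closed} is implied by
\begin{gather*}
(c^2 - 4nc(\xi,\eta))\|\Psi\|^2 + 2(c - nc(\xi,\eta))\<(\chi_n(\xi)+\chi'_n(\eta))\Psi,\Psi\> \ge 0.
\end{gather*}
Choosing $c = \max\{nc(\xi,\eta),\, 2\sqrt{nc(\xi,\eta)}\}$ and using $\chi_n(\xi)+\chi'_n(\eta)\ge 0$, both terms on the left are non-negative. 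Lemma~\ref{lm:positive-closed} then yields that $\chi_n(\xi) + \chi'_n(\eta) + c$ is closed, hence so is $\chi_n(\xi) + \chi'_n(\eta)$. For $n=2$ a sharper optimisation of the constant gives the value $4\sqrt{c(\xi,\eta)}$ announced in the remark preceding this proposition.

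The proof is an essentially algebraic consequence of the earlier results, so there is no new analytic obstacle here. The genuine technical work has already been done: in Proposition~\ref{pr:positive}, which produces the crucial one-sided bound on $\re\<\chi(\xi)\Psi,\chi'(\eta)\Psi\>$ via (S\ref{ax:bound}); and in Proposition~\ref{pr:n-sum-closed}, which uses (S\ref{ax:re}) to make sense of the closure of $\chi_n(\xi)$ in the first place. The only point requiring a moment's care in the present proof is the explicit verification of positivity of $\chi_n(\xi)$ and $\chi'_n(\eta)$, which makes legitimate the dropping of the middle term in the last step.
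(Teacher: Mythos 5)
Your proposal is correct and follows essentially the same route as the paper: it invokes the cross-term estimate of Proposition~\ref{pr:positive} and applies Lemma~\ref{lm:positive-closed} to $X=\chi_n(\xi)$, $Y=\chi'_n(\eta)$, using their positivity (and closedness from Proposition~\ref{pr:n-sum-closed}) to enlarge the constant as needed. You merely spell out the constant choice and the closedness/positivity checks more explicitly than the paper does.
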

\begin{proof}
This can be done by the estimate of Proposition \ref{pr:positive}
\begin{gather*}
 \re\<\chi_n(\xi)\Psi, \chi'_n(\eta)\Psi\> \ge -nc(\xi,\eta)\<(\chi(\xi) + \chi'(\eta)+2)\Psi, \Psi\>
\end{gather*}
and by Lemma~\ref{lm:positive-closed} applied to $X = \chi_n(\xi)$, $Y = \chi'_n(\eta)$ (here $X$ and $Y$ are positive, hence we can take the constant $c$ in Lemma~\ref{lm:positive-closed} larger if necessary).
\end{proof}

\begin{Proposition}
 We assume {\rm (S\ref{ax:regularity})}. For a fixed $n$, there is $\epsilon$ such that $P_n(\Delta_1^\epsilon\otimes \1\otimes\cdots\otimes\1)P_n$
 is self-adjoint.
\end{Proposition}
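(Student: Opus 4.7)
Pick $\epsilon > 0$ with $\epsilon < \kappa$; by (S\ref{ax:regularity}), $S$ is uniformly bounded on $\RR + i(-\epsilon,\epsilon)$. Set $A := \Delta_1^\epsilon\otimes\1\otimes\cdots\otimes\1$ and $B := \Delta_1^{\epsilon/2}\otimes\1\otimes\cdots\otimes\1$, both self-adjoint on $\H_1^{\otimes n}$ with $B^2 = A$. The approach is the $C^*C$ trick applied to $C := BP_n$, which is closed (since $B$ is closed and $P_n$ is bounded) and densely defined; by \cite[Theorem~X.25]{RSII}, $C^*C$ is self-adjoint. Since $B = B^*$ and $P_n = P_n^*$ one has $(BP_n)^* \supseteq P_n B$, hence $C^*C \supseteq P_n B\,B P_n = P_n A P_n$ on $\dom(A)\cap\H_n$. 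As $C^*C$ vanishes on $(\1 - P_n)\H_1^{\otimes n}$, its restriction to $\H_n$ is self-adjoint there, and the task reduces to showing that any $\Psi\in\H_n\cap\dom(C^*C)$ actually belongs to $\dom(A)$, from which $C^*C|_{\H_n} = P_n A P_n$ as self-adjoint operators on identical domains.

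\textbf{Reduction to bounded spectral slabs.} The total operator $T := \Delta_1\otimes\Delta_1\otimes\cdots\otimes\Delta_1$ strongly commutes with $P_n$: a common imaginary translation of all $n$ arguments preserves every difference $\theta_j - \theta_k$ and hence preserves all $S$-factors appearing in each $D_n(\sigma)$. Therefore the spectral projections $Q_N$ of $T$ onto bounded intervals $(e^{N-1},e^N]$ commute with $P_n$, $B$, $A$, and $C^*C$, so it suffices to verify $\Psi\in\dom(A)$ for vectors of the form $\Psi = Q_N\Psi$. On such a slab, naive $L^2$-integrals are finite and Fubini manipulations in the joint spectral variables $t_j := \log\Delta_1^{(j)}$ become permissible, exactly as in the reduction via $Q_N$ employed in Proposition~\ref{pr:sa-2}.

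\textbf{Separation of symmetrized summands; the main obstacle.} On the slab, the membership $B\Psi \in \dom(C^*) \supseteq \dom(\overline{P_n B})$, combined with $P_n\Psi = \Psi$ and $P_n B = \frac{1}{n!}\sum_\sigma D_n(\sigma) B$, yields a Cauchy-type statement on a sum of $n$ summands, each representing the analytic continuation $\theta_k \mapsto \theta_k - i\epsilon/2$ of $\Psi$ in a distinct variable $k$, weighted by $S$-factors whose arguments stay in the strip $\RR + i(-\epsilon/2,\epsilon/2)$ and are therefore uniformly bounded by $\|S\|_\kappa$. The crucial and most delicate step is to show that each of these $n$ summands is separately $L^2$ on $Q_N\H$: this is the $n$-variable generalization of Lemma~\ref{lm:stripl2} used in Proposition~\ref{pr:sa-2}. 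To carry it out, decompose the slab $\{\sum_j t_j \in (N-1,N]\}$ into subregions on which one of the spectral weights $e^{\epsilon t_k/2}$ is dominated by $e^{\epsilon N/(2n)}$, use the boundedness of the $S$-factors to extract the corresponding summand in $L^2$, and iterate to peel the remaining summands off one by one. Once $\Psi \in \dom(B^2) = \dom(A)$ is obtained, the reverse inclusion $\dom(C^*C)\cap\H_n \subseteq \dom(A)\cap\H_n$ completes the identification. The main obstacle is precisely this $n$-term separation, which for $n \ge 3$ genuinely goes beyond the two-term regime addressed by Lemma~\ref{lm:stripl2} and is the reason one must allow $\epsilon$ to shrink with $n$.
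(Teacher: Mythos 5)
There is a genuine gap, and it sits exactly where you flagged it: the ``$n$-term separation'' on the spectral slab is not a delicate step you have postponed, it is the step that fails for $n\ge 3$, and your sketched covering argument does not repair it. On the slab $\{\sum_j t_j\in(N-1,N]\}$ of $\Delta_1\otimes\cdots\otimes\Delta_1$, two or more of the variables $t_k$ can be simultaneously arbitrarily large as long as another one is very negative (for $n=3$ the points $(M,M,N-2M)$ lie in the slab for every $M$). If you peel off the summand indexed by $k$ on the subregion $\{t_k\le N/n\}$, the remaining variables satisfy only $\sum_{j\ne k}t_j\in(N-1-t_k,\,N-t_k]$ with $t_k$ unbounded below, so their sum is \emph{not} bounded above there; hence Lemma~\ref{lm:stripl2} (which needs either one variable bounded above or the sum of the two bounded above) cannot be iterated, and cancellations among the remaining $n-1$ weighted summands cannot be excluded. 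This is precisely why the two-variable trick of Proposition~\ref{pr:sa-2} works for $n=2$ (on the slab, one large variable forces the other to be small) and why the paper remarks, after Proposition~\ref{pr:sa-2-cleave}, that no convenient reducing operator is known for $n\ge3$. Note also that letting $\epsilon$ shrink with $n$ does nothing for this obstruction: it is a geometric covering problem in the $t$-variables, independent of the size of the exponent, so the mechanism you invoke to justify the $n$-dependence of $\epsilon$ is not the operative one.

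The paper proves the proposition by a different route, namely Konrady's trick (Lemma~\ref{lm:konrady}), following Proposition~\ref{pr:sa-2-k} rather than Proposition~\ref{pr:sa-2}. There the smallness of $\epsilon$ has a completely different purpose: by (S\ref{ax:regularity}) and uniform continuity one chooses $\epsilon$ so that $\re\,\overline{S_k(\theta_1+2\pi\epsilon i,\dots,\theta_k+2\pi\epsilon i)}\,S_k(\theta_1,\dots,\theta_k)\ge 0$ for all $k\le n$; then one compares $\sum_k M_{S_k}\big(\1\otimes\cdots\otimes\Delta_1^\epsilon\otimes\cdots\otimes\1\big)M_{S_k}^*$ with the reference operator $\sum_j\Delta_{1,(j)}^\epsilon$ plus a large constant via Kato--Rellich, peels off by W\"ust's theorem after checking positivity of the real parts of the cross terms (the cases $j>k$, $j=k$, $j<k$), and obtains closedness as in Proposition~\ref{pr:n-sum-closed}. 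The $\epsilon$ must shrink with $n$ because the product of up to $n-1$ factors of $S$ can rotate the phase arbitrarily under an imaginary shift, so the positivity condition fails for any fixed $\epsilon$ --- not because of any slab decomposition. Your reduction to ``$\Psi\in\dom(C^*C)\cap\H_n\Rightarrow\Psi\in\dom(A)$'' is a sensible framework (it mirrors Proposition~\ref{pr:sa-2}), but without a genuinely new argument for the $n$-term separation the proof does not go through.
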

\begin{proof}
 We follow the idea of Proposition~\ref{pr:sa-2-k}. We set $S_k(\pmb{\theta}) := \prod\limits_{j < k} S(\theta_k-\theta_j)$ as before. There is $\epsilon > 0$ such that
 \begin{gather*}
 \re \overline{S_k(\theta_1+2\pi \epsilon i, \dots, \theta_k+2\pi \epsilon i)}S_k(\theta_1,\dots, \theta_k) \ge 0,
 \end{gather*}
because $S_k$ is uniformly continuous in a small neighborhood of~$\RR^n$ as we assume~(S\ref{ax:regularity}) and $\overline{S_k(\pmb{\theta})} S_k(\pmb{\theta}) = 1$.

First consider $\Delta_1^\epsilon\otimes\cdots\otimes \1 + \cdots + c\1\otimes\cdots\otimes \Delta_1^\epsilon\otimes\cdots \otimes \1 + \cdots + c\1\otimes\cdots\otimes \1\otimes \Delta_1^\epsilon$, which is self-adjoint, where $c \ge 0$.

Next,
\begin{gather*}
\Delta_1^\epsilon\otimes\cdots\otimes \1 + \cdots + c\1\otimes\cdots\otimes \1\otimes \Delta_1^\epsilon + \sum_k M_{S_k}(\1\otimes \cdots \otimes \1\otimes \underset{k\mathrm{-th}}{\Delta_1^\epsilon}\otimes \1\cdots\otimes \1)M_{S_k}^*
\end{gather*}
is self-adjoint if $c$ is suf\/f\/iciently large by Kato--Rellich perturbation, as in Proposition~\ref{pr:sa-2-k}.

Now in order to prove that $\sum_k M_{S_k}(\1\otimes \cdots \otimes \1\otimes \underset{k\mathrm{-th}}{\Delta_1^\epsilon}\otimes \1\cdots\otimes \1)M_{S_k}^*$ is essentially self-adjoint and to apply Lemma~\ref{lm:konrady}, it is enough to check that the real part of the following is positive:
 \begin{gather*}
 \Big\<(\1\otimes \cdots \otimes\underset{j\mathrm{-th}}{\Delta_1^\epsilon}\otimes \cdots \otimes \1)\Psi,
 M_{S_k}(\1\otimes \cdots \otimes\underset{k\mathrm{-th}}{\Delta_1^\epsilon}\otimes \cdots \otimes \1)M_{S_k}^*\Psi\Big\>.
 \end{gather*}
 If $j > k$, the operators in the product commute and it is positive. If $j = k$, it reduces to
 \begin{gather*}
 \Big\<(\1\otimes \cdots \otimes\underset{k\mathrm{-th}}{\Delta_1^\epsilon}\otimes \cdots \otimes \1)\Psi,
 M_{S_k\overline{S_k(\cdot\, + 2\pi \epsilon i)}}(\1\otimes \cdots \otimes\underset{k\mathrm{-th}}{\Delta_1^\epsilon}\otimes \cdots \otimes \1)\Psi\Big\>,
 \end{gather*}
 and by the choice of $\epsilon$, this is positive. If $j < k$, we get
 \begin{gather*}
 \Big\<(\1\otimes \cdots \otimes\underset{j\mathrm{-th}}{\Delta_1^\epsilon}\otimes \cdots \otimes \1)M_{\check S_{k,j}}^*\Psi,\;
 M_{S_{k,j}\overline{S_{k,j}(\cdot\, + 2\pi \epsilon i)}}(\1\otimes \cdots \otimes\underset{k\mathrm{-th}}{\Delta_1^\epsilon}\otimes \cdots \otimes \1)M_{\check S_{k,j}}^*\Psi\Big\> \\
 \qquad{}= \Big\<\Delta_{1,k,j}^{\frac\epsilon 2} M_{\check S_{k,j}}^*\Psi,\;
 M_{S_{k,j}(\cdot\, + 2\pi \epsilon i)\overline{S_{k,j}(\,\cdot)}}\Delta_{1,k,j}^{\frac \epsilon 2}M_{\check S_{k,j}}^*\Psi\Big\>,
 \end{gather*}
where $\Delta_{1,k,j} = \1\otimes \cdots \otimes\underset{j\mathrm{-th}}{\Delta_1}\otimes \cdots \otimes\underset{k\mathrm{-th}}{\Delta_1}\cdots \otimes \1$, which is positive again by the choice of $\epsilon$, where we introduced $S_{k,j}(\underline \theta) = S(\theta_k-\theta_j)$ and $\check S_{k,j}(\underline \theta) = S_k(\underline \theta) S(\theta_k-\theta_j)^{-1}$.

The closedness for suf\/f\/iciently small $\epsilon$ follows by arguing as in Proposition~\ref{pr:n-sum-closed}.
\end{proof}

The method of this proof is invalid for $\epsilon = \frac16$, or any f\/ixed $\epsilon$. Indeed, as $k$ increases, $\re S_k(\pmb{\theta} + \frac\pi 6)S_k(-\pmb{\theta})$ may take negative values and the Konrady's trick fails.

In order to have the right domain of self-adjointness of $P_n\big(\shiftone\otimes\1\cdots\otimes\1\big)P_n$, we may have to correctly incorporate the zeros and poles of $S$ to the domain (cf.~\cite{Tanimoto15-1}).

\section{Outlook}
In order to complete the construction of Borchers triples, we have to prove that $\chi_n(\xi) + \chi'_n(\eta)$ is (essentially) self-adjoint.
We also hope to clarify the role played by the assumption~(S\ref{ax:re}), and to drop it if possible. With the same assumption, one should be able to prove the Bisognano--Wichmann property and modular nuclearity~\cite{BL04, Lechner08} for an inclusion with a minimal distance~\cite{Alazzawi14},
hence it should be possible to construct Haag--Kastler nets with minimal size, which will be published elsewhere \cite{CT15-3}.

On the other hand, if the strong commutativity of our candidates fails, then the construction of the Haag--Kastler nets for a given $S$-matrix with poles will be really a hard problem: if the net exists at all, the polarization-free generators are canonically constructed \cite{BBS01}, while we checked that our $\fct(f)$'s are formally compatible with the form factor program \cite[Section~4.1]{CT15-1}. If our candidates are not the right polarization-free generators, the formal computations must fail, which means that the right polarization-free generators should have even subtler domains. A~related problem is whether the $S$-matrix is a complete invariant for asymptotically complete nets. This is in general open even for the simplest case where $S=1$, if the temperateness of the polarization-free generators is not assumed~\cite{Mund12}.

\appendix
\section[Properties of the $S$-matrix]{Properties of the $\boldsymbol{S}$-matrix}\label{existenceS}
Here we show the existence of two-particle $S$-matrix which satisf\/ies some additional properties. Recall that the most general $S$-matrix satisfying (S\ref{ax:unitarity})--(S\ref{ax:atzero}) and (S\ref{ax:regularity}) takes the form\footnote{In \cite[Appendix A]{CT15-1}, we used a parameter $B = \frac12 + \frac{3\e}{\pi i}$, $0 < B < 1$, and had the equivalent expression $S(\theta) = \frac{\tanh\frac12\left(\theta + \frac{2\pi i}3\right)}{\tanh\frac12\left(\theta - \frac{2\pi i}3\right)} \cdot \frac{\tanh\frac12\left(\theta + \frac{(B-2)\pi i}3\right)}{\tanh\frac12\left(\theta - \frac{(B-2)\pi i}3\right)} \frac{\tanh\frac12\left(\theta - \frac{B\pi i}3\right)}{\tanh\frac12\left(\theta + \frac{B\pi i}3\right)} \cdot S_{\mathrm{Blaschke}}(\theta)$.}
\begin{gather*}
 S(\theta) = \frac{\tanh\frac12\left(\theta + \frac{2\pi i}3\right)}{\tanh\frac12\left(\theta - \frac{2\pi i}3\right)} \cdot
\frac{\tanh\frac12\left(\theta - \frac{\pi i}2-i\e\right)}{\tanh\frac12\left(\theta - \frac{\pi i}2+i\e\right)}
 \frac{\tanh\frac12\left(\theta - \frac{\pi i}6-i\e\right)}{\tanh\frac12\left(\theta + \frac{\pi i}6+i\e\right)}
 \cdot S_{\mathrm{Blaschke}}(\theta),
\end{gather*}
where $-\frac\pi6 < \e < \frac\pi 6$ and $S_{\mathrm{Blaschke}}(\theta)$ is a f\/inite Blaschke product which satisf\/ies the conditions of \cite[Appendix~A]{CT15-1}\footnote{A Blaschke product is a bounded analytic function on $\RR + i(0,\pi)$ which can be written
as a possibly inf\/inite product of functions $c_\a\frac{e^\theta + \a}{e^\theta - \bar \a}$, where $|c_\a| = 1$
\cite[Theorem~15.21]{Rudin87}, \cite[Appendix~A]{LW11}. In particular, such a Blaschke product has modulus $1$ on $\RR$ and $\RR + i\pi$.}.

\subsection{Examples with property (S\ref{ax:re})}
Let us consider the simplest case where $S_{\mathrm{Blaschke}}(\theta) = 1$. We show that for $-\frac\pi6 < \e < \frac\pi 6$ the property~(S\ref{ax:re}) holds.

We set
\begin{gather}\label{eq:se}
 S(\theta) = \underline{S}(\theta)\cdot S_\e(\theta), \\
 \underline{S}(\theta) = \frac{\tanh\frac12\left(\theta + \frac{2\pi i}3\right)}{\tanh\frac12\left(\theta - \frac{2\pi i}3\right)}
 = -\frac{\sinh\frac12\left(\theta + \frac{\pi i}3\right)\sinh\frac12\left(\theta + \frac{2\pi i}3\right)}{\sinh\frac12\left(\theta - \frac{\pi i}3\right)\sinh\frac12\left(\theta - \frac{2\pi i}3\right)}, \nonumber\\
 S_\e(\theta)
 = \frac{\tanh\frac12\left(\theta - \frac{\pi i}2-i\e\right)}{\tanh\frac12\left(\theta - \frac{\pi i}2+i\e\right)}
 \frac{\tanh\frac12\left(\theta - \frac{\pi i}6-i\e\right)}{\tanh\frac12\left(\theta + \frac{\pi i}6+i\e\right)}\nonumber\\
\hphantom{S_\e(\theta)}{}
=\frac{\sinh\frac12\!\left(\theta - \frac{\pi i}6 - i\e\right)\sinh\frac12\!\left(\theta - \frac{3\pi i}6+ i\e\right)\sinh\frac12\!\left(\theta - \frac{3\pi i}6-i\e\right)\sinh\frac12\left(\theta - \frac{5\pi i}6+i\e\right)}{\sinh\frac12\!\left(\theta + \frac{\pi i}6+i\e\right)\sinh\frac12\left(\theta + \frac{3\pi i}6-i\e\right)\sinh\frac12\!\left(\theta + \frac{3\pi i}6+i\e\right)\sinh\frac12\!\left(\theta + \frac{5\pi i}6 -i\e\right)}.\nonumber
\end{gather}

We use freely $\sinh\a \cdot \sinh\b = \frac{\cosh(\a + \b) - \cosh(\a - \b)}2$, in particular,
\begin{gather*}
 \sinh\frac12(\theta +i\a) \cdot \sinh\frac12(\theta + i\b) = \frac{\cosh\big(\theta + \frac{(\a + \b)i}2\big) - \cos\big(\frac{(\a-\b)}2\big)}2.
\end{gather*}

For $\underline{S}$, we have
\begin{gather*}
 \underline{S}\left(\theta+\frac{\pi i}3\right) =
 -\frac{\sinh\frac12\left(\theta + \frac{2\pi i}3\right)}{\sinh\frac12\theta} \cdot
 \frac{\sinh\frac12\left(\theta + \pi i\right)}{\sinh\frac12\left(\theta - \frac{\pi i}3\right)} \\
 \hphantom{\underline{S}\left(\theta+\frac{\pi i}3\right)}{} = -
 \frac{\sinh\frac12\left(\theta + \frac{2\pi i}3\right)}{\sinh\frac12\left(\theta - \frac{\pi i}3\right)} \cdot
 \frac{i\cosh\frac12\theta}{\sinh\frac12\theta} \cdot
 \frac{\sinh\frac12\left(\theta + \frac{\pi i}3\right)}{\sinh\frac12\left(\theta + \frac{\pi i}3\right)} \\
\hphantom{\underline{S}\left(\theta+\frac{\pi i}3\right)}{} = -
 \frac{\cosh\left(\theta + \frac{\pi i}2\right) - \cos\frac\pi 6}{\cosh\theta - \cos\frac{\pi}3}\cdot
 \frac{i\cosh\frac12\theta}{\sinh\frac12\theta}\\
\hphantom{\underline{S}\left(\theta+\frac{\pi i}3\right)}{}
= \frac{\sinh\theta +i \cos\frac\pi 6}{\cosh\theta - \cos\frac{\pi}3}\cdot \frac{\cosh\frac12\theta}{\sinh\frac12\theta},
\end{gather*}
and then $S_\e$:
\begin{gather*}
 S_\e\left(\theta+\frac{\pi i}3\right) \\
= \frac{\sinh\frac12\left(\theta + \frac{\pi i}6-i\e\right)
 \sinh\frac12\left(\theta - \frac{\pi i}6+i\e\right)
 \sinh\frac12\left(\theta - \frac{\pi i}6-i\e\right)
 \sinh\frac12\left(\theta - \frac{\pi i}2+i\e\right)}
 {\sinh\frac12\left(\theta + \frac{\pi i}2+i\e\right)
 \sinh\frac12\left(\theta + \frac{5\pi i}6-i\e\right)
 \sinh\frac12\left(\theta + \frac{5\pi i}6+i\e\right)
 \sinh\frac12\left(\theta + \frac{7\pi i}6-i\e\right)} \\
= \frac{\sinh\frac12\left(\theta + \frac{\pi i}6-i\e\right)
 \sinh\frac12\left(\theta - \frac{\pi i}6+i\e\right)
 \sinh\frac12\left(\theta - \frac{\pi i}6-i\e\right)
 \sinh\frac12\left(\theta - \frac{\pi i}2+i\e\right)}
 {\sinh\frac12\left(\theta + \frac{\pi i}2+i\e\right)
 \sinh\frac12\left(\theta - \frac{7\pi i}6-i\e\right)
 \sinh\frac12\left(\theta + \frac{5\pi i}6+i\e\right)
 \sinh\frac12\left(\theta - \frac{5\pi i}6-i\e\right)} \\
 = \frac{\cosh\theta - \cos\left(\frac\pi 6-\e\right)}
 {\cosh\theta - \cos\left(\frac{5\pi} 6 + \e\right)}
 \cdot
 \frac{\cosh\left(\theta - \frac{\pi i}3\right) - \cos\left(\frac\pi 6 - \e\right)}
 {\cosh\left(\theta - \frac{\pi i}3\right) - \cos\left(\frac{5\pi} 6 + \e\right)}.
\end{gather*}
The f\/irst factor in the last expression is positive. Let us look at the second factor:
\begin{gather*}
 \frac{\cosh\left(\theta - \frac{\pi i}3\right) - \cos\left(\frac\pi 6 - \e\right)}
 {\cosh\left(\theta - \frac{\pi i}3\right) - \cos\left(\frac{5\pi} 6 + \e\right)}\\
 \qquad{}
 = \frac{\left(\cosh\left(\theta - \frac{\pi i}3\right) - \cos\left(\frac\pi 6 - \e\right)\right)\left(\cosh\left(\theta + \frac{\pi i}3\right) - \cos\left(\frac{5\pi} 6 + \e\right)\right)}{\left(\cosh\left(\theta - \frac{\pi i}3\right) - \cos\left(\frac{5\pi} 6 + \e\right)\right)\left(\cosh\left(\theta + \frac{\pi i}3\right) - \cos\left(\frac{5\pi} 6 + \e\right)\right)} \\
\qquad{} = \frac{\left(\cosh\left(\theta - \frac{\pi i}3\right) - \cos\left(\frac\pi 6 - \e\right)\right)\left(\cosh\left(\theta + \frac{\pi i}3\right) - \cos\left(\frac{5\pi} 6 + \e\right)\right)}{\frac{\cosh 2\theta}2 + \frac14 - \cos\left(\frac{5\pi} 6 + \e\right)\cosh \theta + \cos\left(\frac{5\pi} 6 + \e\right)^2}
\end{gather*}
and the denominator is positive, as it is the square of the modulus of a complex number. Using $\cosh(\a + i\b) = \cosh\a\cos\b + i\sinh\a\sin\b$ and $\cos\big(\frac{5\pi} 6+\e\big) = - \cos\big(\frac{\pi} 6-\e\big)$, we rewrite the numerator:
\begin{gather*}
 \left(\cosh\left(\theta - \frac{\pi i}3\right) - \cos\left(\frac\pi 6-\e\right)\right)
 \left(\cosh\left(\theta + \frac{\pi i}3\right) - \cos\left(\frac{5\pi} 6+\e\right)\right) \\
\qquad{} = \frac{\cosh 2\theta + \cos\frac{2\pi}3}2 - \cos\left(\frac\pi 6-\e\right)^2 - i2\cos\left(\frac\pi 6-\e\right)\cdot \sinh\theta \cdot\sin\frac\pi 3 \\
\qquad{} = \frac{\cosh 2\theta}2 - i\sqrt 3\cos\left(\frac\pi 6-\e\right)\cdot \sinh\theta - \frac14 - \cos\left(\frac\pi 6-\e\right)^2.
\end{gather*}
The full expression is now:
\begin{gather*}
 S\left(\theta + \frac{\pi i}3\right)
 = \frac{\sinh\theta +i \cos\frac\pi 6}{\cosh\theta - \cos\frac{\pi}3} \cdot\frac{\cosh\frac12\theta}{\sinh\frac12\theta}
 \times\frac{\cosh\theta - \cos\left(\frac\pi 6-\e\right)}
 {\cosh\theta - \cos\left(\frac{5\pi} 6 + \e\right)} \\
\hphantom{S\left(\theta + \frac{\pi i}3\right)=}{} \times
 \frac{\frac{\cosh 2\theta}2 - i\sqrt 3\cos\left(\frac\pi 6-\e\right)\cdot \sinh\theta - \frac14 - \cos\left(\frac\pi 6-\e\right)^2}
 {\frac{\cosh 2\theta}2 + \frac14 - \cos\left(\frac{5\pi} 6 + \e\right)\cosh \theta + \cos\left(\frac{5\pi} 6 + \e\right)^2},
\end{gather*}

To check that $\re S(\theta) \ge 0$, we see that it is proportional (i.e., up to a positive number) to
\begin{gather*}
\re \left[\frac{\sinh\theta + i\cos\frac\pi 6}{\sinh\frac12\theta}\left(\frac{\cosh 2\theta}2 - i\sqrt 3\cos\left(\frac\pi 6-\e\right)\cdot \sinh\theta - \frac14 - \cos\left(\frac\pi 6-\e\right)^2\right)\right] \\
\qquad{} = \frac{\sinh\theta}{\sinh\frac12\theta}\left(\frac{\cosh 2\theta}2 - \frac14 - \cos\left(\frac\pi 6-\e\right)^2 + \frac{3}2\cos\left(\frac\pi 6-\e\right)\right),
\end{gather*}
which is positive for $- \frac\pi 6 < \e < \frac\pi 6$.

The minimal $S$-matrix corresponds to the Bullough--Dodd model \cite{CT15-1}. We expect that it holds also with some Blaschke product. Yet, it is clear that some Blaschke products violate the condition. Indeed, if a Blaschke product takes a value with nontrivial imaginary part at $\theta + \frac\pi 3$, then the product of its $n$-th power and the minimal $S$ may have the negative real part, then the product $S$-matrix violates the condition.

\subsection{Property (S\ref{ax:bound})}\label{bound}
We show that any $S$-matrix of the form \eqref{eq:se} satisf\/ies (S\ref{ax:bound}). It suf\/f\/ices to show it for this minimal case, as the general case follows because $\big|S_{\mathrm{Blaschke}}\big(\theta + \frac{\pi i}6\big)\big| < 1$.

Look at
\begin{gather*}
 \left|S\left(\theta + \frac{\pi i}6\right)\right|\\
\quad{}
 = \left|\frac{\sinh\frac12\left(\theta + \frac{\pi i}2\right)\sinh\frac12\left(\theta + \frac{5\pi i}6\right)}{\sinh\frac12\left(\theta - \frac{\pi i}6\right)\sinh\frac12\left(\theta - \frac{\pi i}2\right)} \right| \\
 \qquad{}\times\left|\frac{\sinh\frac12\left(\theta - i\e\right)\sinh\frac12\left(\theta - \frac{\pi i}3+ i\e\right)\sinh\frac12\left(\theta - \frac{\pi i}3-i\e\right)\sinh\frac12\left(\theta - \frac{2\pi i}3+i\e\right)}{\sinh\frac12\left(\theta + \frac{\pi i}3+i\e\right)\sinh\frac12\left(\theta + \frac{2\pi i}3-i\e\right)\sinh\frac12\left(\theta + \frac{2\pi i}3+i\e\right)\sinh\frac12\left(\theta + \pi i -i\e\right)} \right|\\
\quad{} = \left|\frac{\sinh\frac12\left(\theta + \frac{5\pi i}6\right)}{\sinh\frac12\left(\theta - \frac{\pi i}6\right)}\right| \times\left|\frac{\sinh\frac12\left(\theta - i\e\right)\sinh\frac12\left(\theta - \frac{\pi i}3+ i\e\right)}{\sinh\frac12\left(\theta + \frac{2\pi i}3+i\e\right)\sinh\frac12\left(\theta + \pi i -i\e\right)} \right| \\
\quad {}= \left|\frac{\sinh\frac12\left(\theta + \frac{5\pi i}6\right)}{\sinh\frac12\left(\theta + \pi i -i\e\right)}\right| \times\left|\frac{\sinh\frac12\left(\theta - i\e\right)}{\sinh\frac12\left(\theta - \frac{\pi i}6\right)}\right|\times\left|\frac{\sinh\frac12\left(\theta - \frac{\pi i}3+ i\e\right)}{\sinh\frac12\left(\theta + \frac{2\pi i}3+i\e\right)} \right|,
\end{gather*}
where in the last expression each factor is less than $1$, as $-\frac\pi 6 < \e < \frac\pi 6$, therefore, $\big|S\big(\theta + \frac{\pi i}6\big)\big| < 1$.

\subsection{A further inequality}\label{conditionK}
In Section~\ref{konrady} we used the inequality $\re S\big(\theta + \frac{\pi i}6\big)S(-\theta) \ge 0$. We expect that this holds for some values of~$\e$ in~\eqref{eq:se}. Indeed, we checked numerically that this holds for $\e = \frac\pi 6$.

Although this value is outside the region of our consideration (as the poles get cancelled by the zeros of the other factors), $\re S\big(\theta + \frac{\pi i}6\big)S(-\theta)$ depends uniformly on $\e$, hence for $\e$ suf\/f\/iciently close to $\frac\pi 6$ we expect that the inequality should hold.

\section{Weakly but not strongly commuting operators}\label{counterexample}
Let~$A$, $B$ be self-adjoint operators with domains $\dom(A)$, $\dom(B)$ with the dense intersection $\dom(A)\cap\dom(B)$. It is well-known that, even if~$A$ and~$B$ weakly commute, namely it holds that $\<A\Psi_1, B\Psi_2\> = \<B\Psi_1, A\Psi_2\>$ for $\Psi_k \in \dom(A)\cap\dom(B)$, it does not imply that the spectral projections of~$A$ and~$B$ commute.

A classical (counter)example is due to Nelson \cite[Section~10]{Nelson59}: one takes a Riemann surface with a nontrivial topology and considers it as a real manifold and the $L^2$-space on it. If one chooses two dif\/ferent translations, their generators on the $L^2$-space commute weakly (actually as operators on a common core), as the derivations in these directions commute. However, the two translations do not commute, as the Riemann surface
has a nontrivial topology.

The example above shows clearly why two operators weakly commute but not strongly: weak commutation is a local property, while strong commutation is global. The question of strong commutativity is not just technical and, if it fails, there is often a~good reason for it. See~\cite{Schmuedgen84}, \cite[Example~5.5, Exercise~6.16 ]{Schmuedgen12} for some other examples. We present here two more families of counterexamples.

\subsection{From canonical commutation relation}\label{cef}
This example is essentially due to Faddeev and Volkov \cite{FV08}. Consider a CCR pair $X = M_{\id}$, where $\id(t) = t$ is the identity function and $P = -i\frac{d}{dt}$ on $L^2(\RR)$. Note that $(e^{s_1X}\xi)(t) = e^{s_1t}\xi(t)$, $(e^{s_2P}\xi)(t) = \xi(t-is_2)$ on suitable domains. If we take $s_2 = \frac{2\pi}{s_1}$, then $(e^{s_1X}e^{s_2P}\xi)(t) = (e^{s_2P}e^{s_1X}\xi)(t) = e^{s_1t}\xi(t-\frac{2\pi}{s_2})$. It is not dif\/f\/icult to f\/ind a common dense domain of~$e^{s_1X}$ and~$e^{s_2P}$, and they weakly commute. Yet they do not strongly commute. Indeed we know that $\{e^{s_1X}, e^{s_2P}\}'' = \B(L^2(\RR))$, while strong commutativity would imply $\{e^{s_1X}, e^{s_2P}\}''$ to be abelian, which is not true.

\subsection{From bound state operators}\label{ceb}
Take two Blaschke products $f_1$, $f_2$ on $\RR + i(-\pi,0)$ with the symmetry condition $f_j(\theta - \pi i) = \overline{f_j(\theta)}$. They extend meromorphically to $\CC$ and satisfy $f_j(\theta - 2\pi i) = f_j(\theta)$. Form $A_1 := M_{\overline {f_1}}\Delta M_{f_1}$, $A_2 := M_{\overline {f_2}}\Delta M_{f_2}$, where $(\Delta\xi)(\theta) = \xi(\theta - 2\pi i)$. They are manifestly self-adjoint and, by a similar consideration as \cite[Section~5.2]{Tanimoto15-1}, one can conclude that their domains are determined by the zeros and poles of $f_j$. Especially one can see that the domains $\dom(A_1A_2)$ and $\dom(A_2A_1)$ are dense: indeed, there are analytic functions with specif\/ied zeros at the poles of $f_1$, $f_2$ in $\RR + i(-4\pi,0)$. On such a function $\xi$ it is clear that $A_1A_2\xi = A_2A_1\xi$, because on this domain it holds that $M_{f_j}\Delta\xi = \Delta M_{f_j}\xi$, as $f_j$ are $2\pi i$-periodic. Yet, $A_1$ and $A_2$ do not strongly commute. Indeed, the intersection of their domains is not a core for any of them if $f_1$ and $f_2$ have dif\/ferent zeros or poles, which also follows from the ideas of \cite[Section~3]{Tanimoto15-1}.

\section{The Driessler--Fr\"ohlich theorem}\label{DF}
First let us recall the commutator theorem of Glimm--Jaf\/fe--Nelson. The theorem roughly says the following: if $T$ is a positive self-adjoint operator, $A$ is a symmetric operator and if $A$ and $[T,A]$ can be estimated by $T$, then $A$ is essentially self-adjoint. Yet, depending on how to make such estimates, there are certain variations in the hypothesis of the theorem.

In one of such variations, one estimates $A$ and $[T,A]$ as bilinear forms, def\/ined on $\dom\big(T^\frac12\big)$ and $\dom\big(T^\frac32\big)$ (or on their cores), respectively \cite[Theorem~X.36$'$]{RSII}. When $T$ is a complicated operator, it is dif\/f\/icult to have control over $T^\frac12$ and $T^\frac32$.
Another variation requires to estimate~$A$ as an operator, namely $\|A\psi\| \le c\|T\psi\|$ for $\psi \in \dom(T)$, or for a core of~$T$.
For our situation, this latter version has a better chance to apply (see \cite[Theorem~X.37]{RSII}):
\begin{Theorem}[the commutator theorem]\label{th:commutator}
Let $T$ be a self-adjoint operator with $T \ge \1$. Suppose that $A$ is a symmetric operator defined on a core $\Dom$ of $T$ so that
 \begin{enumerate}[\rm (1)]\itemsep=0pt
 \item\label{comm1} $\|A\psi\| \le c_1\|T\psi\|$ for all $\psi \in \Dom$,
 \item\label{comm2} $|\<T\psi,A\psi\> - \<A\psi, T\psi\>| \le c_2\|T^\frac12\psi\|^2$ for all $\psi \in \Dom$.
 \end{enumerate}
Then, $A$ is essentially self-adjoint on any core of $T$.
\end{Theorem}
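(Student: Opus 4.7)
The plan is to prove essential self-adjointness via the basic criterion that $\ker(A^*\mp i)=\{0\}$ for the closure of $A$. First I would promote $A$ from $\Dom$ to all of $\dom(T)$: since $\Dom$ is a core for $T$, any $\psi\in\dom(T)$ admits a sequence $\psi_n\in\Dom$ with $\psi_n\to\psi$ and $T\psi_n\to T\psi$, and hypothesis~(\ref{comm1}) makes $\{A\psi_n\}$ Cauchy, yielding a well-defined symmetric extension on $\dom(T)$ satisfying the same bound. The same approximation shows that any core of $T$ is automatically a core for this extension, so it suffices to establish essential self-adjointness on $\dom(T)$.

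The heart of the argument is a commutator estimate for the resolvent-type regularizers $R_n:=n(n+T)^{-1}$. These are bounded, self-adjoint, commute with $T$, map $\H$ into $\dom(T)$, and $R_n\to\1$ strongly. Setting $G:=R_n/n=(n+T)^{-1}$ and sandwiching between factors of $n+T$ one derives the formal identity $[\bar A,R_n]=-nG[\bar A,T]G$. By polarization, hypothesis~(\ref{comm2}) bilinearizes to $|\<[\bar A,T]\psi,\phi\>|\le c_2\|T^\frac12\psi\|\cdot\|T^\frac12\phi\|$ on $\Dom\times\Dom$, so $[\bar A,T]$ extends to a bounded sesquilinear form on $\dom(T^\frac12)$. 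Since $\|T^\frac12 G\|^2\le1/(4n)$ by spectral calculus, this yields $\|[\bar A,R_n]\|\le c_2/4$ uniformly in $n$, and for $\chi\in\dom(T^\frac12)$ the sharper pointwise bound $\|T^\frac12 G\chi\|\le\|T^\frac12\chi\|/n$ gives $\|[\bar A,R_n]\chi\|\to0$; by density and uniform boundedness, $[\bar A,R_n]\chi\to0$ strongly for every $\chi\in\H$.

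To conclude, let $\phi\in\dom(A^*)$ with $A^*\phi=\pm i\phi$, and put $\phi_n:=R_n\phi\in\dom(T)\subset\dom(\bar A)$, so that $\phi_n\to\phi$. By symmetry of $\bar A$, the number $\<\bar A\phi_n,\phi_n\>$ is real. On the other hand,
\begin{gather*}
 \<\bar A\phi_n,\phi_n\>=\<R_n\bar A R_n\phi,\phi\>=\<\bar A R_n^2\phi,\phi\>+\<[R_n,\bar A]R_n\phi,\phi\>,
\end{gather*}
and since $R_n^2\phi\in\dom(\bar A)$ pairs with $\phi\in\dom(\bar A^*)=\dom(A^*)$ to give $\<\bar A R_n^2\phi,\phi\>=\<R_n^2\phi,A^*\phi\>=\pm i\<R_n^2\phi,\phi\>\to\pm i\|\phi\|^2$, while the commutator term vanishes by the estimate above, a purely imaginary limit of real numbers forces $\phi=0$, whence essential self-adjointness.

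The main obstacle is the commutator estimate: translating the form bound~(\ref{comm2}), initially stated only on the diagonal over $\Dom$, into a strong-operator statement requires checking rigorously that $[\bar A,T]$ extends to a bounded sesquilinear form on $\dom(T^\frac12)\times\dom(T^\frac12)$ and that the formal identity $[\bar A,R_n]=-nG[\bar A,T]G$ survives this extension, since a priori $R_n\H\subset\dom(T)$ but one needs to symmetrically split factors of $T^\frac12$ across $R_n$. Once that bookkeeping is settled, the rest is the standard deficiency-index chase above.
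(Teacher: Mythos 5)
This theorem is not proved in the paper at all: it is imported verbatim from Reed--Simon \cite[Theorem~X.37]{RSII} and used as a black box, so there is no internal proof to compare yours against. Your reconstruction is the standard regularization/deficiency-vector argument, and it is correct; in particular the step you flag as the main obstacle really is only bookkeeping. Polarization holds for an arbitrary sesquilinear form, so hypothesis~(\ref{comm2}) together with the scaling $\psi\mapsto\lambda\psi$, $\varphi\mapsto\lambda^{-1}\varphi$ gives $|\langle T\psi,A\varphi\rangle-\langle A\psi,T\varphi\rangle|\le 2c_2\|T^{\frac12}\psi\|\cdot\|T^{\frac12}\varphi\|$ on $\Dom\times\Dom$; this bound, hypothesis~(\ref{comm1}) and symmetry all pass from $\Dom$ to $\dom(T)$ because $\Dom$ is a core of $T$ and $\|T^{\frac12}\chi\|^2\le\|\chi\|\cdot\|T\chi\|$. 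The sandwich identity then needs no splitting of $T^{\frac12}$ across $R_n$: for $\chi,\chi'\in\dom(T)$ write $\chi=R_n\chi+n^{-1}TR_n\chi$ (and likewise $\chi'$), note $TR_n\chi=R_nT\chi\in\dom(T)\subset\dom(\bar A)$, and use the symmetry of $\bar A$ to obtain
\begin{gather*}
\langle \chi',[\bar A,R_n]\chi\rangle=-n\left(\langle \bar A G\chi', TG\chi\rangle-\langle TG\chi',\bar A G\chi\rangle\right),
\end{gather*}
from which your uniform bound and the decay $\|[\bar A,R_n]\chi\|\le c_2\|T^{\frac12}\chi\|/\sqrt n$ for $\chi\in\dom\big(T^{\frac12}\big)$ follow exactly as you describe (the constants come out as $c_2/2$ and $c_2/\sqrt n$ rather than $c_2/4$, which is immaterial). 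Two details worth writing out explicitly: since your extension $\bar A$ on $\dom(T)$ is a restriction of the closure of $A|_\Dom$, one has $(\bar A)^*=A^*$, which both justifies $\langle\bar A R_n^2\varphi,\varphi\rangle=\langle R_n^2\varphi,A^*\varphi\rangle$ and shows that every core of $T$ produces the same closure, giving the ``any core'' clause; and in the last step the commutator acts on the $n$-dependent vector $R_n\varphi$, so one should split $[\bar A,R_n]R_n\varphi=[\bar A,R_n](R_n\varphi-\varphi)+[\bar A,R_n]\varphi$, using the uniform norm bound on the first summand and strong convergence on the second---which is precisely what your density-plus-uniform-boundedness remark delivers.
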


The Driessler--Fr\"ohlich theorem provides a suf\/f\/icient condition for two symmetric opera\-tors~$A$,~$B$ to strongly commute. As the hypothesis, one should have a positive self-adjoint operator~$T$ which can estimate $A$, $B$, $[T,A]$, $[T,B]$. Again, there can be two variations: estimates either as bilinear forms or operators. The original proof \cite[Theorem 3.1]{DF77} is written for bilinear forms. In our application, the operator $T$ is complicated and we need a variation.

Furthermore, let us note the following subtle point. The original paper assumes certain estimates of $A$, $B$, $[T,A]$, $[T,B]$, $[T,[T,A]]$, $[T,[T,B]]$ ``on some form core of $T$'', which does not seem to suf\/f\/ice: a form core of $T$ might not be a form core of~$T^\frac32$ (see also \cite[remark after Theorem~X.37]{RSII}).

Although the essential idea is the same as the original~\cite{DF77}, we here present the adapted proof for the sake of clarity. Note that the domain of the weak commutativity is also modif\/ied: the original proof requires only the weak commutativity on a core of $T^2$, while here we need it on a~core of~$T$. By the same reason as explained in \cite[remark after Theorem~X.37]{RSII}, the results might fail by a tiny change in assumptions, therefore, one should not underestimate the importance of these careful examinations.

\begin{Theorem}[Driessler--Fr\"ohlich]\label{th:DF}
Let $T\ge \1$ be a positive self-adjoint operator, $A$ and $B$ symmetric operator on a core $\Dom$ of $T$. Assume that there are positive numbers $c_1$, $c_2$, $c_3$ such that
 \begin{enumerate}[\rm (1)]\itemsep=0pt
 \item\label{df1} $\|A\psi\| \le c_1\|T\psi\|$ and $\|B\psi\| \le c_1\|T\psi\|$ for all $\psi \in \Dom$,
 \item\label{df2} for all $\varphi,\psi \in \Dom$,
 \begin{itemize}\itemsep=0pt
 \item $|\<T\psi,A\varphi\> - \<A\psi, T\varphi\>| \le c_2\big\|T^\frac12\psi\big\|\cdot \big\|T^\frac12\varphi\big\|$,
 \item $|\<T\psi,B\varphi\> - \<B\psi, T\varphi\>| \le c_2\big\|T^\frac12\psi\big\|\cdot \big\|T^\frac12\varphi\big\|$,
 \end{itemize}
 \item\label{df3} for all $\varphi,\psi \in \Dom$,
 \begin{itemize}\itemsep=0pt
 \item $|\<T\psi,A\varphi\> - \<A\psi, T\varphi\>| \le c_3 \|\psi\|\cdot \|T\varphi\|$,
 \item $|\<T\psi,B\varphi\> - \<B\psi, T\varphi\>| \le c_3 \|\psi\|\cdot \|T\varphi\|$,
 \end{itemize}
 \item\label{df4} $\<A\psi, B\varphi\> = \<B\psi, A\varphi\>$ for all $\psi, \varphi \in \Dom$.
 \end{enumerate}
 Then $A$, $B$ are essentially self-adjoint on any core of $T$ and they strongly commute.
\end{Theorem}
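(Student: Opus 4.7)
My plan is to first extract essential self-adjointness of $A$ and $B$ from the commutator theorem, and then to upgrade weak commutativity~(\ref{df4}) to strong commutativity by means of a resolvent regularization combined with a Gronwall-type domain-invariance estimate for the associated unitary groups.

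The first step is to apply Theorem~\ref{th:commutator} separately to $A$ and to $B$: hypothesis~(\ref{df1}) supplies the operator bound $\|A\psi\|\le c_1\|T\psi\|$, while hypothesis~(\ref{df2}) supplies the form commutator bound on $[T,A]$. This yields essential self-adjointness on any core of~$T$; denote the closures again by $A,B$ and set $U(s):=e^{isA}$, $V(t):=e^{itB}$.

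The second and crucial step is to show that $U(s)$ leaves $\dom(T)$ invariant with a controlled bound $\|TU(s)\psi\|\le e^{c|s|}\|T\psi\|$ for $\psi\in\dom(T)$, and likewise for $V(t)$. Since $T$ is unbounded, direct differentiation of $\|TU(s)\psi\|^2$ is not licit, so I would introduce the bounded truncation $T_\lambda := T(1+\lambda^{-1}T)^{-1}$, differentiate $\|T_\lambda U(s)\psi\|^2$ (which is legitimate as $T_\lambda$ is bounded), rewrite the derivative in terms of the form commutator $\<T_\lambda^2\phi,A\phi\>-\<A\phi,T_\lambda^2\phi\>$ with $\phi=U(s)\psi$, and bound it using hypotheses~(\ref{df2}) and~(\ref{df3}) uniformly in $\lambda$. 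This yields a differential inequality of the shape $\frac{d}{ds}\|T_\lambda U(s)\psi\|^2 \le C\|T_\lambda U(s)\psi\|^2$, and Gronwall followed by $\lambda\to\infty$ gives the desired invariance together with the exponential bound. The interplay of the two commutator estimates is essential: (\ref{df3}) is what controls the commutator of $A$ with $T_\lambda$ in an $\lambda$-uniform way that survives the limit, while~(\ref{df2}) feeds the Gronwall loop.

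With $U(s)$ and $V(t)$ both preserving $\dom(T)$, the third step is to set
\begin{gather*}
F(s,t):=\<U(-s)\phi,\, V(t)\,U(s)\psi\>
\end{gather*}
for $\phi,\psi$ in a dense, $U(s)$- and $V(t)$-invariant subset of $\dom(T)$, differentiate in $s$, and read off $\partial_s F\equiv 0$ from weak commutativity~(\ref{df4}), which extends by continuity from $\Dom$ to $\dom(A)\cap\dom(B)\supset\dom(T)$ thanks to hypothesis~(\ref{df1}). Then $F(s,t)=F(0,t)=\<\phi,V(t)\psi\>$, giving $U(s)V(t)=V(t)U(s)$ on a dense set and, by unitarity, on all of $\H$; this is the strong commutativity. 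The main obstacle is step two: establishing the invariance of $\dom(T)$ under $U(s)$ from only operator-type information on $A$ is delicate, and it is precisely here that both~(\ref{df2}) and~(\ref{df3}) become indispensable, in contrast to the classical bilinear-form version of the theorem in which~(\ref{df2})-type estimates (together with a form bound on $[T,[T,A]]$) would suffice.
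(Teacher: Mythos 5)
Your step 1 coincides with the paper's (commutator theorem). The decisive problem is step 3: it is circular. (As written, $F(s,t)=\<U(-s)\phi,V(t)U(s)\psi\>=\<\phi,U(s)V(t)U(s)\psi\>$ is not even constant when $B=A$; I read it as $\<U(s)\phi,V(t)U(s)\psi\>$.) With $u=U(s)\phi$, $v=U(s)\psi$ one finds $\partial_sF = i\big[\<u,V(t)\bar Av\> - \<\bar Au,V(t)v\>\big]$, i.e.\ the quantity you need to vanish is the weak commutator of $\bar A$ with the \emph{unitary} $V(t)=e^{it\bar B}$, not with $B$. Hypothesis (\ref{df4}) only controls $\<\bar Au,\bar Bv\>-\<\bar Bu,\bar Av\>$ on $\dom(T)$, and passing from weak commutation with $\bar B$ to commutation with $e^{it\bar B}$ is precisely the content of the theorem; asserting $\partial_sF\equiv0$ from (\ref{df4}) presupposes the conclusion. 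That this cannot be finessed is shown by Nelson's counterexample (recalled in Appendix~\ref{counterexample} of the paper), where two operators commute weakly on a common invariant core and yet their groups do not commute: the quantitative $T$-estimates must enter the final commutation step as well, e.g.\ via a genuine two-parameter Duhamel/Gronwall argument in $s$ and $t$ using the invariance bounds and the $T$-boundedness of $A$, $B$; your plan uses them only to obtain the invariance and then discards them.

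Step 2 also has unproven content beyond routine detail: the commutator hypotheses are given only on $\Dom$ (hence extend to $\dom(T)$), while $\phi=U(s)\psi$ is a priori only in $\dom(\bar A)$, so before any Gronwall loop you must establish a $\lambda$-uniform bound for the regularized commutator form $\<T_\lambda^2\phi,\bar A\phi\>-\<\bar A\phi,T_\lambda^2\phi\>$ on all of $\dom(\bar A)$ (e.g.\ by first proving, in the weak sense, $[T_\lambda,A]=(1+\lambda^{-1}T)^{-1}[T,A](1+\lambda^{-1}T)^{-1}$ with $[T,A]$ the operator furnished by (\ref{df3})); also, it is (\ref{df3}), not (\ref{df2}), that drives this loop. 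For comparison, the paper's proof avoids unitary groups and domain invariance altogether: from (\ref{df3}) and Riesz representation it defines $[T,B]T^{-1}$ as a bounded operator of norm $\le c_3$, deduces that $(B+z)\Dom$ is a core of $T$ --- hence of $A$ --- for $|\im z|>c_3$, so that $(A+z_1)(B+z_2)\Dom$ is dense, and then a single computation using (\ref{df4}) shows that $R_A(z_1)R_B(z_2)$ and $R_B(z_2)R_A(z_1)$ agree on this dense set, whence the resolvents commute and strong commutativity follows. If you want to keep your group-theoretic route, the missing piece is an argument of that quantitative kind replacing the appeal to (\ref{df4}) in step 3.
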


\begin{proof}
Note that all the assumptions hold on $\dom(T)$. Indeed, for any pair $\psi, \varphi \in \dom(T)$, there are $\{\psi_n\}, \{\varphi_n\} \subset \Dom$ such that $\psi_n \to \psi$, $T\psi_n \to T\psi$, therefore, $T^\frac12\psi_n \to T^\frac12\psi$ and $\varphi_n \to \varphi$, $T\varphi_n \to T\varphi$, therefore, $T^\frac12\varphi_n \to T^\frac12\varphi$. From this,~$A$ and $B$ can be naturally extended to $\dom(T)$ and all the statements and the weak commutativity hold there.

Now, the claim that $A$ and $B$ are essentially self-adjoint of any core of $T$ follows from the assumptions (\ref{df1}), (\ref{df2}) and Theorem~\ref{th:commutator}.

Next, we observe that, from assumption (\ref{df3}), the commutators $[T,A]$, $[T,B]$ are well-def\/ined on $\dom(T)$: for a f\/ixed $\varphi \in \dom(T)$, $\<T\psi,B\varphi\> - \<B\psi, T\varphi\>$ is an antilinear form in $\psi$ and bounded by $c_3 \|\psi\|\cdot \|T\varphi\|$. By Riesz' representation theorem, there is a unique vector $\eta$ such that $\<T\psi,B\varphi\> - \<B\psi, T\varphi\> = \<\psi, \eta\>$ and $\|\eta\| \le c_3 \|T\varphi\|$. We denote the correspondence $\varphi \mapsto \eta$ by $[T,B]$. It follows by this def\/inition that $\|[T,B]T^{-1}\| \le c_3$.

Let us show next that $(B+z)T^{-1}\Dom$ (which is well-def\/ined since $B$ is extended to $\dom(T)$) is in the domain of $T$ and $T(B+z)T^{-1}\Dom$ is dense in $\H$, where $z$ is a complex number such that $|\im z| > c_3$. For $\psi \in \dom(T)$ and $\varphi \in T^{-1}\Dom$, the following is meaningful:
 \begin{gather*}
 \<T\psi, (B+z)\varphi\> = \<(B+\bar z)\psi, T\varphi\> + (\<T\psi,B\varphi\> - \<B\psi, T\varphi\>).
 \end{gather*}
As $\varphi \in T^{-1}\Dom$, $T\varphi \in \Dom \subset \dom(T) \subset \dom(B)$
 and we have
 \begin{gather*}
 \<(B+\bar z)\psi, T\varphi\> = \<\psi, (B+z)T\varphi\>,
 \end{gather*}
which implies that $|\<(B+\bar z)\psi, T\varphi\>| \le \|\psi\|\cdot \|(B+z)T\varphi\|$. On the other hand, by assumption~(\ref{df3}) we have $|\<T\psi,B\varphi\> - \<B\psi, T\varphi\>| \le c_3 \|\psi\|\cdot \|T\varphi\|$. Altogether, we have the following estimate
 \begin{gather*}
 |\<T\psi, (B+z)\varphi\>| \le \|\psi\|\cdot(\|(B+z)T\varphi\| + c_3 \|T\varphi\|)
 \end{gather*}
 for any $\psi \in \dom(T)$, and therefore, $(B+z)\varphi \in \dom(T^*) = \dom(T)$, and
 \begin{gather*}
 \<\psi, T(B+z)\varphi\> = \<\psi, (B+z)T\varphi\> + \<T\psi,B\varphi\> - \<B\psi, T\varphi\>,
 \end{gather*}
 or equivalently by using $[T,B]$ def\/ined as in the previous paragraph,
\begin{gather*}
 T(B+z)\varphi = ((B+z)T + [T,B])\varphi = \big((B+z) + [T,B]T^{-1}\big)T\varphi.
\end{gather*}

Now, let us assume that $\<\psi, T(B+z)\varphi\> = 0$ for arbitrary $\varphi \in T^{-1}\Dom$. By the above expression of~$T(B+z)\varphi$, this is equivalent to
 \begin{gather*}
 0 = \<\psi, ((B+z)T + [T,B])\varphi\> = \big\<\psi, \big((B+z) + [T,B]T^{-1}\big)\tilde\varphi\big\>,
 \end{gather*}
where $\tilde\varphi \in \Dom$ is arbitrary. As $\Dom$ is a core of $T$, it is a core of $B$ as well. Since $B$ is essentially self-adjoint on $\Dom$ and $[T,B]T^{-1}$ is bounded, it follows that $(B+\bar z + ([T,B]T^{-1})^*)\psi = 0$. If $|\im z| > c_3$, this implies that $\psi = 0$, namely, $T(B+z)\varphi$ form a dense subspace.

From this and the assumption that $T \ge \1$, it follows that $(B+z)\Dom$ is a core of $T$ \cite[Theo\-rem~X.26]{RSII}, and therefore, a core of~$A$. It follows from this that $(A+z_1)(B+z_2)\Dom$ is dense in $\H$ for $z_1\in \CC\setminus \RR$, $|\im z_2| > c_3$.

 Our goal is to show that the resolvents $R_A(z_1) = (A+z_1)^{-1}$ and $R_B(z_2)$ commute (for~$z_1$,~$z_2$ in an open set, which implies the strong commutativity. Let us take $\psi,\varphi \in \Dom$. We compute
 \begin{gather*}
 \<(A+\bar z_1)(B+\bar z_2)\psi, R_A(z_1)R_B(z_2)(A+z_1)(B+z_2)\varphi\> \\
 \qquad{} = \<(A+\bar z_1)\psi, (B+z_2)\varphi\> \\
 \qquad{} = \<(B+\bar z_2)\psi, (A+z_1)\varphi\> \\
\qquad {}= \<(A+\bar z_1)(B+\bar z_2)\psi, \;R_B(z_2)R_A(z_1)(A+z_1)(B+z_2)\varphi\>,
 \end{gather*}
 where the second equality follows from the weak commutativity. We saw that the sets $(A+\bar z_1)(B+\bar z_2)\Dom$ and $(A+z_1)(B+z_2)\Dom$ are both dense in $\H$, which completes the proof.
\end{proof}

\section{Nontrivial extensions of sum operators}\label{ce-sum}
We show here that an operator which looks close to $\chi_2(\xi)$ fails to be self-adjoint.

\begin{Proposition}
Let $f$ be a bounded analytic function in $\RR + i(-\frac\pi 3, 0)$, $|f(\theta)| = 1$ and $a\in\RR$. Then $a\shiftone + M_f^*\shiftone M_f$ has nontrivial self-adjoint extensions if the function $a + \overline{f(\bar \zeta - \frac{\pi i}3)}f(\zeta)$ has zeros in the strip or has a nontrivial singular inner factor.
\end{Proposition}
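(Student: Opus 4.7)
The plan is to identify the operator $T := a\shiftone + M_f^*\shiftone M_f$ with a conjugated shift of the type constructed in Section~\ref{one}, and then to exploit non-uniqueness of the associated factorization of its coefficient function in order to exhibit two distinct self-adjoint extensions.

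First I would compute the action of $T$ on its natural domain $\dom(T) = \hardytwol \cap M_f^*\hardytwol$. Using $\overline{f(\theta)} = 1/f(\theta)$ on $\RR$, a direct calculation yields
\[
(T\Psi)(\theta) \;=\; \overline{\Phi(\theta)}\,\Psi\!\left(\theta - \frac{\pi i}3\right),
\]
where $\Phi(\zeta) := a + \overline{f(\bar\zeta - \frac{\pi i}3)}\,f(\zeta)$ is holomorphic on $\SS_{-\frac{\pi}3, 0}$ (the factor $\overline{f(\bar\zeta - \frac{\pi i}3)}$ being the analytic continuation across $\RR$ of $\theta \mapsto \overline{f(\theta - \frac{\pi i}3)}$). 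In particular $T$ is symmetric. Moreover $\Phi$ satisfies $\Phi(\sigma(\zeta)) = \overline{\Phi(\zeta)}$ for the involution $\sigma(\zeta) := \bar\zeta - \frac{\pi i}3$, so its zeros and singular inner factor in the strip are $\sigma$-symmetric: zeros appear in pairs $\{\alpha, \sigma(\alpha)\}$, and the singular measure is $\sigma$-invariant.

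Next, mimicking the construction of Section~\ref{one}, for any $g_0 \in H^\infty(\SS_{-\frac{\pi}3, 0})$ with $|g_0(\theta)| = 1$ on $\RR$ satisfying
\[
\Phi(\zeta) \;=\; g_0(\zeta)\,\overline{g_0\!\left(\bar\zeta - \frac{\pi i}3\right)},\qquad \zeta \in \SS_{-\frac{\pi}3, 0},
\]
the operator $T_{g_0} := M_{g_0}^* \shiftone M_{g_0}$ is self-adjoint on $M_{g_0}^*\hardytwol$ (being unitarily equivalent to $\shiftone$ via the unitary $M_{g_0}$), and a direct computation of its action recovers $\overline{\Phi(\theta)}\Psi(\theta - \frac{\pi i}3)$ on $\dom(T)$, so $T_{g_0}$ extends $T$. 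The third step is to combine the Beurling factorization on $\SS_{-\frac{\pi}3, 0}$ with the $\sigma$-symmetry of $\Phi$ to produce two inequivalent such factorizations: the outer part of $g_0$ is determined up to a unimodular constant by the prescribed boundary moduli $|g_0(\theta)| = 1$ on $\RR$ and $|g_0(\theta - \frac{\pi i}3)| = |\Phi(\theta)|$ on the lower edge, whereas the inner part of $\Phi$, nontrivial by hypothesis, can be distributed between $g_0$ and the reflected factor $\overline{g_0(\bar\cdot - \frac{\pi i}3)}$ in inequivalent ways. For an off-symmetric zero pair $\{\alpha, \sigma(\alpha)\}$ with $\alpha \neq \sigma(\alpha)$ one may assign the Blaschke factor for $\alpha$ either to $g_0$ or to the reflected factor, and similarly for a nontrivial singular inner measure; this yields $g_0^{(1)}, g_0^{(2)}$ whose ratio $B := g_0^{(1)}/g_0^{(2)}$ is a nontrivial inner function on $\SS_{-\frac{\pi}3, 0}$.

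Finally, I would distinguish $T_{g_0^{(1)}}$ from $T_{g_0^{(2)}}$ by comparing domains. Both have the form $M_{g_0^{(i)}}^*\hardytwol$, so the question reduces to $M_B^*\hardytwol \neq \hardytwol$; this is immediate since $M_B^*\hardytwol = \{\Psi \in L^2(\RR) \colon B\Psi \in \hardytwol\}$ strictly contains $\hardytwol$ whenever $B$ has a zero in the strip or carries a nontrivial singular inner factor (for example, a function with a single pole at a zero of $B$ lies in the left-hand side but not in $\hardytwol$). Hence $T$ admits at least two distinct self-adjoint extensions and is not essentially self-adjoint. The main obstacle I anticipate is implementing the Beurling-type factorization on the strip under the reality/$\sigma$-symmetry constraint---in particular handling zeros on the fixed line $\{\im \zeta = -\frac\pi 6\}$ (which must have even multiplicity for the factorization to exist) and constructing the outer factor of $g_0$ from the mixed boundary-modulus data, following the techniques of~\cite{Tanimoto15-1}.
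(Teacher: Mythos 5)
Your opening step is exactly the paper's: since $f\in H^\infty\big(\SS_{-\frac\pi3,0}\big)$, the natural domain is precisely $\hardytwol=\dom\big(\shiftone\big)$ and the operator is the weighted shift $M_{a+\overline{f(\cdot)}f(\cdot\,-\frac{\pi i}3)}\shiftone$; at that point the paper simply invokes the classification of extensions of such operators from \cite{Tanimoto15-1} in terms of the Blaschke and singular inner data of $\Phi(\zeta)=a+\overline{f(\bar\zeta-\frac{\pi i}3)}f(\zeta)$. Your attempt to replace that citation by an explicit construction of two conjugated extensions $M_{g_0}^*\shiftone M_{g_0}$ has a genuine gap: it does not cover the hypothesis of the Proposition. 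First, the factorization $\Phi(\zeta)=g_0(\zeta)\overline{g_0(\bar\zeta-\frac{\pi i}3)}$ with $|g_0|=1$ on $\RR$ need not exist at all. On the fixed line of $\sigma(\zeta)=\bar\zeta-\frac{\pi i}3$ one has $\Phi\big(\theta-\frac{\pi i}6\big)=a+\big|f\big(\theta-\frac{\pi i}6\big)\big|^2$, which for $a<0$ can change sign, i.e., $\Phi$ can have odd-order zeros on that line, while the right-hand side of your factorization equals $|g_0|^2\ge 0$ there. The Proposition imposes no square structure on $\Phi$ (unlike Section~\ref{one}, where $\xi=\uxi^2$ is assumed exactly to guarantee such a factorization), so these cases fall under the stated hypothesis and your construction produces nothing for them.

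Second, even when the factorization exists, the freedom you exploit is present only for zero pairs $\{\alpha,\sigma(\alpha)\}$ with $\alpha\neq\sigma(\alpha)$ or for a nontrivial singular measure. If all zeros of $\Phi$ lie on the fixed line $\im\zeta=-\frac\pi6$ (then necessarily of even order) and the singular part vanishes, the splitting of the inner factor of $\Phi$ into $I_1(\zeta)\,\overline{I_1(\sigma(\zeta))}$ is unique: a fixed-line zero of order $2m$ forces order exactly $m$ in each factor. Your method then yields a single extension of the conjugated form and does not show that the closure of $T$ is strictly smaller than it; establishing non-essential self-adjointness in this case requires the deficiency-subspace analysis of $M_\Phi$-type weighted shifts carried out in \cite{Tanimoto15-1}, which is exactly what the paper's two-line proof invokes after the same initial identification. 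A smaller, fixable point: your distinguishing argument treats $B=g_0^{(1)}/g_0^{(2)}$ as if it were inner, but it is a ratio of inner functions (zero at $\alpha$, pole at $\sigma(\alpha)$); the correct statement is that equality of the domains $M_{g_0^{(i)}}^*\hardytwol$ forces $I_1\hardytwol=I_2\hardytwol$, hence $I_1=cI_2$ by the Beurling theorem, so genuinely different inner splittings do give distinct self-adjoint extensions.
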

\begin{proof}
We have
\begin{gather*}
M_{a+\overline{f(\cdot)} f(\cdot\, - \frac{\pi i}3)} \shiftone = a\shiftone + M_f^*M_{f(\cdot\, - \frac{\pi i}3)}\shiftone = a\shiftone + M_f^*\shiftone M_f,
\end{gather*}
because the domain of the right-hand side is that of $\shiftone$. And we know what extensions the left-hand side has~\cite{Tanimoto15-1}. Namely, if the function $a + \overline{f(\bar \zeta - \frac{\pi i}3)}f(\zeta)$ has a Blaschke factor which has zeros in $\RR + i(-\frac\pi 3, 0)$, or if it has a singular inner factor, then the operator $M_{a+\overline{f(\cdot)} f(\cdot\, - \frac{\pi i}3)} \shiftone$ has nontrivial extensions.
\end{proof}

As a corollary, one can show that $\shiftone\otimes \1 + M_f\big(\1\otimes \shiftone\big)M_f^*$ is \textit{not always} self-adjoint, if $f_2(\theta_1,\theta_2) := f(\theta_1+\theta_2) = \frac{\theta_1 + \theta_2 +i\a}{\theta_1 + \theta_2 - i\a}$, where $\a < -\frac\pi 3$. Indeed, each term commutes strongly with $\halfshiftone\otimes\halfshiftione$. By considering the sum on the spectral subspace with respect to $\halfshiftone\otimes\halfshiftione$, where it can be regarded as a constant $a_0$, the sum can be reduced to an operator of the form $\frac1{a_0}\halfshiftone\otimes\halfshiftone + a_0M_f\big(\halfshiftone\otimes\halfshiftone\big) M_f^*$, which is further reduced to $a\shiftone + M_f^*\shiftone M_f$. We saw that this is not always self-adjoint, depending on $f$ and $a$.

\section[Observations on $S$-symmetric functions]{Observations on $\boldsymbol{S}$-symmetric functions}\label{S-symmetric}
For a Hilbert space $\K$, let $L^2(\RR,\K)$ denote the space of $\K$-valued $L^2$-functions, and for an open subset $\U$ of $\RR^d$, $H^2(\RR^d + i\U, \K)$ be the Hardy space of $\K$-valued functions $\Psi$ which have an analytic continuation in $\RR^d + i\U$ such that $\Psi(\,\cdot + i\pmb{\l})$, $\pmb{\l} \in \U$, belongs to $L^2(\RR^d)$ and their $L^2$-norms are uniformly bounded with respect to $\l$.
\begin{Lemma}\label{lm:riesz}
Let $\Psi(\theta) \in L^2(\RR,\K)$ and $\psi \in L^2(\RR)$. Then the integral $\int d\theta\,\overline{\psi(\theta)} \Psi(\theta)$ defines a~vector in $\K$.
\end{Lemma}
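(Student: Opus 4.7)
The plan is to realize the integral through the Riesz representation theorem, as the label of the lemma suggests. First I would observe that for any fixed $\phi \in \K$, the scalar function $\theta \mapsto \<\phi, \Psi(\theta)\>_\K$ is measurable and obeys the pointwise bound $|\<\phi, \Psi(\theta)\>_\K| \le \|\phi\|_\K \cdot \|\Psi(\theta)\|_\K$ by Cauchy--Schwarz in $\K$. Squaring and integrating in $\theta$ then shows that $\<\phi, \Psi(\cdot)\>_\K$ belongs to $L^2(\RR)$ with
\begin{gather*}
 \|\<\phi, \Psi(\cdot)\>_\K\|_{L^2(\RR)} \le \|\phi\|_\K \cdot \|\Psi\|_{L^2(\RR,\K)}.
\end{gather*}

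Next, I would define the antilinear form $L \colon \K \to \CC$ by
\begin{gather*}
 L(\phi) := \int d\theta\, \overline{\psi(\theta)}\, \<\phi, \Psi(\theta)\>_\K,
\end{gather*}
which is well-defined since the integrand is the product of two scalar $L^2(\RR)$-functions. A second application of Cauchy--Schwarz, now on $\RR$, yields
\begin{gather*}
 |L(\phi)| \le \|\psi\|_{L^2(\RR)} \cdot \|\<\phi, \Psi(\cdot)\>_\K\|_{L^2(\RR)} \le \|\psi\|_{L^2(\RR)} \cdot \|\Psi\|_{L^2(\RR,\K)} \cdot \|\phi\|_\K,
\end{gather*}
so that $L$ is a bounded antilinear functional on $\K$.

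The Riesz representation theorem (applied to the conjugate of $L$, or directly in its antilinear version) then supplies a unique vector $\eta \in \K$ such that $L(\phi) = \<\phi, \eta\>_\K$ for all $\phi \in \K$, and this $\eta$ is by construction the value of the integral $\int d\theta\, \overline{\psi(\theta)}\Psi(\theta)$, with the norm bound $\|\eta\|_\K \le \|\psi\|_{L^2(\RR)} \|\Psi\|_{L^2(\RR,\K)}$. No serious obstacle is anticipated here; the only mild point to verify is the measurability of $\theta \mapsto \<\phi, \Psi(\theta)\>_\K$, which is however built into the definition of $L^2(\RR,\K)$ (and can be reduced to scalar measurability by restricting to a separable subspace generated by $\Psi$).
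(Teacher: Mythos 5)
Your proposal is correct, but it takes a slightly different route from the paper. The paper's proof is a one-liner in the spirit of Bochner integration: since $\Psi \in L^2(\RR,\K)$ means $\theta \mapsto \|\Psi(\theta)\|_\K$ is in $L^2(\RR)$, the pointwise norm $\|\overline{\psi(\theta)}\Psi(\theta)\|_\K = |\psi(\theta)|\,\|\Psi(\theta)\|_\K$ is a product of two scalar $L^2$-functions, hence $L^1$ by Cauchy--Schwarz, so the $\K$-valued integrand is absolutely (Bochner) integrable and the integral is a vector in $\K$ with $\|\int d\theta\,\overline{\psi(\theta)}\Psi(\theta)\| \le \|\psi\|\,\|\Psi\|_{L^2(\RR,\K)}$. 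You instead construct the vector weakly: you pair against a fixed $\phi \in \K$, show the antilinear form $L(\phi) = \int d\theta\,\overline{\psi(\theta)}\<\phi,\Psi(\theta)\>_\K$ is bounded by two applications of Cauchy--Schwarz, and invoke the Riesz representation theorem. This is a valid and complete argument (your remark on measurability is the only point needing care, and it is indeed part of the definition of $L^2(\RR,\K)$), and it yields the same norm bound. The trade-off: the paper's argument gives absolute integrability outright, which is convenient for the later manipulations where this lemma is used (dominated convergence, interchanging the integral with inner products and with a contour integral in Lemma~\ref{lm:zero}); your construction a priori only defines the integral in the weak (Pettis) sense, and to identify it with the Bochner integral one falls back on exactly the $L^1$-estimate the paper states --- an estimate which is already contained in your own Cauchy--Schwarz step, so nothing is missing, only the identification is left implicit.
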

\begin{proof}\looseness=1\sloppy
$\Psi(\theta) \in L^2(\RR,\K)$ means that $\|\Psi(\theta)\|$ is an $L^2$-function, hence $\|\overline{\psi(\theta)}\Psi(\theta)\|$ is in~$L^1(\RR)$.
\end{proof}

\begin{Lemma}\label{lm:l2convergence}
Let $\{\Psi_n\}$ be a sequence in $L^2(\RR, \K)$ and assume that there is an $\psi \in L^2(\RR)$ which satisfies $\|\Psi_n(\theta)\| \le \psi(\theta)$ $($almost everywhere$)$. If $\Psi_n$ is almost everywhere pointwise convergent to $\Psi$, then $\Psi$ is again $L^2$ and $\Psi_n \to \Psi$ in the $L^2$-sense.
\end{Lemma}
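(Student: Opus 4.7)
This is a vector-valued version of the dominated convergence theorem, so the plan is to reduce it to the standard scalar dominated convergence theorem applied to real-valued norm functions.

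First I would establish that $\Psi$ belongs to $L^2(\RR,\K)$. Since $\Psi_n \to \Psi$ pointwise almost everywhere, the scalar function $\theta \mapsto \|\Psi(\theta)\|$ is measurable (as the a.e.\ pointwise limit of the measurable scalar functions $\theta \mapsto \|\Psi_n(\theta)\|$), and from the uniform bound $\|\Psi_n(\theta)\| \le \psi(\theta)$ a.e.\ we obtain $\|\Psi(\theta)\| \le \psi(\theta)$ almost everywhere by passing to the limit. Since $\psi \in L^2(\RR)$, it follows that $\|\Psi(\cdot)\| \in L^2(\RR)$, hence $\Psi \in L^2(\RR,\K)$.

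For the $L^2$-convergence, observe that for almost every $\theta$
\begin{gather*}
 \|\Psi_n(\theta) - \Psi(\theta)\|^2 \le (\|\Psi_n(\theta)\| + \|\Psi(\theta)\|)^2 \le (2\psi(\theta))^2 = 4\psi(\theta)^2,
\end{gather*}
and the majorant $4\psi^2$ lies in $L^1(\RR)$ because $\psi \in L^2(\RR)$. On the other hand, the pointwise convergence $\Psi_n(\theta) \to \Psi(\theta)$ almost everywhere gives $\|\Psi_n(\theta) - \Psi(\theta)\|^2 \to 0$ almost everywhere. Applying the scalar Lebesgue dominated convergence theorem to the non-negative integrable sequence $\|\Psi_n(\cdot) - \Psi(\cdot)\|^2$, we conclude
\begin{gather*}
 \|\Psi_n - \Psi\|_{L^2(\RR,\K)}^2 = \int d\theta\,\|\Psi_n(\theta) - \Psi(\theta)\|^2 \longrightarrow 0,
\end{gather*}
which is the claimed $L^2$-convergence.

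There is no serious obstacle here; the only thing worth being careful about is that measurability of $\theta\mapsto\|\Psi(\theta)\|$ and of the differences $\theta\mapsto\|\Psi_n(\theta)-\Psi(\theta)\|$ follow routinely from the a.e.\ pointwise convergence together with the measurability of each $\Psi_n$, so the vector-valued setting does not introduce any genuinely new difficulty beyond the scalar dominated convergence theorem.
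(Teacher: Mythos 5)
Your proof is correct and follows essentially the same route as the paper: both bound $\|\Psi(\theta)\|$ by $\psi(\theta)$ to get $\Psi\in L^2(\RR,\K)$ and then apply the scalar dominated convergence theorem with the majorant $4\psi^2$. The only cosmetic difference is that you apply it directly to $\|\Psi_n(\theta)-\Psi(\theta)\|^2$, while the paper applies it to $\|\Psi_n(\theta)-\Psi_m(\theta)\|^2$ and concludes via the Cauchy property; if anything your version is slightly more direct, since it identifies the $L^2$-limit with $\Psi$ immediately.
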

\begin{proof}
We have the estimate $\|\Psi_n(\theta)\| \le \psi(\theta)$, therefore, the pointwise limit is again bounded: $\|\Psi(\theta)\| \le \psi(\theta)$, especially, $\Psi$ is~$L^2$. In addition, we have the estimate $\|\Psi_n(\theta)-\Psi_m(\theta)\|^2 \le 4 \psi(\theta)$, thus by Lebesgue's dominated convergence,
 \begin{gather*}
 \int d\theta\, \|\Psi_n(\theta)-\Psi_m(\theta)\|^2
 \end{gather*}
converges to $0$. Namely, $\{\Psi_n\}$ is convergent in $L^2(\RR, \K)$.
\end{proof}

\begin{Lemma}\label{lm:h2k}
Let $\K$ be a Hilbert space. Consider the operator $\Delta_1\otimes\1$ on $L^2(\RR)\otimes\K$. By identifying $L^2(\RR)\otimes\K \cong L^2(\RR, \K)$, we have $\dom(\Delta_1\otimes\1) = H^2(\SS_{-\pi,0}, \K)$ $(= H^2(\RR + i(-\pi,0),\K))$.
\end{Lemma}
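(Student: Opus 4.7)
My plan is to reduce the statement to the scalar case via the Fourier--Plancherel transform. Recall from \cite[Appendix~A]{Tanimoto15-1} that $\Delta_1$ is unitarily equivalent, through the Fourier transform $F$ on $L^2(\RR)$, to the multiplication operator $M_{e^{\pi p}}$, with the classical Paley--Wiener identification $\dom(\Delta_1) = H^2(\SS_{-\pi,0})$ corresponding, on the Fourier side, to $\{\widehat\Psi\in L^2(\RR) : e^{\pi p}\widehat\Psi(p)\in L^2(\RR)\}$.

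First I would use the canonical identification $L^2(\RR)\otimes\K \cong L^2(\RR,\K)$ and extend $F$ to the vector-valued Fourier transform $F\otimes\1$ on $L^2(\RR,\K)$, which remains unitary. Under it, $\Delta_1\otimes\1$ is conjugated to $M_{e^{\pi p}}\otimes\1$, i.e., multiplication by the scalar function $e^{\pi p}$ acting on $\K$-valued $L^2$-functions. The spectral theorem (equivalently Fubini) then gives
\begin{gather*}
\dom(\Delta_1\otimes\1) \;=\; (F\otimes\1)^{-1}\Big\{\widehat\Psi \in L^2(\RR,\K) : \int_{\RR} e^{2\pi p}\|\widehat\Psi(p)\|_\K^2\,dp < \infty\Big\}.
\end{gather*}

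Next I would prove the Hilbert-space-valued Paley--Wiener theorem: $\Psi \in H^2(\SS_{-\pi,0},\K)$ if and only if its Plancherel transform $\widehat\Psi$ satisfies $\int e^{2\pi p}\|\widehat\Psi(p)\|_\K^2\,dp < \infty$. For the ``$\Leftarrow$'' direction I would define $\Psi(\theta+i\lambda) := (F^{-1}\otimes\1)[e^{-\lambda p}\widehat\Psi](\theta)$ for $\lambda\in[-\pi,0]$; log-convexity (H\"older interpolation between $\lambda = 0$ and $\lambda = -\pi$) ensures $e^{-\lambda p}\widehat\Psi\in L^2(\RR,\K)$ with a uniform bound on horizontal lines, and strong analyticity of $\lambda \mapsto \Psi(\,\cdot + i\lambda)$ as an $L^2(\RR,\K)$-valued map follows by differentiating under the Fourier integral, justified by dominated convergence in the norm of $\K$. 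For ``$\Rightarrow$,'' I would fix an orthonormal basis $\{e_k\}$ of $\K$, apply the scalar Paley--Wiener theorem to each component $\psi_k(\theta) := \<\Psi(\theta), e_k\>_\K$, which lies in $H^2(\SS_{-\pi,0})$ because $\Psi$ is strongly analytic with $L^2$-boundary values on both edges of the strip, and sum via Parseval to obtain $\int e^{2\pi p}\|\widehat\Psi(p)\|_\K^2\,dp = \sum_k \|\psi_k(\,\cdot - i\pi)\|^2 = \|\Psi(\,\cdot - i\pi)\|^2 < \infty$.

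The main obstacle is verifying strong (rather than merely weak) analyticity of the candidate extension in the ``$\Leftarrow$'' direction. This is handled by the standard fact that strong and weak analyticity coincide for Hilbert-space-valued functions (so testing against each $v\in\K$ suffices), combined with the dominated-convergence estimate on the Fourier side to upgrade pointwise-in-$\lambda$ convergence of difference quotients to convergence in $L^2(\RR,\K)$. Once this and the uniform bound on horizontal lines are in place, the two characterizations of $\dom(\Delta_1\otimes\1)$ coincide and the lemma follows.
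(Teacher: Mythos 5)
Your proposal is correct and takes essentially the same route as the paper, whose proof simply reduces to the scalar case of \cite[Proposition~A.1]{Tanimoto15-1} (the Fourier picture in which $\Delta_1$ is multiplication by an exponential, i.e.\ the Paley--Wiener characterization) combined with standard facts on Hilbert-space-valued holomorphic functions \cite[Chapter~3]{Rudin91}; you are just supplying the details the paper delegates to these references. One cosmetic remark: in your ``$\Rightarrow$'' direction it is cleaner not to invoke the boundary value $\Psi(\,\cdot\,-i\pi)$ (whose existence in the vector-valued case you have not established at that point), but to deduce $\int e^{2\pi p}\|\widehat\Psi(p)\|_\K^2\,dp<\infty$ directly by Fatou's lemma from the identity $\|\Psi(\,\cdot+i\l)\|^2=\int e^{-2\l p}\|\widehat\Psi(p)\|_\K^2\,dp$ and the uniform bound on horizontal lines as $\l\to-\pi$.
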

\begin{proof}
This can be reduced to the case where $\K = \CC$ (see, e.g., \cite[Proposition~A.1]{Tanimoto15-1}) and the standard facts about Banach-space valued holomorphic functions (e.g., \cite[Chapter~3, holomorphic functions]{Rudin91}).
\end{proof}

\begin{Lemma}\label{lm:zero}
Let $\K$ be a Hilbert space and $\Psi \in H^2\big(\SS_{-\frac{\epsilon}{2\pi},\frac{\epsilon}{2\pi}},\K\big)$, namely, $\Psi(\theta) \in L^2(\RR,\K)$ is a $\K$-valued function in $\dom(\Delta^\epsilon\otimes \1)\cap\dom(\Delta^{-\epsilon}\otimes \1)$, for some $\e > 0$ $($see Lemma~{\rm \ref{lm:h2k})}. We assume further that $\Psi(0) = 0$. Let $f(\zeta)$ be a $\CC$-valued meromorphic function on $\RR + i(-\frac{\epsilon}{2\pi}, \frac{\epsilon}{2\pi})$ with the only simple pole at $\zeta = 0$ and bounded outside a neighborhood $\U$ of $0$. Then $\Psi(\zeta) f(\zeta) \in H^2\big(\SS_{-\frac{\epsilon}{2\pi},\frac{\epsilon}{2\pi}},\K\big)$.
\end{Lemma}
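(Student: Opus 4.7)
The plan is to split $f$ into its principal part at the pole and a regular remainder, and handle each contribution to $\Psi f$ separately. Writing $f(\zeta) = R_0/\zeta + f_{\mathrm{reg}}(\zeta)$ with $R_0 = \res_{\zeta=0} f(\zeta)$, the remainder $f_{\mathrm{reg}}$ extends to an analytic function on the entire strip. We first check that $f_{\mathrm{reg}}$ is bounded on the strip: outside a bounded neighborhood of $0$ (we may take a disk $D_\delta = \{|\zeta| < \delta\}$ with $\delta < \epsilon/2\pi$, chosen to contain the original $\U$), $f$ is bounded by hypothesis and $R_0/\zeta$ is bounded since $|\zeta| \ge \delta$; on the compact closure $\bar D_\delta \subset \SS_{-\frac{\epsilon}{2\pi},\frac{\epsilon}{2\pi}}$, $f_{\mathrm{reg}}$ is continuous, hence bounded. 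Therefore $\Psi \cdot f_{\mathrm{reg}} \in H^2(\SS_{-\frac{\epsilon}{2\pi},\frac{\epsilon}{2\pi}}, \K)$ with the uniform bound $\|(\Psi f_{\mathrm{reg}})(\,\cdot + i\lambda)\| \le \|f_{\mathrm{reg}}\|_\infty\, \|\Psi(\,\cdot + i\lambda)\|$.

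It remains to show $\Psi(\zeta)/\zeta \in H^2(\SS_{-\frac{\epsilon}{2\pi},\frac{\epsilon}{2\pi}}, \K)$. Analyticity on the whole strip is immediate from the hypothesis $\Psi(0) = 0$: since $\Psi$ is analytic on the open strip and vanishes at $0$, the quotient $\Psi(\zeta)/\zeta$ has a removable singularity at $0$ (with value $\Psi'(0)$) and extends to a $\K$-valued analytic function on the whole strip. For the uniform $L^2$ bound on horizontal lines $\RR + i\lambda$, we split the integration into $|\theta| \ge 1$ and $|\theta| < 1$. On the outer part, $|1/(\theta+i\lambda)| \le 1$ yields $\int_{|\theta|\ge 1}\|\Psi(\theta + i\lambda)/(\theta+i\lambda)\|^2\,d\theta \le \|\Psi\|_2^2$ uniformly in $\lambda$.

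For the inner part $|\theta| < 1$, fix $\delta_0 \in (0, \epsilon/(4\pi))$ and split on $|\lambda|$. If $|\lambda| \ge \delta_0$, then $|\theta + i\lambda| \ge \delta_0$ for all $\theta$, hence $|1/(\theta+i\lambda)| \le 1/\delta_0$ and the integral is bounded by $\delta_0^{-2}\|\Psi\|_2^2$. If instead $|\lambda| < \delta_0$, then for $|\theta| < \delta_0$ the point $\theta + i\lambda$ lies in the compact closed disk $\bar D_{2\delta_0}$, which by the choice of $\delta_0$ sits inside the open strip; on this disk the analytic extension of $\Psi/\zeta$ is bounded by continuity, giving a uniform $L^\infty$ bound and hence a uniform integral bound on that small $\theta$-interval; the remaining range $\delta_0 \le |\theta| < 1$ satisfies $|\theta + i\lambda| \ge \delta_0$ and is handled as before.

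The only delicate step is the region where both $\theta$ and $\lambda$ are small, precisely where $1/\zeta$ is singular; the hypothesis $\Psi(0) = 0$ is exactly what turns the potential blow-up into bounded behavior via the removable singularity, and the argument above converts this local boundedness into the required $L^2$ bound on the near-origin portion of each horizontal line. Combining $\Psi f = R_0\,\Psi/\zeta + \Psi f_{\mathrm{reg}}$, both summands lie in $H^2(\SS_{-\frac{\epsilon}{2\pi},\frac{\epsilon}{2\pi}}, \K)$, and hence so does $\Psi f$.
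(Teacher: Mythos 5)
Your argument is correct, and its engine is the same as the paper's: the hypothesis $\Psi(0)=0$ cancels the simple pole of $f$ at $\zeta=0$, the resulting analytic extension is locally bounded on a compact disk around $0$ inside the strip, and away from the pole one only needs the boundedness of $f$ together with the uniform $L^2$-bounds of $\Psi$ on horizontal lines. The differences are organizational rather than substantive. You split $f=R_0/\zeta+f_{\mathrm{reg}}$, while the paper factors $\Psi(\zeta)f(\zeta)=\big(\Psi(\zeta)/\zeta\big)\cdot\big(\zeta f(\zeta)\big)$; these are equivalent devices. More notably, the bulk of the paper's proof is spent establishing, via the Fourier representation, Morera's theorem and weak-to-strong analyticity, that $\Psi$ is a norm-analytic $\K$-valued function with genuine pointwise values on the open strip; you take this directly from the definition of $H^2\big(\SS_{-\frac{\epsilon}{2\pi},\frac{\epsilon}{2\pi}},\K\big)$ (equivalently Lemma~\ref{lm:h2k}), which is legitimate given how the hypothesis is phrased, so your route is shorter, and the vector-valued removable-singularity step you use is standard for Banach-space-valued holomorphic functions. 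Conversely, your explicit case analysis for the uniformity in $\lambda$ of the $L^2$-estimate near the origin (splitting $|\theta|\ge 1$, $|\theta|<1$, and $|\lambda|$ above or below $\delta_0<\frac{\epsilon}{4\pi}$) is spelled out more carefully than the paper's closing sentence. One small caveat, shared with the paper's own proof: choosing a disk $D_\delta$ with $\delta<\frac{\epsilon}{2\pi}$ that contains $\U$ tacitly reads the hypothesis as allowing $\U$ to be a small bounded neighborhood of $0$; this is exactly how the lemma is applied in Lemma~\ref{lm:symmetric-cont2} and Proposition~\ref{pr:analyticity-n}, so it is harmless, but it is worth being aware that a literal ``bounded outside some neighborhood'' would not by itself permit shrinking $\U$.
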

\begin{proof}
We may assume that $\e = 1$, as the argument is parallel. Under the identif\/ication $L^2(\RR,\K) \cong L^2(\RR)\otimes \K$, we Fourier-transform the f\/irst factor and get $\hat\Psi(k)$, with our convention $\hat \psi(k) = \frac1{\sqrt{2\pi}}\int dt\, e^{-itk}\psi(t)$. The assumption says that $\hat\Psi$ and $\hat\Psi(k)(e^{k}+e^{-k})$ are $L^2$. Consider the following integral for $\zeta$ with $-1 < \im \zeta < 1$ and $0 < \alpha < 1-|\im \zeta|$:
 \begin{gather*}
 \int dk\, \hat\Psi(k) e^{ik\zeta} = \int dk\, \hat\Psi(k) e^{ik\zeta} \big(e^{\alpha k}+e^{-\alpha k}\big) \cdot \frac1{e^{\alpha k}+e^{-\alpha k}}.
 \end{gather*}
In the latter expression, the f\/irst factor $\hat\Psi(k) e^{ik\zeta} (e^{\alpha k}+e^{-\alpha k})$ is $L^2$ by assumption, and the second factor is an~$L^2$-function in~$k$. Therefore, the integral gives a vector in $\K$ for any $\zeta$ in the above interval by Lemma~\ref{lm:riesz}. This shows that the former expression is $L^1$ and it does not depend on~$\alpha$. Let us denote it by $\Psi(\zeta)$. This notation coincides with $\Psi$ when $\zeta$ is real.

We claim that the map $\zeta \mapsto \Psi(\zeta) \in L^2(\RR)$ is analytic in norm. First, it is continuous in norm. Indeed, if $|\zeta-\zeta'| < \frac\a 3$, where $\a$ is a positive number as above, then the integrand of
 \begin{gather*}
 \int dk\, \hat\Psi(k) \big(e^{ik\zeta} - e^{ik\zeta'}\big)\big(e^{\frac\a 3 k}+e^{-\frac\a 3 k}\big) \cdot \frac1{e^{\frac\a 3 k}+e^{-\frac\a 3 k}}
 \end{gather*}
can be bounded by $\big[e^{-\im \zeta k}\big(1+\big(e^{\frac\a 3 k}+e^{-\frac\a 3 k}\big)\big)\|\hat\Psi(k)\|\big(e^{\frac\a 3 k}+e^{-\frac\a 3 k}\big)\big] \cdot \frac1{e^{\frac\a 3 k}+e^{-\frac\a 3 k}}$, whose f\/irst factor is in $L^2(\RR, \K)$. As $\zeta'\to \zeta$, the above integrand converges pointwise, therefore, by Lemma~\ref{lm:l2convergence}, it is $L^2$-convergent, and the integral is convergent as well. As for analyticity in norm, we can use the Morera theorem. For a closed path~$\Gamma$ around $\zeta$ contained in the region $-1 + \frac\a 3 < \im \zeta' < 1- \frac\a 3$, and a vector $\psi \in \K$, we consider
 \begin{gather*}
 \int_\Gamma d\zeta'\int dk\, \<\psi, \hat\Psi(k)\> e^{ik\zeta'} \big(e^{\frac\a 3 k}+e^{-\frac\a 3 k}\big) \cdot \frac1{e^{\frac\a 3 k}+e^{-\frac\a 3 k}}.
 \end{gather*}
The integrand is $L^1(\RR^2 \times \Gamma)$, therefore, we can perform the $d\zeta'$-integration f\/irst and obtain $0$. By the Morera theorem, this $\K$-valued function $\Psi(\zeta)$ is weakly analytic. Then analyticity in norm (strong analyticity) follows by \cite[Theorem~3.31]{Rudin91}.

Now, again for a f\/ixed $\psi$, $\<\psi, \Psi(\zeta)\>$ is an analytic function in one variable $\zeta$. By assumption, this function has a zero at $\zeta = 0$.

As $f(\zeta)$ has only a simple pole at $\zeta = 0$, the product $\Psi(\zeta)f(\zeta)$ remains analytic in norm. Indeed, by def\/inition of analyticity in norm, $\lim\limits_{\zeta \to 0}\frac1{\zeta-0}(\Psi(\zeta)-0) = \lim\limits_{\zeta \to 0}\frac1{\zeta}\Psi(\zeta)$ exists. Therefore, so does the limit $\frac1{\zeta}\Psi(\zeta)\cdot \zeta f(\zeta)$, since $\zeta f(\zeta)$ is analytic by assumption. Especially, the $\K$-valued function $\Psi(\zeta)f(\zeta)$ is bounded around $\zeta = 0$.

Away from $0$, $f(\zeta)$ is bounded by assumption, therefore, as $\Psi(\theta + i\l) \in L^2(\RR\setminus \U, \K)$ for $\k \in (-\epsilon,\epsilon)$, the product is again $L^2$. Altogether, $\Psi(\theta + i\l)f(\theta)$ is uniformly $L^2$.
\end{proof}

\begin{Lemma}\label{lm:symmetric-cont}
Let $\Psi(\theta_1,\theta_2)$ be a $\K$-valued function, $S$-symmetric in $\theta_1$, $\theta_2$ and assume that $\Psi \in \dom\big(\shiftone\otimes\1\otimes\1\big)$, where $L^2(\RR^2,\K) \cong L^2(\RR)\otimes L^2(\RR)\otimes\K$. Then, under the assumption~{\rm (S\ref{ax:regularity})}, $\Psi(\theta_1,\theta_2)$ has actually an analytic continuation in $H^2(\RR^2 + i\C, \K)$,
 where
 \begin{gather*}
 \C = \left\{(\l_1,\l_2)\,|\, -\frac{\pi}3 < \l_1 < 0,\; \l_2 < 0, \;\frac{\pi}3 < \l_1 + \l_2,\;
 -\k < \l_2 - \l_1 \right\}
 \end{gather*}
Especially, $\Psi(\theta_1- \frac{\pi i}6, \theta_2) \in \dom\big(\1 \otimes\halfshiftone\otimes\1\big)$ $($see Fig.~{\rm \ref{fig:analyticity})}.
\begin{figure}[ht]
 \centering
\begin{tikzpicture}[line cap=round,line join=round,>=triangle 45,x=1.0cm,y=1.0cm]
\clip(-6.29,-5.65) rectangle (2.37,1.48);
\fill[line width=0pt,fill=black,fill opacity=0.1] (-5,0) -- (-2.2,-2.8) -- (0,-0.6) -- (0,0) -- cycle;
\draw [domain=-6.29:2.37] plot(\x,{(-0-0*\x)/-5});
\draw (0,-5.65) -- (0,1.48);
\draw (-5,0)-- (0,-5);
\draw (0,0)-- (-2.5,-2.5);
\draw (-2.2,-2.8)-- (0,-0.6);
\draw [->, line width=1.2pt] (-2.5,0)-- (-2.5,-2.5);
\draw (-5.59,0.85) node[anchor=north west] {$-\frac{\pi}3$};
\draw (0.07,-4.51) node[anchor=north west] {$-\frac{\pi}3$};
\draw (1.2,0.58) node[anchor=north west] {$\mathrm{Im}\,\zeta_1$};
\draw (0.07,1.48) node[anchor=north west] {$\mathrm{Im}\,\zeta_2$};
\draw (0.06,0.52) node[anchor=north west] {$0$};
\draw (0.06,-0.33) node[anchor=north west] {$-\kappa$};
\end{tikzpicture}
\caption{The imaginary parts of the domain of analyticity $\C$ (the shaded area). The arrow corresponds to the action of operator.}
 \label{fig:analyticity}
\end{figure}
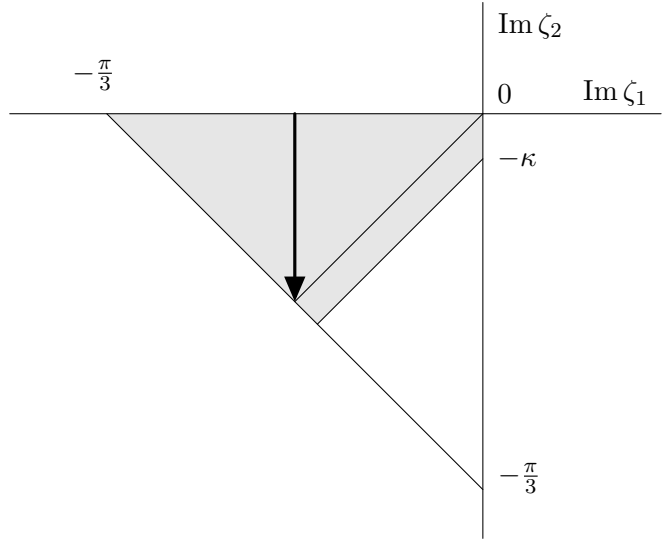
\end{Lemma}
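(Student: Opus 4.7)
The plan is to establish the continuation in three steps, combining the one-variable Hardy space hypothesis with $S$-symmetry.

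\emph{Step 1 (Vector-valued reformulation).} By Lemma~\ref{lm:h2k} applied with $L^2(\RR,\K)$ in place of $\K$, the hypothesis $\Psi \in \dom(\shiftone \otimes \1 \otimes \1)$ says precisely that $\Psi \in H^2(\SS_{-\frac\pi3, 0}, L^2(\RR, \K))$: the map $\zeta_1 \mapsto \Psi(\zeta_1, \cdot)$ is an analytic $L^2(\RR, \K)$-valued function on $\SS_{-\frac\pi3, 0}$ with uniformly bounded horizontal $L^2$-norms.

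\emph{Step 2 (Second-slot extension via $S$-symmetry).} I analytically continue the identity $\Psi(\theta_1,\theta_2) = S(\theta_2-\theta_1)\Psi(\theta_2,\theta_1)$ in its first argument and invoke Hermitian analyticity~(S\ref{ax:hermitian}) to define
\[
\Psi(\theta_2, \zeta_1) := S(\zeta_1 - \theta_2)\,\Psi(\zeta_1, \theta_2), \qquad \zeta_1 \in \SS_{-\frac\pi3, 0},\ \theta_2 \in \RR.
\]
Since the physical-strip poles of $S$ lie at $\frac{\pi i}3,\frac{2\pi i}3$, the factor $S(\zeta_1 - \theta_2)$ is analytic for $\im\zeta_1 \in (-\frac\pi3, 0)$, and the boundedness hypothesis~(S\ref{ax:regularity}) gives uniform $L^2$-control in $\theta_2$ whenever $|\im\zeta_1| < \k$.

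\emph{Step 3 (Joint continuation on $\RR^2 + i\C$).} For $(\zeta_1, \zeta_2)$ with imaginary parts $(\l_1, \l_2) \in \C$, I construct $\Psi(\zeta_1, \zeta_2)$ by iterating the first-slot continuation with the $S$-symmetric transfer from Step~2. The four defining inequalities of $\C$ are exactly the constraints keeping this construction inside the analyticity domain of all factors: $\l_1 \in (-\frac\pi3, 0)$ comes from the first-slot Hardy strip; $\l_2 < 0$ from the direction of the $S$-symmetric extension; $\l_2 - \l_1 > -\k$ keeps $S(\zeta_2 - \zeta_1)$ inside its boundedness strip via~(S\ref{ax:regularity}); and $\l_1 + \l_2 > -\frac\pi3$, i.e., $\im(\zeta_1 + \zeta_2) > -\frac\pi3$, ensures the combined downward shift respects the first-slot Hardy strip of width $\frac\pi3$ after $S$-symmetric transfer. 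Joint (rather than merely separate) analyticity with the uniform $L^2$-bound is obtained either by a direct Cauchy--Pompeiu construction for Hilbert-space-valued analytic functions based on Lemmas~\ref{lm:riesz}--\ref{lm:zero}, or by first verifying joint holomorphy on a small open tube inside $\C$ and applying Bochner's tube theorem to extend to the convex hull. The final assertion $\Psi(\theta_1-\frac{\pi i}6,\theta_2)\in\dom(\1\otimes\halfshiftone\otimes\1)$ then follows because $(\l_1,\l_2)=(-\frac\pi6,-\frac\pi6)$ lies on the closed boundary of $\C$ and the $L^2$-boundary value of the continuation exists by the standard Hardy space theory.

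\emph{Main obstacle.} The delicate step is upgrading separate to joint analyticity on the interior of $\C$, particularly near the diagonal edges $\l_1+\l_2=-\frac\pi3$ and $\l_2-\l_1=-\k$, where the $S$-factors become borderline. Absorbing the potential simple singularities there using the zeros of $\Psi$ on the coinciding-argument diagonal (forced by $S(0)=-1$ together with $S$-symmetry) via Lemma~\ref{lm:zero} is the technical heart of the argument.
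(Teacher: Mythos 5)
There is a genuine gap, and it sits exactly where your Step 2 asserts analyticity: $S(\zeta_1-\theta_2)$ is \emph{not} analytic on the whole strip $\im \zeta_1 \in \big({-}\frac{\pi}3,0\big)$. By (S\ref{ax:atzero}) the function $S$ must have zeros in the physical strip, necessarily at height larger than $\k$ (this is why $\k<\frac\pi3$), and by (S\ref{ax:unitarity})--(S\ref{ax:hermitian}) these become poles of $S$ in $\RR+i\big({-}\frac\pi3,-\k\big)$. Hence the $S$-symmetric transfer $\Psi(\theta_2,\zeta_1)=S(\zeta_1-\theta_2)\Psi(\zeta_1,\theta_2)$ breaks down as soon as $\im\zeta_1<-\k$, and these poles sit at $\zeta_1-\theta_2=\a_j\neq 0$, i.e.\ away from the coinciding-rapidity diagonal; the zero of $\Psi$ at $\theta_1=\theta_2$, which you propose to exploit via Lemma~\ref{lm:zero} in your ``main obstacle'' paragraph, cannot absorb them. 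That diagonal-zero mechanism is what handles the pole of $S$ at $\frac{2\pi i}3$ in Proposition~\ref{pr:analyticity-n}, not the present lemma. The paper instead cancels the offending poles by hand: it multiplies $\Psi$ by a finite product $\psi(\theta_2-\theta_1)$ with $\psi(\theta)=\prod_j\frac{\sinh\frac12(\theta-\a_j)}{\sinh\frac12(\theta-\b_j)}$, $\im\b_j>\frac{2\pi}3$, whose zeros kill the poles $\a_j$ of $S$ in $\RR+i\big({-}\frac\pi3,0\big)$; then $\psi(\theta_2-\theta_1)\Psi$ lies in $\dom\big(\shiftone\otimes\1\otimes\1\big)\cap\dom\big(\1\otimes\shiftone\otimes\1\big)$, an $L^2$-version of the Malgrange--Zerner flat-tube theorem \cite[Proposition~A.1]{Tanimoto15-1} places it in $H^2$ of the tube over the full triangle $\tilde\C$, and finally one divides by $\psi$, which is bounded below precisely on $-\k<\im(\theta_2-\theta_1)<\frac\pi3$. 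This division step is the sole origin of the constraint $\l_2-\l_1>-\k$ in the definition of $\C$; without the auxiliary $\psi$ your construction never reaches the triangle in the first place.

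A secondary problem is your tube argument. Bochner's tube theorem requires joint holomorphy on a tube over an \emph{open} base, whereas your starting data is separate analyticity on two flat tubes (the imaginary part of one variable pinned to $0$), so there is no ``small open tube of joint holomorphy'' to begin with; the relevant tool is the Malgrange--Zerner theorem (cf.~\cite{Epstein66}), and moreover in a Hardy-space form with uniform $L^2$-bounds, which the paper does not reprove but imports from \cite[Proposition~A.1]{Tanimoto15-1}. Once the statement $\Psi\in H^2(\RR^2+i\C,\K)$ is in place, your final step --- taking the boundary value at $(\l_1,\l_2)=\big({-}\frac\pi6,-\frac\pi6\big)$ to conclude $\Psi\big(\theta_1-\frac{\pi i}6,\theta_2\big)\in\dom\big(\1\otimes\halfshiftone\otimes\1\big)$ --- is fine.
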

\begin{proof}
By assumption, $\Psi$ has an analytic continuation $\theta_1 \to \theta_1 - \frac{\pi i}3$. Furthermore, $\Psi$ is $S$-symmetric, which means that
 \begin{gather*}
 \Psi(\theta_1, \theta_2) = S(\theta_2 - \theta_1)\Psi(\theta_2, \theta_1).
 \end{gather*}
By assumption (S\ref{ax:regularity}), $S$ has only f\/initely many zeros in the physical strip whose distance for the real line is larger than $0 < \k$ $(< \frac\pi 3)$. Let $\{\a_j\}$ be the poles of $S$, $-\frac\pi 3 < \im \a_j < -\k$ (which correspond to the zeros in the physical strip by (S\ref{ax:unitarity}) and~(S\ref{ax:hermitian})).

We take the function $\psi(\theta) := \prod_j\frac{\sinh\frac12(\theta - \a_j)}{\sinh\frac12(\theta - \b_j)}$, where $\im \b_j > \frac{2\pi }3$. We can take $\{\b_j\}$ so that $\psi$ is bounded below in $-\k < \im \theta < \frac\pi 3$. Indeed, it only has to be a f\/inite Blaschke product in the strip $\RR + i(-\frac\pi 3, \frac{2\pi}3)$ (as $S$ has only f\/initely many zeros in the physical strip).

Now let us consider $\psi(\theta_2-\theta_1)\Psi(\theta_1,\theta_2)$. This function belong to $\dom\big(\shiftone\otimes\1\otimes\1\big)$, since both factors are analytic in $\theta_1, -\frac\pi 3 < \im \theta_1 < 0$. Furthermore,
\begin{gather*}
\psi(\theta_2 - \theta_1)\Psi(\theta_1, \theta_2) = \psi(\theta_2 - \theta_1)S(\theta_2 - \theta_1)\Psi(\theta_2, \theta_1)
\end{gather*}
and the poles of $S(\zeta_2 - \zeta_1)$ in $-\frac{\pi }3 < \im(\zeta_2 - \zeta_1) < 0$ are cancelled by zeros of $\psi(\zeta_2- \zeta_1)$. Then, by looking at the right-hand side, this function belongs to $\dom\big(\1\otimes\shiftone\otimes\1\big)$.

Summarizing, $\psi(\theta_2-\theta_1)\Psi(\theta_1,\theta_2) \in \dom\big(\shiftone\otimes\1\otimes\1\big)\cap\dom\big(\1\otimes\shiftone\otimes\1\big)$ which implies (by \cite[Proposition~A.1]{Tanimoto15-1}) that $\psi(\theta_2-\theta_1)\Psi(\theta_1,\theta_2) \in H^2(\RR^2+i\tilde\C, \K)$, where
 \begin{gather*}
 \tilde\C = \left\{(\l_1,\l_2)\,|\, -\frac{\pi}3 < \l_1 < 0,\; -\frac{\pi}3 < \l_2 < 0,\;
 -\frac{\pi}3 < \l_1 + \l_2 \right\}.
 \end{gather*}
This region is represented by the large triangle in Fig.~\ref{fig:analyticity}. This can be considered as an $L^2$-version of the Malgrange--Zerner theorem (cf.~\cite{Epstein66}).

As $\psi(\theta_2 - \theta_1)$ is analytic and bounded below in $-\k < \im(\theta_2 - \theta_1) < \frac\pi 3$, the function $\Psi(\theta_1,\theta_2) = \psi(\theta_2- \theta_1)\Psi(\theta_1,\theta_2) / \psi(\theta_2-\theta_1)$ is analytic and uniformly $L^2$-bounded in the claimed region. In other words, $\Psi \in H^2(\RR^2+i\C, \K)$.
\end{proof}

\begin{Lemma}\label{lm:symmetric-cont2}
With the same assumption as in Lemma~{\rm \ref{lm:symmetric-cont}}, $S\big(\theta_1 - \theta_2 + \frac{\pi i}3\big)\Psi(\theta_1, \theta_2)$ is in $H^2(\RR^2 + i\C_\epsilon, \K)$ for any $\epsilon > 0$, where
 \begin{gather*}
 \C_\epsilon = \left\{(\l_1,\l_2)\,|\, -\frac{\pi}3 < \l_1 < 0,\; \l_2 < 0, \;\frac{\pi}3 < \l_1 + \l_2 < -\epsilon,\;
 -\k \le \l_2 - \l_1 \right\}.
 \end{gather*}
Especially, $S\big(\theta_1-\theta_2 + \frac{\pi i}6\big)\Psi\big(\theta_1- \frac{\pi i}6, \theta_2\big) \in \dom\big(\1 \otimes\halfshiftone\otimes\1\big)$.
\end{Lemma}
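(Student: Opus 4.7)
\emph{Plan.} The strategy combines the analytic continuation from Lemma~\ref{lm:symmetric-cont} with a pole-cancellation argument in the spirit of Lemma~\ref{lm:zero}. By Lemma~\ref{lm:symmetric-cont}, $\Psi$ extends to a Hardy function on $\RR^2+i\C$, and (after the evident typo $-\pi/3 < \l_1+\l_2$) one has $\C_\epsilon \subset \C$. In this region, the multiplier $S(\zeta_1-\zeta_2+\frac{\pi i}{3})$ is meromorphic with at most two simple poles (by (S\ref{ax:residue}) and crossing (S\ref{ax:crossing})): one at $\zeta_1-\zeta_2=0$, arising from the pole of $S$ at $\frac{\pi i}{3}$, and one at $\zeta_1-\zeta_2=\frac{\pi i}{3}$, arising from the pole at $\frac{2\pi i}{3}$. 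The latter would require $\l_2-\l_1 = -\frac{\pi}{3}$, which is excluded by the constraint $\l_2-\l_1 \ge -\k$ in $\C_\epsilon$ together with $\k < \frac{\pi}{3}$; off a neighborhood of the diagonal, $S$ is moreover uniformly bounded by (S\ref{ax:regularity}).

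The heart of the argument is cancelling the pole on the diagonal. $S$-symmetry $\Psi(\theta_1,\theta_2) = S(\theta_2-\theta_1)\Psi(\theta_2,\theta_1)$ and (S\ref{ax:atzero}) ($S(0)=-1$) yield $\Psi(\theta,\theta)=0$ a.e., and this relation extends by analytic continuation to $\Psi(\zeta,\zeta)=0$ on the complexified diagonal inside $\C$, since $S(\zeta_2-\zeta_1)$ is regular at the origin. Hence $\Psi(\zeta_1,\zeta_2)/(\zeta_1-\zeta_2)$ is norm-analytic across $\zeta_1=\zeta_2$ by the argument underlying Lemma~\ref{lm:zero}, while $(\zeta_1-\zeta_2)\,S(\zeta_1-\zeta_2+\frac{\pi i}{3})$ is holomorphic at the diagonal since the pole is simple. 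Their product equals $S(\zeta_1-\zeta_2+\frac{\pi i}{3})\Psi(\zeta_1,\zeta_2)$, which is therefore analytic throughout $\RR^2+i\C_\epsilon$.

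For the uniform $L^2$-bound on horizontal slices I would change variables to $w = \zeta_1-\zeta_2$, $s = \zeta_1+\zeta_2$ (harmless Jacobian $1/2$) and apply Lemma~\ref{lm:zero} slice-wise in $w$: for each fixed $\im s$ in the admissible range, the $w$-section of $\C_\epsilon$ is an open interval containing $0$, the restriction of $\Psi$ to this section is a $\K$-valued Hardy function with a zero at $w = 0$, and $f(w) = S(w+\frac{\pi i}{3})$ has precisely one simple pole at $w = 0$ in the relevant $w$-strip. Fubini in $(\re w,\re s)$ then converts the slice-wise bound into the 2D Hardy norm; the $\epsilon$-buffer $\l_1+\l_2 < -\epsilon$ together with $\|S\|_\k$ from (S\ref{ax:regularity}) guarantee uniformity of the constants as $(\l_1,\l_2)$ ranges over $\C_\epsilon$. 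The \emph{especially} clause then follows from taking the $L^2$-boundary value at $(\l_1,\l_2) = (-\frac{\pi}{6},-\frac{\pi}{6}) \in \overline{\C_\epsilon}$ (valid for $\epsilon < \frac{\pi}{3}$), which yields exactly $S(\theta_1-\theta_2+\frac{\pi i}{3})\Psi(\theta_1-\frac{\pi i}{6},\theta_2-\frac{\pi i}{6})$, equivalent to the domain claim.

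The main obstacle will be making the slice-wise Lemma~\ref{lm:zero} estimate genuinely uniform in $s$, so that Fubini produces a true $L^2(\RR^2)$-bound and not merely a.e.\ finiteness. This should hold because the constants in Lemma~\ref{lm:zero} depend on the multiplier $f$ (independent of $s$) and on the $H^2$-norm of $\Psi$ along the slice (which by Lemma~\ref{lm:symmetric-cont} is controlled uniformly on $\C$), but it requires a careful rereading of the proof of Lemma~\ref{lm:zero} in the $\K$-valued parameter-dependent setting.
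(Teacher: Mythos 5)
Your route is essentially the paper's: continue $\Psi$ to $H^2(\RR^2+i\C,\K)$ via Lemma~\ref{lm:symmetric-cont}, pass to the difference variable $w=\zeta_1-\zeta_2$, cancel the simple pole of $S$ at $w=0$ (i.e., of $S$ at $\frac{\pi i}3$) against the zero of $\Psi$ on the diagonal coming from $S$-symmetry and (S\ref{ax:atzero}), note that the pole at $\frac{2\pi i}3$ is excluded by $-\k\le\l_2-\l_1$ with $\k<\frac\pi3$, and use boundedness of $S$ away from the diagonal for the rest of $\C_\epsilon$. The one place you diverge is exactly the step you flag as the main obstacle: the slice-wise application of Lemma~\ref{lm:zero} in $w$ for each fixed $s$ followed by Fubini, with the attendant worry about uniformity of constants in $s$. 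This detour is unnecessary: Lemma~\ref{lm:zero} is stated for $\K$-valued functions, so you should apply it once, to $S(w+\frac{\pi i}3)\Psi$ viewed as a function of the single variable $w$ with values in $L^2(\RR_s)\otimes\K$, on a tilted rectangle around the segment of imaginary parts between $(0,0)$ and $(-\frac\pi6,-\frac\pi6)$ of width of order $\epsilon$ -- this is what the paper does, and the uniformity question in $s$ never arises. The vanishing hypothesis of Lemma~\ref{lm:zero} is then the statement that the $L^2(\RR_s,\K)$-valued analytic function of $w$ vanishes at $w=0$, which follows from $\tilde\Psi(w,\cdot)=S(w)\tilde\Psi(-w,\cdot)$ and $S(0)=-1$ by norm-continuity (your pointwise phrase ``$\Psi(\theta,\theta)=0$ a.e.'' is not meaningful for an $L^2$ function of two variables and should be formulated this way). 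Two small corrections to the final clause: the ``especially'' statement is not just the existence of the boundary value at $(-\frac\pi6,-\frac\pi6)$, but membership in $\dom\big(\1\otimes\halfshiftone\otimes\1\big)$, i.e., the uniform $L^2$ bound along the whole segment $\l_1=-\frac\pi6$, $\l_2\in(-\frac\pi6,0)$, which your $H^2(\RR^2+i\C_\epsilon,\K)$ bound does supply; and since along that segment $\l_1+\l_2$ gets as large as $-\frac\pi6$, you need $\epsilon<\frac\pi6$ rather than $\epsilon<\frac\pi3$ (harmless, as the lemma holds for every $\epsilon>0$).
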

\begin{proof}
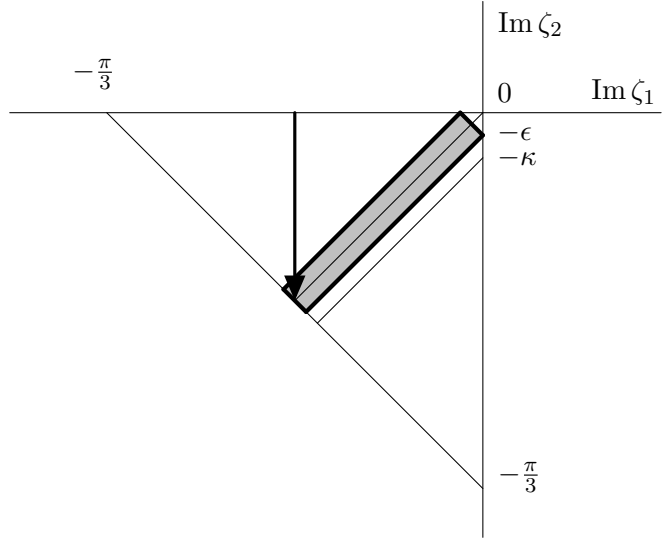
\begin{figure}[ht]
 \centering
\begin{tikzpicture}[line cap=round,line join=round,>=triangle 45,x=1.0cm,y=1.0cm]
\clip(-6.29,-5.65) rectangle (2.37,1.48);
\fill[line width=1.6pt,fill=black,fill opacity=0.25] (-2.65,-2.35) -- (-2.35,-2.65) -- (0,-0.3) -- (-0.3,-0) -- cycle;
\draw [domain=-6.29:2.37] plot(\x,{(-0-0*\x)/-5});
\draw (0,-5.65) -- (0,1.48);
\draw (-5,0)-- (0,-5);
\draw (0,0)-- (-2.5,-2.5);
\draw (-2.2,-2.8)-- (0,-0.6);
\draw (-5.59,0.85) node[anchor=north west] {$-\frac{\pi}3$};
\draw (0.07,-4.51) node[anchor=north west] {$-\frac{\pi}3$};
\draw (1.31,0.58) node[anchor=north west] {$\mathrm{Im}\,\zeta_1$};
\draw (0.07,1.48) node[anchor=north west] {$\mathrm{Im}\,\zeta_2$};
\draw (0.06,0.52) node[anchor=north west] {$0$};
\draw (0.06,-0.33) node[anchor=north west] {$-\kappa$};
\draw (0.06,0) node[anchor=north west] {$-\epsilon$};
\draw [line width=1.6pt] (-2.65,-2.35)-- (-2.35,-2.65);
\draw [line width=1.6pt] (-2.35,-2.65)-- (0,-0.3);
\draw [line width=1.6pt] (0,-0.3)-- (-0.3,0);
\draw [line width=1.6pt] (-0.3,0)-- (-2.65,-2.35);
\draw [->,line width=1.2pt] (-2.5,0) -- (-2.5,-2.5);
\end{tikzpicture}
 \caption{The tilted rectangular which covers the arrowhead.}
 \label{fig:analyticity2}
\end{figure}
 We know from Lemma~\ref{lm:symmetric-cont} that $\Psi \in H^2(\RR^2 + i\C,\K)$. Then we can apply Lemma~\ref{lm:zero} to $S(\theta_1 - \theta_2 + \frac{\pi i}3)\Psi(\theta_1,\theta_2)$ as a function of $\theta_2-\theta_1$ on a tilted rectangular around on the segment between $(0,0)$ and $(-\frac\pi 6,-\frac\pi 6)$ with width $\sqrt 2\epsilon$, which gives the $L^2$-boundedness on the rectangular. The function $S(\theta_1-\theta_2 + \frac{\pi i}3)$ is bounded in $\RR^2 + i\C$ away from the line $\theta_2-\theta_1 = 0$,
 hence we obtain the claimed boundedness.
\end{proof}

\begin{Proposition}\label{pr:analyticity-n}
 Let $\Psi_n$ be in the following domain:
 \begin{gather*}
 \dom\big(\1\otimes\cdots \otimes\1\otimes\underset{k\text{\rm -th}}{\big(\Delta_1^{\frac{\epsilon}{2\pi}} + \Delta_1^{-\frac{\epsilon}{2\pi}}\big)}\otimes\1\otimes\cdots\otimes\1\big)\\
 \qquad{} \bigcap_{l\neq k} \dom\big(\1\otimes\cdots\otimes\1\otimes\underset{l\text{\rm -th}}{\Delta_1^{\frac\epsilon{2\pi}}}\otimes\1\otimes\cdots\otimes\1\big)
 \end{gather*}
for some $0<\epsilon <\kappa$ and assume that $\Psi_n$ has zeros at $\theta_k -\theta_l = 0$ for $k\neq l$. Let $\{f_l\}$, $l = 1,\dots, n$, be $\CC$-valued meromorphic functions on $\RR + i(-\frac{\k}{2\pi},\frac{\k}{2\pi})$ with the only simple pole at $\zeta = 0$ and bounded outside a neighborhood of $\zeta = 0$. Then for $1 \le k \le n$, $\prod\limits_{l\neq k}f_l(\theta_k - \theta_l)\Psi_n(\theta_1,\dots,\theta_n)$ belongs to the same domain.
\end{Proposition}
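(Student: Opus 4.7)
The plan is to verify the domain memberships one coordinate at a time by iterated application of Lemma~\ref{lm:zero}. The essential observation is that the zero of $\Psi_n$ on each diagonal $\theta_k = \theta_l$ exactly cancels the simple pole of $f_l(\theta_k - \theta_l)$, so the product $F \Psi_n$ with $F := \prod_{l \neq k} f_l(\theta_k - \theta_l)$ retains the required Hardy-space analyticity in each variable.

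For the $l$-th variable condition with $l \neq k$, I fix the other $n-1$ variables real (in particular $\theta_k$ is treated as a real parameter) and set $\K := L^2(\RR^{n-2})$ for the variables other than $\theta_l$. By Lemma~\ref{lm:h2k} the hypothesis places $\Psi_n$ in the relevant $\K$-valued Hardy space in the $\theta_l$-strip. In this fibering only the factor $f_l(\theta_k - \theta_l)$ depends on $\theta_l$, with a single simple pole at $\theta_l = \theta_k$; the other factors of $F$ are $\theta_l$-independent and bounded, the $\theta_{l'}$'s being real and a.e.\ distinct from $\theta_k$. The diagonal-zero hypothesis says that the $\K$-valued function $\theta_l \mapsto \Psi_n(\ldots,\theta_l,\ldots)$ vanishes at $\theta_l = \theta_k$, so a translated version of Lemma~\ref{lm:zero} (shift the pole from $0$ to $\theta_k$) gives $F \Psi_n$ in the same $\K$-valued Hardy space, with a bound depending only on the $H^\infty$-data of $f_l$ and hence uniform in $\theta_k$; squaring and integrating over $\theta_k$ yields the required membership.

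For the $k$-th variable condition I view $\Psi_n$ as a $\K$-valued function of $\theta_k$ with $\K := L^2(\RR^{n-1})$, in the Hardy space of the symmetric $\theta_k$-strip. Fixing the $\theta_l$'s real, I multiply in the factors $f_l(\theta_k - \theta_l)$ one at a time. At each step the current function has zeros at the remaining diagonals $\theta_k = \theta_{l'}$: this is preserved by the previously inserted factors $f_{l''}(\theta_k - \theta_{l''})$, which are analytic and (for a.e.\ $\theta_{l'} \neq \theta_{l''}$) nonzero at $\theta_k = \theta_{l'}$; and the next factor has a single simple pole exactly matched by the corresponding zero of $\Psi_n$. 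Lemma~\ref{lm:zero} applied to the symmetric strip preserves Hardy-space membership at each of the $n-1$ steps; integration over the $\theta_l$'s then assembles a joint $L^2$-bound.

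The main technical obstacle is the mismatch between the scalar, univariate formulation of Lemma~\ref{lm:zero} (multiplier with a single fixed pole) and the multivariate character of our problem, where the pole location is itself one of the other coordinates. The fibering described above circumvents this: for each fixed choice of the remaining real coordinates the lemma applies verbatim after a translation, and the bounds it produces are translation-uniform (the translation preserves the $H^2$-norms used in the proof of Lemma~\ref{lm:zero}), so they assemble by Fubini into the joint $L^2$-statement recorded in the proposition.
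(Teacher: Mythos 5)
There is a genuine gap, and it sits exactly where the paper has to work hardest. Your fibering fails for the $l$-th variable conditions ($l\neq k$): with the other coordinates fixed at real values, $\Psi_n$ is analytic in $\theta_l$ only in a \emph{one-sided} strip (it lies in $\dom(\Delta_1^{\epsilon/2\pi})$ in that slot, not in the two-sided domain), while the pole of $f_l(\theta_k-\theta_l)$, viewed as a function of $\theta_l$, sits at the real point $\theta_k$, i.e.\ on the boundary of that strip. Lemma~\ref{lm:zero} requires the pole to be an interior point of a two-sided strip at which the function is analytic and vanishes; neither is available in this fiber, and the hypothesis ``zero at $\theta_k-\theta_l=0$'' cannot even be formulated there, since the boundary value in $\theta_l$ is only defined almost everywhere and the diagonal is a null set. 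The two-sided analyticity that puts the singular point in the interior exists only in the $k$-th variable, which is precisely why the paper does not argue variable by variable: it forms the joint two-variable Hardy function in $(\zeta_k,\zeta_l)$ (the triangle $\C$ of Fig.~\ref{fig:analyticity-n}), changes variables to $\zeta_k\pm\zeta_l$, and applies Lemma~\ref{lm:zero} in the \emph{difference} variable on a tilted rectangle $\mathcal R_{\epsilon'}$ whose interior contains the diagonal segment.

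Even in the $k$-th variable, where the fiber is a two-sided strip, your assembling step does not go through. Lemma~\ref{lm:zero} gives membership, not an estimate of the form $\|f\Psi\|_{H^2}\le C\|\Psi\|_{H^2}$ with $C$ depending only on the $H^\infty$-data of $f$; no such bound holds, because the control near the pole uses the behaviour of $\Psi$ at its zero (essentially a derivative), not its $H^2$-norm. Hence the claimed ``translation-uniform'' bounds do not exist, and a.e.\ fiberwise scalar Hardy membership (which itself is not automatic from vector-valued Hardy membership) cannot be integrated by Fubini into the $L^2(\RR^{n-1})$-valued statement the proposition asserts. The paper supplies exactly this missing uniformity by the vector-valued three-lines theorem on $\C_{\epsilon'}$: the relevant corners and edges stay at a finite distance from the diagonal, where $f_l$ is bounded, yielding an $H^2$-bound uniform in $\epsilon'$, and then letting $\epsilon'$ vary recovers membership in the full domain. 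Without the joint two-variable continuation and this quantitative step, the proof is incomplete.
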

\begin{proof}
By looking at the variable $\zeta_k$ and $\zeta_l$, we obtain the analyticity in the large triangle $\C$ in Fig.~\ref{fig:analyticity-n}. By changing the variable to $(\zeta_k-\zeta_l,\zeta_k+\zeta_l)$, and considering the small rectangle $\mathcal R_{\epsilon'}$, we can apply Lemma \ref{lm:zero} to conclude that for $\epsilon' > 0$, $f_l(\theta_k - \theta_l)\Psi_n(\theta_1,\dots,\theta_n)$ has a continuation to an element in $H^2(\RR^2+i\mathcal R_{\epsilon'},L^2(\RR^{n-2}))$, as $\Psi_n$ has a zero at $\theta_k-\theta_l = 0$ by assumption.

\begin{figure}[ht]
 \centering
\begin{tikzpicture}[line cap=round,line join=round,>=triangle 45,x=1.0cm,y=1.0cm]
\clip(-4.11,-0.79) rectangle (4.47,4.34);
\fill[fill=black,fill opacity=0.25] (0,0.5) -- (0.5,0) -- (1.75,1.25) -- (1.25,1.75) -- cycle;
\fill[fill=black,fill opacity=0.1] (0,3) -- (-2.5,0.5) -- (2.5,0.5) -- cycle;
\draw [domain=-4.11:4.47] plot(\x,{(-0-0*\x)/-3});
\draw (0,-0.79) -- (0,4.34);
\draw (3.36,0.58) node[anchor=north west] {$\mathrm{Im}\,\zeta_k$};
\draw (0.05,4) node[anchor=north west] {$\mathrm{Im}\,\zeta_l$};
\draw (0,0) node[anchor=north west] {$0$};
\draw (-3,0)-- (0,3);
\draw (0,3)-- (3,0);
\draw [line width=1.6pt] (0,0)-- (1.5,1.5);
\draw (0.5,0)-- (1.75,1.25);
\draw (0,0.5)-- (1.25,1.75);
\draw (1.25,1.75)-- (1.75,1.25);
\draw (0,0.5)-- (0.5,0);
\draw (0.5,0)-- (1.75,1.25);
\draw (1.75,1.25)-- (1.25,1.75);
\draw (1.25,1.75)-- (0,0.5);
\draw (-2.5,0.5)-- (2.5,0.5);
\draw (2.8,0) node[anchor=north west] {$\epsilon$};
\draw (-3.5,0.05) node[anchor=north west] {$-\epsilon$};
\draw (0.2,3.2) node[anchor=north west] {$\epsilon$};
\draw (0.49,0.1) node[anchor=north west] {$\epsilon'$};
\draw (-0.9, 1.1) node[anchor=north west] {$-\epsilon'$};
\end{tikzpicture}
\caption{The domain of analyticity of $\Psi_n$ (the larger triangle $\C$), the pole of $S$ (the thick segment), an auxiliary rectangle $\mathcal R_{\epsilon'}$ (dark) and an intermediate domain $\C_{\epsilon'}$ (the shaded triangle).} \label{fig:analyticity-n}
\end{figure}
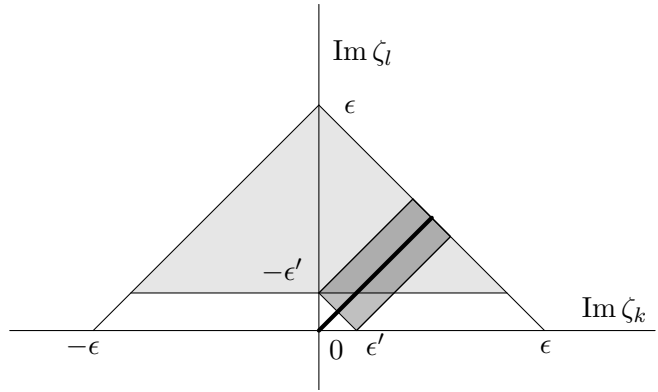

Furthermore, as $f_l$ has no other pole, it continues further to $\RR^2+i\mathcal \C_{\epsilon'}$. By $\K$-valued three line theorem (cf.~\cite[Theorem~12.9]{Rudin87}), we obtain that the $H^2$-norm as an element of $H^2(\RR^2+i\mathcal \C_{\epsilon'},L^2(\RR^{n-2}))$ is determined at the corners of the triangle $\C_{\epsilon'}$ which have f\/inite distance from the line $\theta_k- \theta_l = 0$. As $S$ is bounded at these corners and the edges, we obtain an $H^2$-bound which is uniform in $\epsilon'$. Finally, as $\epsilon' < \epsilon$ is arbitrary, actually we obtain that $f_l(\theta_k - \theta_l)\Psi_n(\theta_1, \dots,\theta_n) \in H^2(\RR^2 + i\C, L^2(\RR^{n-2}))$. This is equivalent to the vector $f_l(\theta_k - \theta_l)\Psi_n(\theta_1,\dots,\theta_n)$ being in the same domain, as the analyticity in the other variables are not af\/fected.

Repeating this $n-1$ times, we obtain the claim.
\end{proof}

\subsubsection*{Acknowledgements}
I owe some ideas of Lemmas~\ref{lm:sum-closed} and~\ref{lm:cleave} to Henning Bostelmann and the counterexample in Section~\ref{cef} to Ludvig D.~Faddeev. I am grateful to Marcel Bischof\/f, Henning Bostelmann, Detlev Buchholz, Daniela Cadamuro, Sebastiano Carpi, Wojciech Dybalski, Luca Giorgetti, Gandalf Lechner, Roberto Longo, Karl-Henning Rehren and Bert Schroer for their interesting discussions and encouraging comments. I appreciate the careful reading and useful suggestions by the referee.
I am supported by Grant-in-Aid for JSPS fellows 25-205.

\pdfbookmark[1]{References}{ref}
\LastPageEnding

\end{document}